\definecolor{DarkGreen}{rgb}{0.1,0.5,0.1}
\definecolor{DarkRed}{rgb}{0.5,0.1,0.1}
\definecolor{DarkBlue}{rgb}{0.1,0.1,0.5}
\def\draft{1} 
\def\submit{0} 
    \def\ShowAuthNotes{1}
    \def\ShowAuthNotes{0}
\newcommand{\forsubmit}[1]{#1}
\newcommand{\forreals}[1]{}
\newcommand{\forreals}[1]{#1}
\newcommand{\forsubmit}[1]{}
\newcommand{\authnote}[2]{{ \footnotesize \bf{\color{DarkRed}[#1's Note:
{\color{DarkBlue}#2}]}}}
\newcommand{\authnote}[2]{}
\newtheorem{theorem}{Theorem}[section]
\newtheorem{lemma}[theorem]{Lemma}
\newtheorem{corollary}[theorem]{Corollary}
\newtheorem{claim}[theorem]{Claim}
\newtheorem{conjecture}[theorem]{Conjecture}
\theoremstyle{definition}
\newtheorem{definition}[theorem]{Definition}
\newcommand{\chapterref}[1]{\hyperref[ch:#1]{Chapter~\ref{ch:#1}}}
\newcommand{\claimref}[1]{\hyperref[claim:#1]{Claim~\ref{claim:#1}}}
\newcommand{\corollarylabel}[1]{\label{cor:#1}}
\newcommand{\corollaryref}[1]{\hyperref[cor:#1]{Corollary~\ref{cor:#1}}}
\newcommand{\definitionlabel}[1]{\label{def:#1}}
\newcommand{\definitionref}[1]{\hyperref[def:#1]{Definition~\ref{def:#1}}}
\newcommand{\equationlabel}[1]{\label{eq:#1}}
\newcommand{\equationref}[1]{\hyperref[eq:#1]{Equation~\ref{eq:#1}}}
\newcommand{\factref}[1]{\hyperref[fact:#1]{Fact~\ref{fact:#1}}}
\newcommand{\figurelabel}[1]{\label{fig:#1}}
\newcommand{\figureref}[1]{\hyperref[fig:#1]{Figure~\ref{fig:#1}}}
\newcommand{\itemlabel}[1]{\label{item:#1}}
\newcommand{\itemref}[1]{\hyperref[item:#1]{Item~(\ref{item:#1})}}
\newcommand{\lemmalabel}[1]{\label{lem:#1}}
\newcommand{\lemmaref}[1]{\hyperref[lem:#1]{Lemma~\ref{lem:#1}}}
\newcommand{\propref}[1]{\hyperref[prop:#1]{Proposition~\ref{prop:#1}}}
\newcommand{\propositionref}[1]{\hyperref[prop:#1]{Proposition~\ref{prop:#1}}}
\newcommand{\remarkref}[1]{\hyperref[rem:#1]{Remark~\ref{rem:#1}}}
\newcommand{\sectionlabel}[1]{\label{sec:#1}}
\newcommand{\sectionref}[1]{\hyperref[sec:#1]{Section~\ref{sec:#1}}}
\newcommand{\theoremlabel}[1]{\label{thm:#1}}
\newcommand{\theoremref}[1]{\hyperref[thm:#1]{Theorem~\ref{thm:#1}}}
\newcommand{\Esymb}{\mathbb{E}}
\newcommand{\Psymb}{\mathbb{P}}
\newcommand{\Vsymb}{\mathbb{V}}
\DeclareMathOperator*{\E}{\Esymb}
\DeclareMathOperator*{\Var}{\Vsymb}
\DeclareMathOperator*{\ProbOp}{\Psymb r}
\renewcommand{\Pr}{\ProbOp}
\newcommand{\mper}{\,.}
\newcommand{\mcom}{\,,}
\renewcommand{\hat}{\widehat}
\newcommand{\cD}{{\cal D}}
\newcommand{\cP}{{\cal P}}
\newcommand{\cX}{{\cal X}}
\newcommand{\defeq}{\stackrel{\small \mathrm{def}}{=}}
\renewcommand{\leq}{\leqslant}
\renewcommand{\le}{\leqslant}
\renewcommand{\geq}{\geqslant}
\renewcommand{\ge}{\geqslant}
\newcommand{\Set}[1]{\left\{#1\right\}}
\newcommand{\norm}[1]{\lVert#1\rVert}
\newcommand{\Norm}[1]{\left\lVert#1\right\rVert}
\newcommand{\signs}{\{-1,1\}}
\newcommand{\R}{\mathbb{R}}
\newcommand{\sign}{\mathit{sign}}
\newcommand{\supp}{\mathrm{supp}}
\renewcommand{\epsilon}{\varepsilon}
\newcommand{\eps}{\epsilon}
\newcommand{\remove}[1]{}
\DeclareMathOperator*{\range}{range}
\newcommand{\ignore}[1]{}
\newcommand{\tr}{\mathrm{tr}}
\newcommand{\trans}{\top}
\newcommand{\PPM}{{\sc PPM}\xspace}
\newcommand{\NPM}{{\sc NPM}\xspace}
\newcommand{\SPM}{{\sc SPM}\xspace}
\newenvironment{itm}
{\begin{itemize}[noitemsep,topsep=0pt,parsep=0pt,partopsep=0pt]}
{\end{itemize}}
\newenvironment{enum}
{
\begin{enumerate}[noitemsep,topsep=0pt,parsep=0pt,partopsep=0pt]}
{\end{enumerate}}
\title{The Noisy Power Method:\\A Meta Algorithm with Applications}
\author{Moritz Hardt\thanks{IBM Research Almaden. Email: {\tt
mhardt@us.ibm.com}}
 \and Eric Price\thanks{IBM Research Almaden. Email: {\tt
ecprice@mit.edu}}
}
\begin{document}
\maketitle
\begin{abstract}
  We provide a new robust convergence analysis of the well-known power
  method for computing the dominant singular vectors of a matrix that
  we call the \emph{noisy power method}. Our result characterizes the
  convergence behavior of the algorithm when a significant amount
  noise is introduced after each matrix-vector multiplication. The
  noisy power method can be seen as a meta-algorithm that has recently
  found a number of important applications in a broad range of machine
  learning problems including alternating minimization for matrix
  completion, streaming principal component analysis (PCA), and
  privacy-preserving spectral analysis.
  Our general analysis subsumes several existing ad-hoc convergence
  bounds and resolves a number of open problems in multiple
  applications:

  {\bf Streaming PCA.} A recent work of Mitliagkas et al.~(NIPS 2013)
  gives a space-efficient algorithm for PCA
  in a streaming model where samples are drawn from a gaussian spiked
  covariance model. We give a simpler and more general analysis that
  applies to arbitrary distributions confirming experimental evidence
  of Mitliagkas et al. Moreover, even in the spiked covariance model
  our result gives quantitative improvements in a natural parameter
  regime. It is also notably simpler and follows easily from our
  general convergence analysis of the noisy power method together with
  a matrix Chernoff bound.

  {\bf Private PCA.}  We provide the first nearly-linear time
  algorithm for the problem of differentially private principal
  component analysis that achieves nearly tight worst-case error
  bounds. Complementing our worst-case bounds, we show that the error
  dependence of our algorithm on the matrix dimension can be replaced
  by an essentially tight dependence on the \emph{coherence} of the
  matrix. This result resolves the main problem left open by Hardt and
  Roth (STOC 2013). The coherence is always bounded by the matrix
  dimension but often substantially smaller thus leading to
  strong average-case improvements over the optimal worst-case bound.
\end{abstract}

%
%
%

\section{Introduction}

Computing the dominant singular vectors of a matrix is one of the most
important algorithmic tasks underlying many applications including
low-rank approximation, PCA, spectral clustering, dimensionality
reduction, matrix completion and topic modeling. The classical problem
is well-understood, but many recent applications in machine learning
face the fundamental problem of approximately finding singular vectors
in the presence of noise. Noise can enter the computation through a
variety of sources including sampling error, missing entries,
adversarial corruptions and privacy constraints. It is desirable to
have one robust method for handling a variety of cases without the
need for ad-hoc analyses. In this paper we consider the \emph{noisy
  power method}, a fast general purpose method for computing the
dominant singular vectors of a matrix when the target matrix can only
be accessed through inaccurate matrix-vector products.

\figureref{NPM} describes the method when the target matrix~$A$ is a
symmetric $d\times d$ matrix---a generalization to asymmetric matrices
is straightforward. The algorithm starts from an initial matrix
$X_0\in\R^{d\times p}$ and iteratively attempts to perform the update
rule $X_\ell \to AX_\ell.$ However, each such matrix product is
followed by a possibly adversarially and adaptively chosen
perturbation~$G_\ell$ leading to the update rule $X_\ell \to AX_\ell +
G_\ell.$ It will be convenient though not necessary to maintain that
$X_\ell$ has orthonormal columns which can be achieved through a
QR-factorization after each update.

\begin{figure}[h]
\centering
\begin{boxedminipage}{0.8\textwidth}
\noindent \textbf{Input:} Symmetric matrix $A\in\mathbb{R}^{d\times d},$ number of
iterations $L,$ dimension $p$
\begin{enum}
\item 
Choose $X_0\in\R^{d\times p}.$
\item For $\ell = 1$ to $L$:
\begin{enum}
\item\itemlabel{mult} $Y_\ell \leftarrow AX_{\ell-1}+G_\ell$
where $G_\ell\in\R^{d\times p}$ is some perturbation
\item Let $Y_\ell = X_\ell R_\ell$ be a QR-factorization of $Y_\ell$
\end{enum}
\end{enum}
\noindent \textbf{Output:} Matrix $X_L$ 
\end{boxedminipage}
\vspace{-2mm}
\caption{Noisy Power Method (\NPM)}
\figurelabel{NPM}
\end{figure}
The noisy power method is a meta algorithm that when instantiated with
different settings of $G_\ell$ and $X_0$ adapts to a variety of
applications.  In fact, there have been a number of recent surprising
applications of the noisy power method:
\begin{enumerate}
\item Jain et al.~\cite{JainNS13,Hardt14} observe that the update rule of the
  well-known alternating least squares heuristic for matrix completion
  can be considered as an instance of \NPM. This lead to the first
  provable convergence bounds for this important heuristic.
\item Mitgliakas et al.~\cite{MCJ13} observe that \NPM
  applies to a streaming model of principal component analysis (PCA)
  where it leads to a space-efficient and practical algorithm for PCA
  in settings where the covariance matrix is too large to process
  directly.
\item Hardt and Roth~\cite{HardtR13} consider the power method in the
  context of privacy-preserving PCA where noise is added to achieve
  differential privacy.
\end{enumerate}
In each setting there has so far only been an ad-hoc analysis of the
noisy power method. In the first setting, only local convergence is
argued, that is, $X_0$ has to be cleverly chosen. In the second
setting, the analysis only holds for the spiked covariance model of
PCA. In the third application, only the case $p=1$ was considered.

In this work we give a completely general analysis of the noisy power
method that overcomes limitations of previous analyses. Our result
characterizes the global convergence properties of the algorithm in
terms of the noise $G_\ell$ and the initial subspace $X_0$.  We then
consider the important case where $X_0$ is a randomly chosen
orthonormal basis.  This case is rather delicate since the initial
correlation between a random matrix $X_0$ and the target subspace is
vanishing in the dimension~$d$ for small $p.$ Another important
feature of the analysis is that it shows how $X_\ell$ converges
towards the first $k\le p$ singular vectors. Choosing~$p$ to be larger
than the target dimension leads to a quantitatively stronger
result. \theoremref{convergence} formally states our convergence
bound. Here we highlight one useful corollary to illustrate our more
general result.

\begin{corollary}
  \corollarylabel{random} \theoremlabel{randomconvergence} Let $k \leq
  p$.  Let $U\in\R^{d\times k}$ represent the top $k$ singular vectors
  of~$A$ and let $\sigma_1\ge\cdots\ge\sigma_n\ge0$ denote its
  singular values.  Suppose $X_0$ is an orthonormal basis of a random
  $p$-dimensional subspace.  Further suppose that at every step of
  \NPM we have
\[\textstyle
5\norm{G_\ell} \leq \eps (\sigma_k - \sigma_{k+1})
\quad\text{and}\quad 5\norm{U^\trans G_\ell} \leq (\sigma_k -
\sigma_{k+1}) \frac{\sqrt{p} - \sqrt{k-1}}{\tau \sqrt{d}}
\]
for some fixed parameter $\tau$ and $\eps < 1/2$.  Then with all but
$\tau^{-\Omega(p+1-k)} + e^{-\Omega(d)}$ probability, there exists an
$L = O( \frac{\sigma_k}{\sigma_k - \sigma_{k+1}}\log(d\tau/\epsilon))$ so that
after $L$ steps we have that $\Norm{(I-X_LX_L^\trans)U}\leq \epsilon.$
\end{corollary}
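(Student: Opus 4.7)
The plan is to derive this corollary from the general convergence theorem \theoremref{convergence} by establishing a high-probability lower bound on the initial alignment between the random subspace $X_0$ and the target subspace $U$. The general theorem (stated earlier) should bound $\|(I-X_LX_L^\trans)U\|$ in terms of (i) a bound on $\|G_\ell\|$ relative to the gap $\sigma_k-\sigma_{k+1}$, (ii) a bound on the projected noise $\|U^\trans G_\ell\|$ relative to the gap times the initial alignment $\sigma_k(U^\trans X_0)$, and (iii) an iteration count logarithmic in the inverse of $\sigma_k(U^\trans X_0)$. Thus, to apply it, I need only show that the initial alignment satisfies
\[
\sigma_k(U^\trans X_0) \geq \Omega\!\Paren{\frac{\sqrt{p}-\sqrt{k-1}}{\tau\sqrt{d}}}
\]
on an event of probability at least $1-\tau^{-\Omega(p+1-k)}-e^{-\Omega(d)}$.

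For the tail bound, I would use the standard coupling between a uniformly random orthonormal $d\times p$ frame and a Gaussian matrix: write $X_0 = G(G^\trans G)^{-1/2}$ where $G\in\R^{d\times p}$ has i.i.d.\ standard Gaussian entries. Then $U^\trans X_0 = (U^\trans G)(G^\trans G)^{-1/2}$, and since $U$ has orthonormal columns, $U^\trans G\in\R^{k\times p}$ is itself a standard Gaussian matrix. A classical smallest-singular-value estimate for rectangular Gaussian matrices (in the Chen--Dongarra / Edelman line) yields $\sigma_k(U^\trans G)\geq (\sqrt{p}-\sqrt{k-1})/\tau$ with failure probability $\tau^{-\Omega(p+1-k)}$. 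Concurrently, $\|G\|\leq O(\sqrt{d})$ with failure probability $e^{-\Omega(d)}$, so $\|(G^\trans G)^{-1/2}\|\leq O(1/\sqrt{d})$, and we are done with the lower bound on $\sigma_k(U^\trans X_0)$ by the product inequality $\sigma_k(AB)\geq\sigma_k(A)\sigma_{\min}(B)$.

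With this initial alignment in hand, the noise hypotheses of the corollary---$5\|G_\ell\|\leq\epsilon(\sigma_k-\sigma_{k+1})$ and $5\|U^\trans G_\ell\|\leq (\sigma_k-\sigma_{k+1})(\sqrt{p}-\sqrt{k-1})/(\tau\sqrt{d})$---are precisely matched to the hypotheses of \theoremref{convergence} on the good event, with the constant $5$ absorbing the $1/\sigma_k(U^\trans X_0)$ scaling and any slack needed by the general theorem. The iteration count $L = O((\sigma_k/(\sigma_k-\sigma_{k+1}))\log(d\tau/\epsilon))$ then comes directly from substituting $\sigma_k(U^\trans X_0)^{-1} = O(\tau\sqrt{d}/(\sqrt{p}-\sqrt{k-1}))$ into the general theorem's logarithmic iteration bound (noting $\sqrt{p}-\sqrt{k-1}\geq 1$ can be absorbed into the logarithm).

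The main obstacle will be establishing the smallest-singular-value tail for the Gaussian matrix $U^\trans G$ with the particular polynomial form $\tau^{-\Omega(p+1-k)}$: this exponent reflects the $p-k+1$ ``extra'' directions the random subspace enjoys beyond the target rank, and it arises from the determinantal density of rectangular Gaussian matrices (or equivalently, a standard net-plus-anti-concentration argument where the anti-concentration on each of the $p-k+1$ effective dimensions contributes a factor of $\tau^{-1}$). Everything else---extending the tail from $U^\trans G$ to $U^\trans X_0$, and checking that the corollary's per-step conditions imply those of \theoremref{convergence}---is routine substitution.
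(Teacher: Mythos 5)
Your proposal is correct and follows essentially the same route as the paper: the paper proves exactly your initial-alignment bound as a standalone random-initialization lemma (bounding $\tan\theta_k(U,X_0)\leq \tau\sqrt{d}/(\sqrt{p}-\sqrt{k-1})$ via the Rudelson--Vershynin smallest-singular-value bound for the $k\times p$ Gaussian $U^\trans G$ together with $\norm{G}\lesssim\sqrt{d}$), then feeds $\cos\theta_k(U,X_0)=\sigma_k(U^\trans X_0)\gtrsim(\sqrt{p}-\sqrt{k-1})/(\tau\sqrt{d})$ into \theoremref{convergence} after rescaling $\tau$, and finishes with $\norm{(I-X_LX_L^\trans)U}=\sin\theta_k(U,X_L)\leq\tan\theta_k(U,X_L)$. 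One small slip to fix: from $\norm{G}\lesssim\sqrt{d}$ you get $\sigma_{\min}\bigl((G^\trans G)^{-1/2}\bigr)=1/\norm{G}\gtrsim 1/\sqrt{d}$, which is the quantity your product inequality actually needs, rather than $\norm{(G^\trans G)^{-1/2}}\leq O(1/\sqrt{d})$ (the latter equals $1/\sigma_{\min}(G)$ and does not follow from the stated norm bound).
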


The corollary shows that the algorithm converges in the strong sense
that the entire spectral norm of $U$ up to an $\epsilon$ error is
contained in the space spanned by~$X_L.$ To achieve this the result
places two assumptions on the magnitude of the noise. The total
spectral norm of $G_\ell$ must be bounded by $\epsilon$ times the
separation between $\sigma_k$ and $\sigma_{k+1}.$ This dependence on
the singular value separation arises even in the classical
perturbation theory of Davis-Kahan~\cite{DavisK70}. The second condition
is specific to the power method and requires that the noise term is
proportionally smaller when projected onto the space spanned by the
top $k$ singular vectors. This condition ensures that the correlation
between $X_\ell$ and $U$ that is initially very small is not destroyed
by the noise addition step.  If the noise term has some spherical
properties (e.g. a Gaussian matrix), we expect the projection onto $U$
to be smaller by a factor of $\sqrt{k/d},$ since the space $U$ is
$k$-dimensional. In the case where $p=k+\Omega(k)$ this is precisely
what the condition requires. When $p=k$ the requirement is stronger by
a factor of~$k.$ This phenomenon stems from the fact that the smallest
singular value of a random $p\times k$ gaussian matrix behaves
differently in the square and the rectangular case.

We demonstrate the usefulness of our convergence bound with several
novel results in some of the aforementioned applications.

\subsection{Application to memory-efficient streaming PCA}

In the streaming PCA setting we receive a stream of samples
$z_1,z_2,\dots z_n\in\R^d$ drawn i.i.d.~from an unknown
distribution~$\cD$ over $\R^d.$ Our goal is to compute the dominant
$k$ eigenvectors of the covariance matrix $A=\E_{z\sim\cD} zz^\trans.$
The challenge is to do this in space linear in the output size, namely
$O(kd).$
Recently, Mitgliakas et al.~\cite{MCJ13} gave an algorithm for this
problem based on the noisy power method. We analyze the same
algorithm, which we restate here and call \SPM:
\begin{figure}[h]
\centering
\begin{boxedminipage}{0.8\textwidth}
  \noindent \textbf{Input:} Stream of samples
  $z_1,z_2,\dots,z_n\in\R^d,$ iterations $L,$ dimension~$p$
\begin{enum}
\item Let $X_0\in\R^{d\times p}$ be a random orthonormal basis.  Let $T=
  \lfloor m/L\rfloor$
\item For $\ell = 1$ to $L$:
\begin{enum}
\item Compute $Y_\ell = A_\ell X_{\ell-1}$ where
  $A_\ell=\sum_{i=(\ell-1)T+1}^{\ell T}z_iz_i^\trans$
\item Let $Y_\ell = X_\ell R_\ell$ be a QR-factorization of $Y_\ell$
\end{enum}
\end{enum}
\noindent \textbf{Output:} Matrix $X_L$ 
\end{boxedminipage}
\caption{Streaming Power Method (\SPM)}
\figurelabel{SPM}
\end{figure}

The algorithm can be executed in space $O(pd)$ since the update step
can compute the $d\times p$ matrix $A_\ell X_{\ell-1}$ incrementally
without explicitly computing $A_\ell.$ The algorithm maps to our
setting by defining $G_\ell = (A_\ell-A)X_{\ell-1}.$ With this
notation $Y_\ell = AX_{\ell-1} + G_\ell.$ We can apply
\corollaryref{random} directly once we have suitable bounds on
$\|G_\ell\|$ and $\|U^\trans G_\ell\|.$

The result of~\cite{MCJ13} is specific to the spiked covariance model.
The spiked covariance model is defined by an orthonormal basis
$U\in\R^{d\times k}$ and a diagonal matrix $\Lambda\in\R^{k\times k}$
with diagonal entries $\lambda_1\ge\lambda_2\ge\cdots\ge\lambda_k>0.$
The distribution $\cD(U,\Lambda)$ is defined as the normal
distribution $\mathrm{N}(0,(U\Lambda^2 U^\trans +
\sigma^2\mathrm{Id}_{d\times d})).$ Without loss of generality we can
scale the examples such that $\lambda_1=1.$ 
One corollary of our result shows that the algorithm outputs $X_L$
such that $\Norm{(I-X_LX_L^\trans)U}\le\epsilon$ with probability
$9/10$ provided $p=k+\Omega(k)$ and the number of samples satisfies
\[
n = \Theta\left(\frac{\sigma^6+1}{\epsilon^2 \lambda_k^6}kd\right).
\]
Previously, the same bound\footnote{That the bound stated
  in~\cite{MCJ13} has a $\sigma^6$ dependence is not completely
  obvious. There is a $O(\sigma^4)$ in the numerator and
  $\log((\sigma^2+0.75\lambda_k^2)/(\sigma^2+0.5\lambda_k^2))$ in the
  denominator which simplifies to $O(1/\sigma^2)$ for constant
  $\lambda_k$ and $\sigma^2\geq 1.$} was known with a quadratic
dependence on $k$ in the case where $p=k.$ Here we can strengthen the
bound by increasing $p$ slightly.

While we can get some improvements even in the spiked covariance
model, our result is substantially more general and applies to any
distribution. The sample complexity bound we get varies according to a
technical parameter of the distribution.  Roughly speaking, we get a
near linear sample complexity if the distribution is either ``round''
(as in the spiked covariance setting) or is very well approximated by
a $k$ dimensional subspace.  To illustrate the latter condition, we
have the following result without making any assumptions other than
scaling the distribution:
\begin{corollary}
  Let $\cD$ be any distribution scaled so that
  $\Pr\Set{\|z\|>t}\le\exp(-t)$ for every $t\ge 1.$ Let $U$ represent
  the top $k$ eigenvectors of the covariance matrix $\E zz^\trans$ and
  $\sigma_1\ge\cdots\ge\sigma_d\ge0$ its eigenvalues. Then, \SPM
  invoked with $p=k+\Omega(k)$ outputs a matrix $X_L$ such with
  probability $9/10$ we have $\Norm{(I-X_LX_L^\trans)U}\le\epsilon$
  provided \SPM receives $n$ samples where $n$ satisfies $n = \tilde
  O\left(\frac{\sigma_k}{\epsilon^2k(\sigma_k-\sigma_{k+1})^3}\cdot
    d\right)\mper$
\end{corollary}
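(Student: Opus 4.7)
The plan is to invoke \corollaryref{random} with the noise matrices $G_\ell = (A_\ell - TA)X_{\ell-1}$, viewed against the effective target $TA$ (whose singular gap becomes $T(\sigma_k - \sigma_{k+1})$ and whose top-$k$ subspace coincides with that of $A$). First I would condition on the event that $\|z_i\| \le O(\log n)$ for every one of the $n$ samples; this holds with probability $1 - o(1)$ by the subexponential tail assumption and a union bound, and makes each centered summand $z_iz_i^\trans - A$ bounded in operator norm by $\tilde O(1)$. The same scaling implies $\tr(A) = \E\|z\|^2 = O(1)$ and hence $\sigma_1 \le O(1)$, which I will use throughout.

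The heart of the argument is two applications of matrix Bernstein to $A_\ell - TA = \sum_{i=(\ell-1)T+1}^{\ell T}(z_iz_i^\trans - A)$. For the first hypothesis of \corollaryref{random} I bound $\|G_\ell\| \le \|A_\ell - TA\|$: the variance term $\|\E[(zz^\trans - A)^2]\| \le \|\E[\|z\|^2 zz^\trans]\|$ is $\tilde O(\sigma_1)$ after the truncation, so matrix Bernstein yields $\|A_\ell - TA\| \le \tilde O(\sqrt{T})$, meeting the first noise condition whenever $T \ge \tilde\Omega\bigl(1/(\epsilon^2(\sigma_k - \sigma_{k+1})^2)\bigr)$. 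For the second hypothesis I apply the rectangular matrix Bernstein inequality to the $k \times d$ sum $U^\trans(A_\ell - TA)$; the two directional variances $\|\E[U^\trans(zz^\trans - A)^2 U]\|$ and $\|\E[(zz^\trans-A)UU^\trans(zz^\trans-A)]\|$ are each still $\tilde O(1)$ after truncation (the latter using $\|U^\trans z\|^2 \le \|z\|^2 \le \tilde O(1)$), so $\|U^\trans G_\ell\| \le \tilde O(\sqrt{T})$. Because $p = k + \Omega(k)$, the right-hand side of the second noise condition is $\Omega\bigl((\sigma_k-\sigma_{k+1})\sqrt{k/d}\bigr)$, which forces $T \ge \tilde\Omega\bigl(d/(k(\sigma_k-\sigma_{k+1})^2)\bigr)$.

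I would then choose $\tau$ to be a sufficiently large constant so that the failure probability $\tau^{-\Omega(p-k+1)} + e^{-\Omega(d)}$ from \corollaryref{random} is below $1/20$, and take a union bound over the $L = O\bigl(\sigma_k \log(d/\epsilon)/(\sigma_k - \sigma_{k+1})\bigr)$ iterations. The maximum of the two lower bounds on $T$ is $\tilde O\bigl(\max(1/\epsilon^2,\, d/k)/(\sigma_k-\sigma_{k+1})^2\bigr) \le \tilde O\bigl(d/(k\epsilon^2(\sigma_k-\sigma_{k+1})^2)\bigr)$ (using $d \ge k$ and $\epsilon \le 1$), so $n = LT$ meets the claimed sample complexity. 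The main obstacle is the second variance bound: I have to verify that rectangular Bernstein applied to $U^\trans(A_\ell - TA)$ depends only logarithmically on $d$, which is exactly what the almost-sure bound $\|z\|^2 \le \tilde O(1)$ delivers, and it is this logarithmic dimension dependence, combined with the $\sqrt{k/d}$ slack built into the second hypothesis of \corollaryref{random}, that produces the favorable $d/k$ scaling in the final bound.
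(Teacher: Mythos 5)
Your proposal is correct and is essentially the paper's own argument: the paper proves this corollary by the same truncation-plus-matrix-concentration step (its \lemmaref{streaming-error}, using the matrix Chernoff bound of \lemmaref{matrixb2} after zeroing out samples with $\|z\|\gtrsim\log n$) to obtain per-block bounds $\|G_\ell\|,\ \|U^\trans G_\ell\|\lesssim \tilde O(\sqrt T)$, then feeds these into \theoremref{convergence}/\corollaryref{random} and multiplies by $L$; your version simply specializes to $B=d/p$ directly with matrix Bernstein and directional variances instead of routing through the $(B,p)$-round abstraction of \theoremref{streamingpca}, which for this case yields the same bounds. The only point you gloss over---conditioning on the truncation event biases the covariance, so the summands are no longer exactly centered---is handled explicitly in the paper's proof of \lemmaref{streaming-error} and contributes only an $O(1/n^2)$ spectral-norm shift, which is of lower order and does not affect your conclusion.
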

The corollary establishes a sample complexity that's linear in~$d$
provided that the spectrum decays quickly, as is common in
applications. For example, if the spectrum follows a power law so that
$\sigma_j\approx j^{-c}$ for a constant $c>1/2,$ the bound becomes
$n=\tilde O(k^{2c+2}d/\epsilon^2).$

\subsection{Application to privacy-preserving spectral analysis}
Many applications of singular vector computation are plagued by the
fact that the underlying matrix contains sensitive information about
individuals. A successful paradigm in privacy-preserving data analysis rests
on the notion of \emph{differential privacy} which requires all access to the
data set to be randomized in such a way that the presence or absence of a
single data item is hidden. The notion of data item
varies and could either refer to a single entry, a single row, or a
rank-$1$ matrix of bounded norm. More formally, Differential Privacy
requires that the output distribution of the algorithm changes only
slightly with the addition or deletion of a single data item.  This
requirement often necessitates the introduction of significant levels
of noise that make the computation of various objectives
challenging. Differentially private singular vector computation has
been studied actively in recent years~\cite{BlumDMN05,McSherryM09,BlockiBDS12,ChaudhuriSS12,KapralovT13,HardtR12,HardtR13,DworkTTZ14}. 
There are two main objectives. The first is computational
efficiency. The second objective is to minimize the amount
of error that the algorithm introduces.

In this work, we give a fast algorithm for differentially private
singular vector computation based on the noisy power method that leads
to nearly optimal bounds in a number of settings that were considered in
previous work. The algorithm is described
in \figureref{PPM}. It's a simple instance of \NPM in which each noise
matrix $G_\ell$ is a gaussian random matrix scaled so that the
algorithm achieves $(\epsilon,\delta)$-differential privacy (as
formally defined in \definitionref{dp}). It is easy to see that the algorithm
can be implemented in time nearly linear in the number of nonzero entries of
the input matrix (input sparsity). This will later lead to strong improvements
in running time compared with several previous works.
\begin{figure}[h]
\centering
\begin{boxedminipage}{0.8\textwidth}
  \noindent \textbf{Input:} Symmetric $A\in\R^{d\times d},$ $L,$ $p,$ privacy
  parameters $\epsilon,\delta>0$
\begin{enum}
\item Let $X_0$ be a random orthonormal basis and put
$\sigma=\epsilon^{-1}\sqrt{4pL\log(1/\delta)}$
\item For $\ell = 1$ to $L$:
\begin{enum}
\item\itemlabel{mult} $Y_\ell \leftarrow AX_{\ell-1}+G_\ell$
where $G_\ell \sim \mathrm{N}(0,\|X_{\ell-1}\|_\infty^2\sigma^2)^{d\times p}.$
  \item Compute the QR-factorization $Y_\ell = X_\ell R_\ell$
\end{enum}
\end{enum}
\noindent \textbf{Output:} Matrix $X_L$ 
\end{boxedminipage}
\caption{Private Power Method (\PPM). Here $\|X\|_\infty=\max_{ij}|X_{ij}|.$}
\figurelabel{PPM}
\end{figure}

We first state a general purpose analysis of \PPM that follows from
\corollaryref{random}.
\begin{theorem}
  \theoremlabel{privacy-main} Let $k \leq p$.  Let $U\in\R^{d\times
    k}$ represent the top $k$ singular vectors of~$A$ and let
  $\sigma_1\ge\cdots\ge\sigma_d\ge0$ denote its singular values. Then,
  \PPM satisfies $(\epsilon,\delta)$-differential privacy and after $L
  = O( \frac{\sigma_k}{\sigma_k - \sigma_{k+1}}\log(d))$ iterations we have with
  probability $9/10$ that
\[
\Norm{(I-X_LX_L^\trans)U}\leq
O\left( 
\frac{\sigma\max\|X_\ell\|_\infty\sqrt{d\log L}}{\sigma_k-\sigma_{k+1}}
\cdot\frac{\sqrt{p}}{\sqrt{p}-\sqrt{k-1}}
\right)
\mper
\]
\end{theorem}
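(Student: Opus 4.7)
The plan is to prove privacy and utility separately, reducing both to standard tools applied in conjunction with \corollaryref{random}. For privacy, observe that at step $\ell$ \PPM releases $Y_\ell = AX_{\ell-1}+G_\ell$, which is the Gaussian mechanism applied to the function $A \mapsto AX_{\ell-1}$ with $X_{\ell-1}$ already revealed to the adversary. In the entry-level privacy model, perturbing one entry of $A$ changes only one row of $AX_{\ell-1}$, so the Frobenius sensitivity is at most $\sqrt{p}\,\|X_{\ell-1}\|_\infty$. Since the noise standard deviation $\|X_{\ell-1}\|_\infty\sigma$ per entry cancels this $\|X_{\ell-1}\|_\infty$ factor, each iteration is $(\epsilon_0,\delta_0)$-DP with $\epsilon_0 = O(\sqrt{p\log(1/\delta_0)}/\sigma)$ independent of the data. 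Composing over the $L$ adaptively chosen iterations via the advanced composition theorem (or, more cleanly, via concentrated DP) yields $(\epsilon,\delta)$-DP once $\sigma = \Omega(\epsilon^{-1}\sqrt{pL\log(1/\delta)})$, which matches the choice made in \PPM.

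For utility, I would bound $\|G_\ell\|$ and $\|U^\trans G_\ell\|$ by Gaussian concentration and then invoke \corollaryref{random}. Conditional on $X_{\ell-1}$, the matrix $G_\ell$ has i.i.d.\ $\mathrm{N}(0,\|X_{\ell-1}\|_\infty^2\sigma^2)$ entries, so standard spectral-norm tail bounds for a $d\times p$ Gaussian matrix plus a union bound over the $L$ iterations give $\|G_\ell\| = O\paren{\|X_{\ell-1}\|_\infty\sigma\sqrt{d\log L}}$ with probability $1-o(1)$. Because $U$ has orthonormal columns, $U^\trans G_\ell$ is conditionally a $k \times p$ Gaussian with the same entry variance, so the same argument yields $\|U^\trans G_\ell\| = O\paren{\|X_{\ell-1}\|_\infty\sigma\sqrt{p\log L}}$, using $p \ge k$.

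To close the argument, pick $\tau$ as a constant (depending only on $p-k$) large enough that $\tau^{-\Omega(p+1-k)} \le 1/20$, and set
\[
\epsilon^\ast \;=\; \Theta\paren{\frac{\sigma\max_\ell\|X_\ell\|_\infty\sqrt{d\log L}}{\sigma_k-\sigma_{k+1}}\cdot\frac{\sqrt{p}}{\sqrt{p}-\sqrt{k-1}}}\mper
\]
The first condition of \corollaryref{random} is satisfied by the bound on $\|G_\ell\|$ combined with $\sqrt{p}/(\sqrt{p}-\sqrt{k-1}) \ge 1$, and the second by the bound on $\|U^\trans G_\ell\|$ together with the choice of $\tau$. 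When $\epsilon^\ast < 1/2$ the corollary then delivers the claimed bound after $L=O((\sigma_k/(\sigma_k-\sigma_{k+1}))\log d)$ iterations with probability at least $9/10$ (the bound is trivial otherwise).

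The main subtlety is correctly matching the two noise conditions of \corollaryref{random}: the first alone would only yield a factor $\|G_\ell\|/(\sigma_k-\sigma_{k+1})$, while the extra factor $\sqrt{p}/(\sqrt{p}-\sqrt{k-1})$ in the theorem emerges because for small $p-k$ it is the second condition that is binding, and the largest admissible $\tau$ consistent with that condition determines the final scale of $\epsilon^\ast$. The privacy analysis is self-contained once one observes the cancellation between sensitivity and noise scale that leaves a data-independent per-step privacy loss, and the remaining ingredient---Gaussian matrix concentration---is routine.
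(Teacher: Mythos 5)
Your overall route---bound $\|G_\ell\|$ and $\|U^\trans G_\ell\|$ by Gaussian concentration, feed the bounds into \corollaryref{random}, and argue privacy by per-step sensitivity plus composition---is the same as the paper's (the paper simply cites Hardt and Roth for the privacy claim, and your sensitivity/composition sketch is a reasonable stand-in). The gap is in how you close the utility argument. You assert that the second hypothesis of \corollaryref{random}, namely $5\|U^\trans G_\ell\| \le (\sigma_k-\sigma_{k+1})\frac{\sqrt{p}-\sqrt{k-1}}{\tau\sqrt{d}}$, ``is satisfied by the bound on $\|U^\trans G_\ell\|$ together with the choice of $\tau$.'' It is not: $\tau$ must be at least a fixed constant for the failure probability $\tau^{-\Omega(p+1-k)}$ to be below $1/20$, and once $\tau$ is pinned to a constant this hypothesis is a genuine constraint comparing the noise level $\sigma m\sqrt{p\log L}$ (with $m=\max_\ell\|X_\ell\|_\infty$) to the eigengap; it simply fails when $\sigma_k-\sigma_{k+1}$ is small. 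Relatedly, your concluding explanation---that ``the largest admissible $\tau$ consistent with that condition determines the final scale of $\epsilon^\ast$''---misreads the corollary: $\tau$ enters only the failure probability and the iteration count $\log(d\tau/\eps)$, never the error bound, so no choice of $\tau$ can produce the trailing factor $\sqrt{p}/(\sqrt{p}-\sqrt{k-1})$.

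The missing idea, which is how the paper closes the proof, is an explicit dichotomy. Either $\max_\ell\|U^\trans G_\ell\|$ satisfies the second hypothesis with a constant $\tau$, in which case \corollaryref{random}, applied with $\eps'\approx \sigma m\sqrt{d\log L}/(\sigma_k-\sigma_{k+1})$, already yields the stated bound (even without the trailing factor); or it violates it, in which case the high-probability bound $\|U^\trans G_\ell\|\lesssim \sigma m\sqrt{k\log L}$ forces $\eps'\cdot\frac{\sqrt{p}}{\sqrt{p}-\sqrt{k-1}}\gtrsim 1$, so the stated bound exceeds a constant and holds trivially because $\|(I-X_LX_L^\trans)U\|\le 1$. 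In other words, the factor $\sqrt{p}/(\sqrt{p}-\sqrt{k-1})$ is in the theorem precisely so that the claim becomes vacuous in the regime where the $U^\trans G_\ell$ condition cannot be met. Your parenthetical ``(the bound is trivial otherwise)'' keyed to $\epsilon^\ast\ge 1/2$ could be expanded into such an argument---one can check, with care about the constants in the $\Theta$, that $\epsilon^\ast<1/2$ does imply the second hypothesis for a constant $\tau$---but that verification is exactly the nontrivial step and is absent as written. The Gaussian spectral-norm bounds themselves are fine and match the paper's use of its concentration lemma.
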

When $p=k+\Omega(k)$ the trailing factor becomes a constant. If $p=k$
it creates a factor $k$ overhead. In the worst-case we can always
bound $\|X_\ell\|_\infty$ by~$1$ since $X_\ell$ is an orthonormal
basis.  However, in principle we could hope that a much better bound
holds provided that the target subspace $U$ has small coordinates.
Hardt and Roth~\cite{HardtR12,HardtR13} suggested a way to accomplish
a stronger bound by considering a notion of \emph{coherence} of $A,$
denoted as $\mu(A).$ Informally, the coherence is a well-studied
parameter that varies between $1$ and $n,$ but is often observed to be
small. Intuitively, the coherence measures the correlation between the
singular vectors of the matrix with the standard basis. Low coherence
means that the singular vectors have small coordinates in the standard
basis. Many results on matrix completion and robust PCA crucially rely
on the assumption that the underlying matrix has low
coherence~\cite{CandesR09,CandesT10,CandesLMW11} (though the notion of
coherence here will be somewhat different).
\begin{theorem}\theoremlabel{privacy-mu}
Under the assumptions of \theoremref{privacy-main}, we have the conclusion
\[
\Norm{(I-X_LX_L^\trans)U}\leq
O\left( \frac{\sigma\sqrt{\mu(A)\log d\log L}}{\sigma_k-\sigma_{k+1}}
\cdot\frac{\sqrt{p}}{\sqrt{p}-\sqrt{k-1}}
\right)
\mper
\]
\end{theorem}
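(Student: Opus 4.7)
The plan is to derive Theorem~\ref{thm:privacy-mu} directly from Theorem~\ref{thm:privacy-main} by replacing the factor $\sqrt{d}\cdot\max_\ell\|X_\ell\|_\infty$ in that bound with $O(\sqrt{\mu(A)\log d})$. Since the privacy guarantee and the convergence guarantee are already contained in Theorem~\ref{thm:privacy-main}, the entire task reduces to establishing, with probability $1-o(1)$, a uniform coherence bound
\[
\max_{0\leq \ell\leq L}\|X_\ell\|_\infty \,\leq\, C\,\sqrt{\mu(A)\log d\,/\,d}
\]
for some absolute constant $C$, over the $L = O(\frac{\sigma_k}{\sigma_k - \sigma_{k+1}}\log d)$ iterations of \PPM. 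Plugging this into Theorem~\ref{thm:privacy-main} then absorbs the extra $\log d$ into the existing logarithms and yields the claimed conclusion.

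I would prove the uniform bound by induction on $\ell$. For the base case $\ell = 0$, since $X_0$ is an orthonormal basis of a uniformly random $p$-dimensional subspace, standard Gaussian concentration plus a union bound over the $dp$ entries gives $\|X_0\|_\infty = O(\sqrt{\log(dp)/d})$ with high probability; since $\mu(A)\geq 1$ this is dominated by the target. For the inductive step, write $Y_\ell = AX_{\ell-1}+G_\ell$. Using the symmetric SVD $A = U_A\Sigma U_A^\trans$, entry $(i,j)$ of $AX_{\ell-1}$ equals $(e_i^\trans U_A)\Sigma(U_A^\trans X_{\ell-1} e_j)$, which by Cauchy--Schwarz is at most $\|e_i^\trans U_A\|_2\cdot\sigma_1\cdot\|X_{\ell-1}e_j\|_2$; the coherence definition bounds the first factor by $\sqrt{\mu(A)\cdot k/d}$, and the column norm is exactly $1$. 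For the noise, $G_\ell$ has Gaussian entries with standard deviation $\|X_{\ell-1}\|_\infty\sigma$, so by a union bound and the inductive hypothesis, $\|G_\ell\|_\infty = O(\sigma\sqrt{\mu(A)\log d\log(dp)/d})$ with high probability.

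To transfer these entrywise bounds through the QR step $Y_\ell = X_\ell R_\ell$, I would use $X_\ell = Y_\ell R_\ell^{-1}$ together with the elementary inequality
\[
\|X_\ell\|_\infty\,\leq\,\max_i\|e_i^\trans Y_\ell\|_2\,\cdot\,\|R_\ell^{-1}\|_{op}\,=\,\max_i\|e_i^\trans Y_\ell\|_2\,/\,\sigma_{\min}(Y_\ell)\,.
\]
The numerator is controlled by the row bounds on $AX_{\ell-1}$ and $G_\ell$ established above. For the denominator, the noise hypotheses of Theorem~\ref{thm:privacy-main}---which ensure convergence of the power method---also imply $\sigma_{\min}(Y_\ell) = \Omega(\sigma_k)$ once $X_{\ell-1}$ has nontrivial overlap with $U$, a property that Theorem~\ref{thm:convergence} guarantees along the entire iteration.

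The main obstacle is precisely this QR step: while $AX_{\ell-1}$ inherits coherence cleanly from $A$, the normalization by $R_\ell^{-1}$ can in principle amplify $\ell_\infty$ entries if $Y_\ell$ becomes poorly conditioned. Controlling $\sigma_{\min}(Y_\ell)$ throughout the transient phase (before the method has converged into the top-$k$ subspace) requires carefully coupling the coherence induction above with the convergence estimates of \theoremref{convergence}, ensuring that the Gaussian noise, though concentrated proportionally to $\|X_{\ell-1}\|_\infty$, never wipes out the spectral signal in the top-$k$ directions. Everything else is bookkeeping.
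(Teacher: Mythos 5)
Your overall reduction is the right one---the paper also deduces \theoremref{privacy-mu} from \theoremref{privacy-main} via a uniform bound $\max_\ell\|X_\ell\|_\infty\le O(\sqrt{\mu(A)\log d/d})$---but the way you propose to prove that bound has genuine gaps. First, the entrywise bound on $AX_{\ell-1}$: with the full eigendecomposition $A=U_A\Sigma U_A^\trans$, the row norm $\|e_i^\trans U_A\|_2$ is not $\sqrt{\mu(A)k/d}$; coherence only bounds individual entries, so the row norm is at most $\sqrt{\mu(A)\,\mathrm{rank}(A)/d}$, which equals $1$ for full-rank $A$. Second, and more seriously, the QR step: the inequality $\|X_\ell\|_\infty\le\max_i\|e_i^\trans Y_\ell\|_2/\sigma_{\min}(Y_\ell)$ is fine, but the claim $\sigma_{\min}(Y_\ell)=\Omega(\sigma_k)$ is false in general. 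The convergence analysis (\theoremref{convergence}) only controls the top $k$ principal angles; it says nothing about the remaining $p-k$ directions of $Y_\ell$, which can align with arbitrarily small eigenvalues of $A$ (or $A$ may simply have rank $<p$), so $\sigma_{\min}(Y_\ell)$ can be as small as the noise scale. Even in the benign case where it holds, your route yields a bound of order $(\sigma_1/\sigma_k)\sqrt{\mu(A)\,\mathrm{rank}(A)/d}$ plus noise terms, carrying spurious condition-number and rank factors that the stated theorem does not have. The "coupling with the convergence estimates" you invoke cannot repair this, because no convergence statement lower-bounds the $p$-th singular value of $Y_\ell$.

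The paper's proof avoids the QR conditioning entirely by exploiting a symmetry rather than a deterministic induction (\theoremref{gaussian-noise}). Since the noise $G_\ell$ is Gaussian, it is rotationally invariant; and Gram--Schmidt commutes with any orthogonal transformation (\lemmaref{GSO}), in particular with any orthogonal $O$ sharing the eigenbasis of $A$ (\lemmaref{distrib}). Consequently, writing a column $x$ of $X_\ell$ in the eigenbasis as $x=\sum_i s_i\alpha_i u_i$, the sign vector $(s_1,\dots,s_d)$ is uniformly distributed (\lemmaref{signs}), for \emph{every} $\ell$ and irrespective of how ill-conditioned $R_\ell$ is. A Chernoff bound over the entries (\lemmaref{sign-deviation}) then gives $\|x\|_\infty\le 4\sqrt{\mu(A)/d}\cdot\sqrt{\log d}$ with high probability, and union bounds over columns and iterations finish the coherence estimate, after which \theoremref{privacy-main} is applied exactly as you intended. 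If you want to salvage your induction, you would need to confront the fact that a unit vector with small expansion coefficients against a low-coherence basis can still have $\ell_\infty$-norm $1$ (e.g.\ $x=e_j$); it is precisely the random signs supplied by the Gaussian noise, not the coherence of $AX_{\ell-1}$ plus QR stability, that rule this out.
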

Hardt and Roth proved this result for the case where $p=1.$ The
extension to $p>1$ lost a factor of $\sqrt{d}$ in general and
therefore gave no improvement over \theoremref{privacy-main}. Our
result resolves the main problem left open in their work. The strength
of \theoremref{privacy-mu} is that the bound is essentially
dimension-free under a natural assumption on the matrix and never
worse than our worst-case result.  It is also known that in general
the dependence on $d$ achieved in \theoremref{privacy-main} is best
possible in the worst case (see discussion in~\cite{HardtR13}) so that
further progress requires making stronger assumptions. Coherence is a
natural such assumption. The proof of \theoremref{privacy-mu} proceeds
by showing that each iterate $X_\ell$ satisfies $\|X_\ell\|_\infty \le
O(\sqrt{\mu(A)\log(d)/d})$ and applying \theoremref{privacy-main}. To
do this we exploit a non-trivial symmetry of the algorithm that we
discuss in \sectionref{privacy-mu}.

\paragraph{Other objective functions and variants differential privacy.}
An important recent work by Dwork, Talwar, Thakurta and Zhang analyzes the
mechanism of adding Gaussian noise to the covariance matrix and computing a
truncated singular value decomposition of the noisy covariance
matrix~\cite{DworkTTZ14}.  Their objective function is a natural measure of
how much variance of the data is captured by the resulting subspace. Our
results are formally incomparable due to a different choice of objective
function. We also do not know how to analyze the performance of the power
method under their objective function.  Indeed, this is an interesting
question related to the content of Conjecture~\ref{conjecture:spectral} that
we will state shortly.

Our discussion above applied to $(\epsilon,\delta)$-differential
privacy under changing a single entry of the matrix. Several works
consider other variants of differential privacy. It is generally easy
to adapt the power method to these settings by changing the noise
distribution or its scaling. To illustrate this aspect, we consider
the problem of privacy-preserving principal component analysis as
recently studied by~\cite{ChaudhuriSS12,KapralovT13}. Both works
consider an algorithm called \emph{exponential mechanism}. The first
work gives a heuristic implementation that may not converge, while the
second work gives a provably polynomial time algorithm though the
running time is more than cubic. Our algorithm gives strong
improvements in running time while giving nearly optimal accuracy
guarantees as it matches a lower bound of \cite{KapralovT13} up to a $\tilde
O(\sqrt{k})$ factor. We also improve the error dependence on~$k$ by
polynomial factors compared to previous work. Moreover, we get an 
accuracy improvement of $O(\sqrt{d})$ for the case of $(\epsilon,\delta)$-differential
privacy, while these previous works only apply to $(\epsilon,0)$-differential
privacy.  \sectionref{PCA} provides formal statements.

\subsection{Related Work}

\paragraph{Numerical Analysis.} 
One might expect that a suitable analysis of the noisy power method would have
appeared in the numerical analysis literature. However, we are not aware of a
reference and there are a number of points to consider. First, our noise model
is adaptive thus setting it apart from the classical perturbation theory of
the singular vector decomposition~\cite{DavisK70}. Second, we think of the
perturbation at each step as large making it conceptually different from
floating point errors.  Third, research in numerical analysis over the past
decades has largely focused on faster Krylov subspace methods. There is some
theory of \emph{inexact Krylov methods}, e.g., \cite{SimonciniS07} that
captures the effect of noisy matrix-vector products in this context.  Related
to our work are also results on the perturbation stability of the
QR-factorization since those could be used to obtain convergence bounds for
subspace iteration. Such bounds, however, must depend on the condition number
of the matrix that the QR-factorization is applied to. See Chapter~19.9
in~\cite{Higham} and the references therein for background. Our proof strategy
avoids this particular dependence on the condition number. 

\paragraph{Streaming PCA.}
PCA in the streaming model is related to a host of well-studied
problems that we cannot survey completely here. We refer
to~\cite{ACLS12,MCJ13} for a thorough discussion of prior work. Not
mentioned therein is a recent work on incremental PCA~\cite{BDF13}
that leads to space efficient algorithms computing the top singular
vector; however, it's not clear how to extend their results to
computing multiple singular vectors.

\paragraph{Privacy.}
There has been much work on differentially private spectral analysis starting
with Blum et al.~\cite{BlumDMN05} who used an algorithm sometimes called
\emph{Randomized Response}, which adds a single noise matrix $N$ either to the
input matrix~$A$ or the covariance matrix $AA^\trans.$ This approach was used
by McSherry and Mironov~\cite{McSherryM09} for the purpose of a differentially
private recommender system. Most recently, as discussed earlier, Dwork,
Talwar, Thakurta and Zhang~\cite{DworkTTZ14} revisit (a variant of) the this
algorithm and give matching upper and lower bounds under a natural objective function. 
While often suitable when $AA^\trans$ fits into memory,
the approach can be difficult to apply when the dimension of $AA^\trans$
is huge as it requires computing a dense noise matrix~$N.$ The power method
can be applied more easily to large sparse matrices, as well as in a streaming
setting as shown by~\cite{MCJ13}.

Chaudhuri et al.~\cite{ChaudhuriSS12} and Kapralov-Talwar~\cite{KapralovT13}
use the so-called \emph{exponential mechanism} to sample approximate
eigenvectors of the matrix. The sampling is done using a heuristic approach
without convergence polynomial time convergence guarantees in the first case
and using a polynomial time algorithm in the second. Both papers achieve a
tight dependence on the matrix dimension~$d$ (though the dependence on~$k$ is
suboptimal in general).  Most closely related to our work are the results of
Hardt and Roth~\cite{HardtR13,HardtR12} that introduced matrix coherence as a
way to circumvent existing worst-case lower bounds on the error. They also
analyzed a natural noisy variant of power iteration for the case of computing
the dominant eigenvector of $A.$ When multiple eigenvectors are needed, their
algorithm uses the well-known deflation technique. However, this step loses
control of the coherence of the original matrix and hence results in
suboptimal bounds. In fact, a $\sqrt{\mathrm{rank}(A)}$ factor is lost. 

\subsection{Open Questions}

We believe \corollaryref{random} to be a fairly precise
characterization of the convergence of the noisy power method to the
top $k$ singular vectors when $p = k$.  The main flaw is that the
noise tolerance depends on the eigengap $\sigma_{k} - \sigma_{k+1}$,
which could be very small.  We have some conjectures for results that
do not depend on this eigengap.

First, when $p > k$, we think that \corollaryref{random} might hold
using the gap $\sigma_{k} - \sigma_{p+1}$ instead of $\sigma_{k} -
\sigma_{k+1}$.  Unfortunately, our proof technique relies on the
principal angle decreasing at each step, which does not necessarily
hold with the larger level of noise.  Nevertheless we expect the
principal angle to decrease fairly fast on average, so that $X_L$ will
contain a subspace very close to $U$.  We are actually unaware of this
sort of result even in the noiseless setting.

\begin{conjecture}
  Let $X_0$ be a random $p$-dimensional basis for $p > k$.  Suppose at
  every step we have
  \[
  100\norm{G_\ell} \leq \eps (\sigma_k - \sigma_{p+1}) \quad\text{and}\quad 100\norm{U^TG_\ell} \leq \frac{\sqrt{p} - \sqrt{k-1}}{\sqrt{d}}
  \]
  Then with high probability, after $L = O(\frac{\sigma_k}{\sigma_k -
    \sigma_{p+1}}\log (d/\eps))$ iterations we have
  \[
  \norm{(I - X_LX_L^\trans)U} \leq \eps.
  \]
\end{conjecture}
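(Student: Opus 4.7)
The plan is to exploit the $p - k$ ``slack dimensions'' of $X_\ell$ relative to $U$ by constructing an explicit witness $\tilde Z \in \R^{p \times k}$ such that $X_L \tilde Z$ approximates $U$ after $L$ iterations, with the decay rate governed by $\sigma_{p+1}/\sigma_k$ rather than the $\sigma_{p+1}/\sigma_p$ that a naive Davis--Kahan argument would give. I would partition the spectrum of $A$ into three blocks: the top-$k$ singular vectors $U$, a ``middle'' block $V \in \R^{d \times (p-k)}$ carrying indices $k+1, \ldots, p$, and the tail $W \in \R^{d \times (d-p)}$, with corresponding diagonal singular value matrices $\Sigma_U, \Sigma_V, \Sigma_W$. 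Tracking $A_\ell = U^T X_\ell$, $B_\ell = V^T X_\ell$, $C_\ell = W^T X_\ell$, the target $\|(I - X_L X_L^T) U\| \leq \eps$ is implied by $\|X_L \tilde Z - U\| \leq \eps$ for any $\tilde Z \in \R^{p \times k}$.

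As a warm-up I would first settle the noiseless case. Let $\hat X_\ell := X_\ell R_\ell \cdots R_1$ be the unnormalized iterate, so that noiselessly $\hat X_L = A^L X_0$. The $p \times p$ matrix $M_0 := [A_0; B_0]$ is invertible with high probability for random $X_0$, with $\sigma_p(M_0) \gtrsim (\sqrt{p} - \sqrt{k-1})/\sqrt{d}$ by standard smallest-singular-value estimates for random matrices. Choosing $\tilde Z := M_0^{-1} [\Sigma_U^{-L}; 0]$ gives $\hat X_L \tilde Z = U + W \Sigma_W^L C_0 M_0^{-1} [\Sigma_U^{-L}; 0]$; the $V$-contribution cancels identically because $\tilde Z$ is rigged to satisfy $A_0 \tilde Z = \Sigma_U^{-L}$ and $B_0 \tilde Z = 0$ simultaneously. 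The residual is then bounded by $(\sigma_{p+1}/\sigma_k)^L \|C_0\|/\sigma_p(M_0)$, already exhibiting the conjectured rate without any eigengap between $\sigma_k$ and $\sigma_{k+1}$.

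For the noisy case I would unroll to $\hat X_L = A^L X_0 + \sum_{i=1}^L A^{L-i} \hat G_i$ where $\hat G_i := G_i R_{i-1} \cdots R_1$. Projecting onto $U, V, W$ and factoring $\Sigma_U^L, \Sigma_V^L, \Sigma_W^L$ to the left produces perturbed initial projections $A' = A_0 + N_A$, $B' = B_0 + N_B$, $C' = C_0 + N_C$. The same recipe $\tilde Z := [A'; B']^{-1} [\Sigma_U^{-L}; 0]$ cancels the $V$-component again, yielding $\|\hat X_L \tilde Z - U\| \leq (\sigma_{p+1}/\sigma_k)^L (\|C_0\| + \|N_C\|)/\sigma_p([A'; B'])$. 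Since $\hat X_L$ and $X_L$ share column span, this translates into the desired bound on $\|(I - X_L X_L^T) U\|$.

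The main obstacle will be controlling the perturbations $N_A, N_B, N_C$ and showing $\sigma_p([A'; B']) \gtrsim \sigma_p(M_0)$. Each $N$-term has the form $\sum_i \Sigma_\bullet^{-i} (\bullet^T G_i) R_{i-1} \cdots R_1$, and a naive bound on $\|R_{i-1} \cdots R_1\|$ grows like $\sigma_1^{i-1}$, which overwhelms the $\sigma_k^{-i}$ damping whenever $\sigma_1 > \sigma_k$. The resolution must exploit finer structure of the R-product: once $X_\ell$ has settled into a neighborhood of the top-$p$ subspace, $R_{i-1} \cdots R_1$ acts on the ``$U$-aligned'' directions with amplification only $\sim \sigma_k^{i-1}$, not $\sigma_1^{i-1}$. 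Making this quantitative calls for an inductive argument along the iteration that tracks $\sigma_p([A_\ell'; B_\ell'])$ jointly with the singular structure of the R-product. Crucially, unlike in the proof of \theoremref{convergence}, this induction cannot hinge on monotone decrease of the principal angle between $X_\ell$ and $U$ --- which may genuinely fail at the conjectured noise scale --- so the invariant must be a joint property of $[A_\ell'; B_\ell']$ that tolerates the $V$-block growing, so long as it stays suitably conditioned away from $U$. Once this inductive control is in place, the witness calculation above closes the proof and gives $L = O(\sigma_k/(\sigma_k - \sigma_{p+1})\log(d/\eps))$ as claimed.
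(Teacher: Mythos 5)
You are attempting to prove what is, in this paper, an \emph{open conjecture}: the authors give no proof, and they explicitly note that their own technique (the monotone decrease of $\tan\theta_k(U,X_\ell)$ in \lemmaref{decrease}) breaks down at noise level $\eps(\sigma_k-\sigma_{p+1})$, and that they are unaware of the needed ``contains a subspace close to $U$'' statement even in intermediate forms. Your proposal does not close this gap. The noiseless warm-up (unroll to $A^LX_0$, pick $\tilde Z$ to kill the $V$-block, pay $\|M_0^{-1}\|\lesssim \sqrt{d}/(\sqrt p-\sqrt{k-1})$) is the standard gap-free analysis of subspace iteration and is fine. But in the noisy case the entire difficulty of the conjecture is concentrated exactly in the step you defer: controlling $N_A,N_B,N_C$ and $\sigma_p([A';B'])$ requires knowing that $\|R_{i-1}\cdots R_1\|$ (or at least its action on the directions hit by $U^\trans G_i$, $V^\trans G_i$) grows like $\sigma_k^{i-1}$ rather than $\sigma_1^{i-1}$. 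Generically $\|R_{i-1}\cdots R_1\|\approx\sigma_1^{i-1}$, and the assertion that the amplification is only $\sigma_k^{i-1}$ ``once $X_\ell$ has settled near the top-$p$ subspace'' is essentially equivalent to the convergence statement you are trying to prove; the promised inductive invariant is never formulated, and since the principal angle can genuinely increase at this noise scale, it cannot be the invariant used in \theoremref{convergence}. As written, the argument is circular at its crux.

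There are also concrete technical failures in the setup. Factoring $\Sigma_V^L$ and $\Sigma_W^L$ out of the noise sums is not legitimate: the conjecture places no lower bound on $\sigma_{k+1},\dots,\sigma_p$ (indeed $\sigma_p$ may equal $\sigma_{p+1}$ or vanish), so $\Sigma_V^{-i}$ may not exist, and even when it does, $N_B=\sum_i\Sigma_V^{-i}V^\trans G_i\,R_{i-1}\cdots R_1$ can dwarf $B_0$, so ``$\sigma_p([A';B'])\gtrsim\sigma_p(M_0)$'' is not a perturbative statement. Moreover, even granting the optimistic $\sigma_k^{i-1}$ bound on the $R$-products, the noise injected at late steps contributes to the $W$-block a term of order $\|G_L\|\sigma_k^{-1}\|[A';B']^{-1}\|\gtrsim \eps\,\frac{\sigma_k-\sigma_{p+1}}{\sigma_k}\cdot\frac{\sqrt d}{\sqrt p-\sqrt{k-1}}$, since only $\|G_\ell\|\le\eps(\sigma_k-\sigma_{p+1})/100$ is available for $W^\trans G_\ell$ (no $1/\sqrt d$ smallness as for $U^\trans G_\ell$); this does not reduce to $\eps$ unless you additionally show that the conditioning of the stacked matrix improves over the iteration --- again the missing induction. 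So the proposal is a reasonable plan of attack, with the right rate $(\sigma_{p+1}/\sigma_k)^L$ appearing in the noiseless case, but the conjecture remains unproved by it.
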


The second way of dealing with a small eigengap would be to relax our
goal. \corollaryref{random} is quite stringent in that it requires
$X_L$ to approximate the top $k$ singular vectors $U$, which gets
harder when the eigengap approaches zero and the $k$th through $p+1$st
singular vectors are nearly indistinguishable.  A relaxed goal would
be for $X_L$ to spectrally approximate $A$, that is
\begin{align}\label{eq:Aapprox}
  \norm{(I - X_LX_L^\trans)A} \leq \sigma_{k+1} + \eps.
\end{align}
This weaker goal is known to be achievable in the noiseless setting
without any eigengap at all.  In particular,~\cite{HalkoMT11} shows
that~\eqref{eq:Aapprox} happens after $L = O(\frac{\sigma_{k+1}}{\eps}\log
n)$ steps in the noiseless setting.  A plausible extension to the
noisy setting would be:

\begin{conjecture}
\label{conjecture:spectral}
  Let $X_0$ be a random $2k$-dimensional basis.  Suppose at every step
  we have
  \[
  \norm{G_\ell} \leq \eps \quad\text{and}\quad \norm{U^TG_\ell} \leq \eps \sqrt{k/d}
  \]
  Then with high probability, after $L = O(\frac{\sigma_{k+1}}{\eps}\log
  d)$ iterations we have that
  \[
  \norm{(I - X_LX_L^\trans)A} \leq \sigma_{k+1} + O(\eps).
  \]
\end{conjecture}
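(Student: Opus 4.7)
The plan is to adapt the block subspace iteration analysis of Halko, Martinsson, and Tropp~\cite{HalkoMT11} to an adversarially noisy setting, combining it with the signal-preservation argument from the proof of \corollaryref{random}. Rather than attempting to push $X_L$ close to the top-$k$ singular space (which is impossible when $\sigma_k \approx \sigma_{k+1}$), we show that $X_L$ already captures enough of the range of $A$ via a two-scale decomposition of the iterates into a ``signal'' block and a ``junk'' block.

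First, decompose $A = U \Sigma_1 U^\trans + U_\perp \Sigma_2 U_\perp^\trans$, with $U$ the top $k$ singular vectors and $U_\perp$ the remaining ones, and track the block coordinates $\Omega_\ell := U^\trans X_\ell$ and $\Psi_\ell := U_\perp^\trans X_\ell$. Since $X_0$ is a random $2k$-dimensional orthonormal basis, standard lower tail bounds for the smallest singular value of a $k \times 2k$ Gaussian matrix give $\sigma_{\min}(\Omega_0) = \Omega(\sqrt{k/d})$ with high probability. I would then show inductively that the hypothesis $\norm{U^\trans G_\ell} \leq \eps \sqrt{k/d}$ is calibrated precisely so that $\sigma_{\min}(\Omega_\ell)$ cannot collapse: each update multiplies the signal block by $\Sigma_1$ (with all singular values $\geq \sigma_k$) before the QR renormalization, and adds noise of size $\eps\sqrt{k/d}$ in the signal direction; as long as $\sigma_k$ is not smaller than, say, $2\eps$, $\sigma_{\min}(\Omega_\ell) = \Omega(\sqrt{k/d})$ throughout the $L$ iterations. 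This mirrors the signal-preservation step used for \corollaryref{random} but does not require any eigengap.

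The main work is an analogue of the HMT spectral bound. In the noiseless case $X_L \supseteq \range(A^L X_0)$ and one derives an estimate of the form
\[
\norm{(I - X_L X_L^\trans) A} \leq \sigma_{k+1} \Bigl( 1 + \norm{\Psi_0 \Omega_0^\pinv}\cdot (\sigma_{k+1}/\sigma_k)^L \Bigr) + (\text{noise corrections}),
\]
plus decay contributions from each singular value between $\sigma_{k+1}$ and $\sigma_k$. For singular values $\sigma_j$ with $\sigma_j - \sigma_{k+1} \gg \eps$, the factor $(\sigma_j/\sigma_k)^L$ vanishes once $L = \Theta((\sigma_{k+1}/\eps)\log d)$, while singular values within $\eps$ of $\sigma_{k+1}$ are absorbed into the target bound $\sigma_{k+1} + O(\eps)$. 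Adversarial noise enters as an inhomogeneous perturbation of the iteration; because $\norm{G_\ell} \leq \eps$ and the well-conditioning of $\Omega_\ell$ ensures that noise injected into the junk block is not amplified beyond $O(\eps/\sigma_{\min}(\Omega_\ell)) = O(\eps\sqrt{d/k})$ relative to signal, the $L$-step telescoping sum leaves an additive $O(\eps)$ in the spectral norm bound.

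The hard part is exactly this noise-propagation estimate in the absence of an eigengap. For \corollaryref{random} one tracks the principal angle $\tan\theta_k(X_\ell, U)$ as a Lyapunov function that contracts at every step; when $\sigma_k \approx \sigma_{k+1}$, no such contraction holds, which is precisely the regime the conjecture targets. A plausible substitute is to directly follow $\norm{A(I - X_\ell X_\ell^\trans)}$ and argue that, even if it does not monotonically decrease, its running minimum over a window of iterations decays to $\sigma_{k+1} + O(\eps)$; alternatively, one could try to maintain a block invariant on $\Psi_\ell \Omega_\ell^\pinv$ that requires only $\Omega_\ell$ to stay well-conditioned rather than $\Psi_\ell$ itself to shrink. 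Handling an adversarial, adaptive noise sequence inside either potential argument, without access to martingale concentration, is the missing technical piece and is why the statement remains a conjecture.
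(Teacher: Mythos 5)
You have been asked to prove what the paper states as Conjecture~\ref{conjecture:spectral}, and the paper contains no proof of it: the authors explicitly write that they do not know how to analyze the noisy power method under this spectral-approximation objective, and the only available ingredient they point to is the \emph{noiseless} gap-free bound of~\cite{HalkoMT11}. Your write-up does not close this gap either --- you concede in your final paragraph that the noise-propagation estimate in the absence of an eigengap is ``the missing technical piece.'' A plan that defers exactly the step that makes the statement nontrivial is not a proof; at best it is a restatement of why the authors left it as a conjecture, namely that the Lyapunov quantity used throughout \sectionref{robust} (the tangent $\tan\theta_k(U,X_\ell)$, contracted via \lemmaref{decrease}) has no analogue when $\sigma_k\approx\sigma_{k+1}$.

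Two concrete points in your sketch would need repair even as a roadmap. First, your displayed intermediate bound carries the decay factor $(\sigma_{k+1}/\sigma_k)^L$, which is vacuous in precisely the regime the conjecture targets ($\sigma_k\approx\sigma_{k+1}$, factor $\approx 1$); the gap-free argument of~\cite{HalkoMT11} does not compare against $\sigma_k$ but splits the spectrum at the soft threshold $\sigma_{k+1}+\eps$ and suppresses only the directions above it, which is where $L=O((\sigma_{k+1}/\eps)\log d)$ comes from --- as written your bound does not even recover the noiseless case. Second, the claim that per-step errors of relative size $O(\eps\sqrt{d/k})$ ``telescope'' to an additive $O(\eps)$ over $L=\Theta((\sigma_{k+1}/\eps)\log d)$ steps presupposes a per-step contraction of whatever potential is being telescoped; without an eigengap neither the principal angle nor a block invariant such as $\norm{\Psi_\ell\Omega_\ell^\pinv}$ is known to contract, and an adaptive adversary may re-inject its budget every round, so the naive accumulation is $L\cdot O(\eps)$ (times amplification), not $O(\eps)$. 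You also quietly add the hypothesis $\sigma_k\geq 2\eps$ to keep $\sigma_{\min}(U^\trans X_\ell)=\Omega(\sqrt{k/d})$, which is not part of the conjecture. Exhibiting a quantity controlling $\norm{(I-X_\ell X_\ell^\trans)A}$ that decreases (even on average) under adversarial, adaptive noise is exactly the open problem; until that is supplied, the statement remains unproved.
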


\section{Convergence of the noisy power method}
\sectionlabel{robust}

\figureref{NPM} presents our basic algorithm that we analyze
in this section.  An important tool in our analysis are principal
angles, which are useful in analyzing the convergence behavior of
numerical eigenvalue methods. Roughly speaking, we will show that the
tangent of the $k$-th principal angle between $X$ and the top $k$
eigenvectors of $A$ decreases as $\sigma_{k+1}/\sigma_k$ in each
iteration of the noisy power method.

\begin{definition}[Principal angles]
  Let $\cal X$ and $\cal Y$ be subspaces of $\R^d$ of dimension at
  least $k$.  The \emph{principal angles} $0 \leq \theta_1 \leq \dotsb
  \leq \theta_k$ between $\cal X$ and $\cal Y$ and associated
  \emph{principal vectors} $x_1, \dotsc, x_k$ and $y_1, \dotsc, y_k$
  are defined recursively via
  \[
  \theta_i({\cal X}, {\cal Y}) = \min \left\{\arccos \left( \frac{\langle x, y\rangle}{\norm{x}_2\norm{y}_2} \right) \,:\, 
    x \in {\cal X}, y \in {\cal Y},
    x \perp x_j, y \perp y_j \text{ for all } j < i \right\}
  \]
  and $x_i, y_i$ are the $x$ and $y$ that give this value.
  For matrices $X$ and $Y$, we use $\theta_k(X, Y)$ to denote the
  $k$th principal angle between their ranges.
\end{definition}

\subsection{Convergence argument}
We will make use of a non-recursive expression for the principal
angles, defined in terms of the set $\cP_k$ of $p \times p$ projection
matrices $\Pi$ from $p$ dimensions to $k$ dimensional subspaces:

\begin{claim}
  Let $U \in \R^{d \times k}$ have orthonormal columns and $X \in
  \R^{d \times p}$ have independent columns, for $p \geq k$.  Then
  \[
  \cos \theta_k(U, X)
 = \max_{\Pi \in \cP_k} \min_{\substack{x \in
    \range(X\Pi)\\\norm{x}_2 = 1}} \norm{U^\trans x}
 = \max_{\Pi \in \cP_k} \min_{\substack{\norm{w}_2 = 1\\\Pi w = w}} \frac{\norm{U^\trans X w}}{\norm{Xw}}.
  \]
  For $V = U^\perp$, we have
  \[
  \tan \theta_k(U, X) = \min_{\Pi \in \cP_k} \max_{x \in
    \range(X\Pi)} \frac{\norm{V^\trans x}}{\norm{U^\trans x}}
  = \min_{\Pi \in \cP_k} \max_{\substack{\norm{w}_2 = 1\\\Pi w = w}} \frac{\norm{V^\trans X w}}{\norm{U^\trans X w}}.
  \]
\end{claim}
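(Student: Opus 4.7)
The plan is to first establish the standard Courant--Fischer style max--min characterization of the $k$-th principal angle in terms of $k$-dimensional subspaces $S\subseteq\range(X)$, then translate to the projection/coefficient form, and finally deduce the tangent identity.

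First I would recall the standard fact that for $k$-dimensional $U$ and any subspace $\cX$ containing a $k$-dimensional subspace,
\[
\cos\theta_k(U,\cX) \;=\; \max_{\substack{S\subseteq\cX\\ \dim S=k}} \;\min_{\substack{x\in S\\ \norm{x}_2=1}} \norm{U^\trans x}\mper
\]
The $\ge$ direction is shown by the candidate $S^*=\mathrm{span}(x_1,\dots,x_k)$, where $(x_i,y_i)$ are the principal vectors from the recursive definition; on this subspace the minimum of $\norm{U^\trans x}$ over unit $x$ is attained at $x_k$ and equals $\cos\theta_k$. The $\le$ direction follows from a standard dimension-counting argument: for any $k$-dimensional $S\subseteq\cX$, the orthogonal complement of $\mathrm{span}(y_1,\dots,y_{k-1})$ in $U$ has dimension one and hence meets $S$ nontrivially, so the recursive definition of $\theta_k$ forces $\min_{x\in S,\,\norm{x}=1}\norm{U^\trans x}\le\cos\theta_k$.

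Next I would rewrite the max over $k$-dimensional subspaces $S\subseteq\range(X)$ as a max over $\Pi\in\cP_k$. Since $X$ has independent columns, the map $\Pi\mapsto\range(X\Pi)$ is a surjection from $\cP_k$ onto the Grassmannian of $k$-dimensional subspaces of $\range(X)$, giving the first equality. For the second, every $x\in\range(X\Pi)$ can be written as $x=Xw$ with $\Pi w=w$ (take $w=\Pi v$ whenever $x=X\Pi v$), and the quantity $\norm{U^\trans Xw}/\norm{Xw}$ is scale-invariant in $w$, so one may normalize $\norm{w}_2=1$.

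Finally, for the tangent formula I would use that for any unit $x$, $\norm{U^\trans x}^2+\norm{V^\trans x}^2=1$, so the scale-invariant ratio
\[
f(x)\;=\;\frac{\norm{V^\trans x}}{\norm{U^\trans x}}\;=\;\sqrt{\frac{1}{\norm{U^\trans x / \norm{x}}^2}-1}
\]
is a strictly decreasing function of $\norm{U^\trans x}/\norm{x}$. Therefore, for any $\Pi\in\cP_k$ and $S=\range(X\Pi)$,
\[
\max_{x\in S} f(x) \;=\; \sqrt{\frac{1}{\min_{x\in S,\,\norm{x}=1}\norm{U^\trans x}^2}-1}\mper
\]
The inner minimum is at most $\cos\theta_k(U,X)$ by the max--min identity above, with equality attained at the optimal $\Pi^*$, so taking $\min_\Pi$ of the outer quantity yields $\sqrt{1/\cos^2\theta_k - 1}=\tan\theta_k$. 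The reparametrization $x=Xw$ with $\Pi w=w$ then gives the final form. The main subtlety—really the only nontrivial point—is the dimension-counting step that establishes the Courant--Fischer identity for principal angles; everything else is bookkeeping.
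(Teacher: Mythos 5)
Your overall plan---a Courant--Fischer style max--min identity for $\cos\theta_k$ over $k$-dimensional subspaces of $\range(X)$, reparametrization of those subspaces as $\range(X\Pi)$ with $\Pi\in\cP_k$, and then passing to the tangent via the decreasing map $c\mapsto\sqrt{1/c^2-1}$---is sound, and the bookkeeping parts are fine: surjectivity of $\Pi\mapsto\range(X\Pi)$ and the substitution $x=Xw$, $\Pi w=w$ use exactly the independence of the columns of $X$, and the tangent step works because the minimum over $\Pi$ of a decreasing function of the inner minimum is that function evaluated at the maximum over $\Pi$. (For calibration: the paper states this claim without proof, treating it as a standard fact about principal angles, so your argument has to stand on its own.)

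The genuine gap is in the $\le$ direction of your max--min identity, which you yourself single out as the one nontrivial step. You intersect the $k$-dimensional $S\subseteq\range(X)$ with the orthogonal complement of $\mathrm{span}(y_1,\dots,y_{k-1})$ \emph{in $U$}; that complement is a one-dimensional subspace of $\range(U)$, and in general it meets $S$ only at the origin (when all principal angles are positive, $\range(U)\cap\range(X)=\{0\}$), so the claimed nontrivial intersection is false. The correct dimension count intersects $S$ with the orthogonal complement in $\R^d$ of $\mathrm{span}(x_1,\dots,x_{k-1})$, the principal vectors lying on the \emph{same} side as $S$: this has codimension $k-1$, so it contains a unit $x\in S$ with $x\perp x_j$ for $j<k$. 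One then pairs this $x$ with $y=UU^\trans x/\norm{U^\trans x}$ and must verify the pair is feasible for the recursive definition, i.e.\ $y\perp y_j$ for $j<k$; this follows from the identity $P_{\range(X)}y_j=\cos\theta_j\,x_j$ (equivalently $P_{\range(U)}x_j=\cos\theta_j\,y_j$), which, note, is also the fact your $\ge$ direction silently uses when asserting that the minimum over $S^*=\mathrm{span}(x_1,\dots,x_k)$ is attained at $x_k$, so it should be stated and proved (or cited) once. Even under the charitable reading that you meant the complement of the $y_j$'s taken in $\R^d$, the resulting $x$ satisfies $x\perp y_j$ but not necessarily $x\perp x_j$, so the pair is not obviously feasible; you would need the additional observation that $\langle x,y_j\rangle=\cos\theta_j\langle x,x_j\rangle$ for $x\in\range(X)$, which makes the two orthogonality conditions equivalent only when $\cos\theta_j>0$, together with a separate treatment of the degenerate case $\cos\theta_k=0$ (where the tangent is infinite and the cosine identity needs a rank argument). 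With the corrected dimension count and that projection identity in hand, the rest of your argument goes through.
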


Fix parameters $1\le k\le p\le d.$ In this section we consider a
symmetric $d\times d$ matrix $A$ with singular values $\sigma_1 \geq
\sigma_2 \geq \dotsb \geq \sigma_d$.  We let $U \in \R^{d \times k}$
contain the first $k$ eigenvectors of $A$.  Our main lemma shows that
$\tan \theta_k(U, X)$ decreases multiplicatively in each step.

\begin{lemma}\label{l:decrease}
\lemmalabel{decrease}
  Let $U$ contain the largest $k$ eigenvectors of a symmetric matrix
  $A \in \R^{d \times d}$, and let $X \in \R^{d \times p}$ with $X^trans
X=\mathrm{Id}$ for some $p \geq k$.  Let $G \in \R^{d \times p}$ satisfy
  \begin{align*}
    4 \norm{U^\trans G} &\leq (\sigma_k - \sigma_{k+1}) \cos \theta_k(U, X)\\
    4 \norm{G} &\leq (\sigma_k - \sigma_{k+1}) \eps.
  \end{align*}
  for some $\eps < 1$.  Then
  \[
  \tan \theta_k(U, AX + G) \leq \max\left(\eps, \max\left(\eps, \left(\frac{\sigma_{k+1}}{\sigma_k}\right)^{1/4}\right)\tan \theta_k(U, X)\right).
  \]
\end{lemma}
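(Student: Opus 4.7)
The plan is to use the tangent characterization from the preceding claim: since $\tan \theta_k(U, AX + G)$ is a minimum over projectors $\Pi \in \cP_k$, I will upper bound it by plugging in the particular $\Pi^*$ that witnesses the minimum in the corresponding formula for $\tan \theta_k(U, X)$. Any such test projection gives a valid upper bound, and this one is natural because it decouples the analysis from the detailed geometry of $AX + G$.

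With that choice fixed, I would introduce the shorthand $\theta = \theta_k(U, X)$, $t = \tan \theta$, $c = \cos \theta$, $V = U^\perp$, and $\Delta = \sigma_k - \sigma_{k+1}$. For any unit vector $w$ in the range of $\Pi^*$, set $a = \norm{U^\trans X w}$ and $b = \norm{V^\trans X w}$. Because $X^\trans X = \mathrm{Id}$ we have $a^2 + b^2 = \norm{Xw}^2 = 1$, forcing $a \geq c$, while optimality of $\Pi^*$ gives $b \leq t a$.

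Next, decomposing $A = U \Sigma_U U^\trans + V \Sigma_V V^\trans$ in the eigenbasis and applying the triangle inequality with the two noise hypotheses $\norm{G} \leq \Delta \eps / 4$ and $\norm{U^\trans G} \leq \Delta c / 4 \leq \Delta a / 4$ produces
\begin{align*}
  \norm{U^\trans(AX+G)w} &\geq \sigma_k a - \Delta a / 4 \;=\; a(3\sigma_k + \sigma_{k+1})/4, \\
  \norm{V^\trans(AX+G)w} &\leq \sigma_{k+1} b + \Delta \eps / 4.
\end{align*}
Writing $\rho = \sigma_{k+1}/\sigma_k$, the resulting ratio is at most
\[
  \frac{4\rho}{3+\rho} \cdot \frac{b}{a} \;+\; \frac{(1-\rho)\eps}{a(3+\rho)}.
\]

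It remains to show this is bounded by $\max(\eps, \beta t)$ with $\beta = \max(\eps, \rho^{1/4})$. For the spectral term I would verify the calculus inequality $4\rho/(3+\rho) \leq \rho^{1/4}$ on $[0,1]$ (equivalent to $4\rho^{3/4} - \rho \leq 3$, which follows from monotonicity in $[0,1]$ with equality at $\rho=1$), yielding $\leq \rho^{1/4} \cdot b/a \leq \rho^{1/4} t \leq \beta t$. For the noise term I would split on whether $t \leq 1$ or $t > 1$: using $a \geq c = 1/\sqrt{1+t^2}$ gives $O(\eps)$ in the first regime and $O(\eps t)$ in the second, both absorbed by $\max(\eps, \beta t)$ since $\beta \geq \eps$. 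The main obstacle is collapsing the two contributions into a single $\max$ rather than a sum without losing a factor of two; the constants in the hypotheses (the factor $4$ on $\norm{G}$ and the cosine factor on $\norm{U^\trans G}$) are chosen precisely to leave that slack, and the fourth-root exponent is the weakest contraction factor that uniformly dominates $4\rho/(3+\rho)$ on $[0,1]$.
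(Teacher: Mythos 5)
Your setup is the same as the paper's: you test the minimum over $\Pi\in\cP_k$ with the projector $\Pi^*$ achieving $\tan\theta_k(U,X)$, use $\norm{U^\trans AXw}\ge\sigma_k\norm{U^\trans Xw}$, $\norm{V^\trans AXw}\le\sigma_{k+1}\norm{V^\trans Xw}$, the triangle inequality, and $\norm{U^\trans Xw}\ge\cos\theta_k(U,X)$, arriving at exactly the paper's intermediate bound $\frac{4\rho}{3+\rho}\cdot\frac{b}{a}+\frac{(1-\rho)\eps}{(3+\rho)a}$ with $\rho=\sigma_{k+1}/\sigma_k$. The gap is in your last step. You bound the spectral term by $\rho^{1/4}t\le\beta t$ and the noise term by $O(\eps)$ or $O(\eps t)$, and then assert the sum is absorbed into $\max(\eps,\beta t)$ because "the constants leave that slack." They do not, along your route: the inequality $4\rho/(3+\rho)\le\rho^{1/4}$ is tight as $\rho\to 1$, so after that relaxation the spectral term alone can already equal $\beta t$, and adding any positive noise term overshoots. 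Concretely, take $\rho=1-\delta$ with $\delta$ small, $t=1$, $\eps=1/2$, $a=\cos\theta=1/\sqrt{2}$: your relaxed bound is $\rho^{1/4}+\frac{\delta\eps}{(4-\delta)a}\approx 1-0.07\delta$, while $\max(\eps,\beta t)=\rho^{1/4}\approx 1-0.25\delta$, so the claimed absorption fails (the true bound $\frac{4\rho}{3+\rho}+\frac{\delta\eps}{(4-\delta)a}\approx 1-0.57\delta$ is fine). The slack you need lives precisely in the difference $\rho^{1/4}-\frac{4\rho}{3+\rho}$, which vanishes exactly where the noise coefficient $(1-\rho)$ also vanishes; these two must be compared jointly, and your split discards that correlation in its very first move, so no case analysis on $t$ can rescue it afterwards.

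The paper closes the argument by keeping the two contributions together as a convex combination rather than bounding them separately. With $\Delta=(\sigma_k-\sigma_{k+1})/4$ and $1/\cos\theta\le 1+\tan\theta$, the bound becomes $\frac{\sigma_{k+1}+\eps\Delta}{\sigma_{k+1}+3\Delta}\,t+\frac{\eps\Delta}{\sigma_{k+1}+3\Delta}$, which is rewritten as a weighted mean of $\eps$ and $\frac{\sigma_{k+1}+\eps\Delta}{\sigma_{k+1}+2\Delta}\,t$ and hence at most their maximum; then $\frac{\sigma_{k+1}+\eps\Delta}{\sigma_{k+1}+2\Delta}\le\max\left(\frac{\sigma_{k+1}}{\sigma_{k+1}+\Delta},\eps\right)$ again because the left side is a weighted mean of the two entries, and finally $\frac{\sigma_{k+1}}{\sigma_{k+1}+\Delta}\le\left(\frac{\sigma_{k+1}}{\sigma_{k+1}+4\Delta}\right)^{1/4}=(\sigma_{k+1}/\sigma_k)^{1/4}$, which is where the fourth root actually comes from (your calculus inequality $4\rho/(3+\rho)\le\rho^{1/4}$ is correct but is only a degenerate, noise-free shadow of this step). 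To repair your write-up, replace the "absorb into the max" step by this weighted-mean argument, or equivalently prove directly that $\frac{4\rho}{3+\rho}+\frac{(1-\rho)\eps(1+1/t)}{3+\rho}\le\max\left(\eps/t+\frac{4\rho}{3+\rho},\ \max(\eps,\rho^{1/4})\right)$ in the form the paper uses; bounding each summand against the final target separately cannot work.
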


\begin{proof}
    Let $\Pi^*$ be the matrix projecting onto the smallest $k$
    principal angles of $X$, so that
    \[
    \tan \theta_k(U, X) = \max_{\substack{\norm{w}_2 = 1\\\Pi^* w = w}} \frac{\norm{V^\trans Xw}}{\norm{U^\trans Xw}}.
    \]
    We have that
  \begin{align}
    \tan \theta_k(U, AX+G) &= \min_{\Pi \in \cP_k} \max_{\substack{\norm{w}_2 = 1\\\Pi w = w}} \frac{\norm{V^\trans (AX+G)w}}{\norm{U^\trans (AX+G)w}}\notag\\
    &\leq \max_{\substack{\norm{w}_2 = 1\\\Pi^* w = w}} \frac{\norm{V^\trans AXw} + \norm{V^\trans Gw}}{\norm{U^\trans AXw} - \norm{U^\trans Gw}}\notag\\
    &\leq \max_{\substack{\norm{w}_2 = 1\\\Pi^* w = w}} \frac{1}{\norm{U^\trans Xw}} \cdot \frac{\sigma_{k+1}\norm{V^\trans Xw} + \norm{V^\trans Gw}}{\sigma_k - \norm{U^\trans Gw}/\norm{U^\trans Xw}}\label{eq:3}
  \end{align}
  Define $\Delta = (\sigma_k - \sigma_{k+1}) / 4$.  By the assumption
  on $G$,
  \[
  \max_{\substack{\norm{w}_2 = 1\\\Pi^* w = w}} \frac{\norm{U^\trans Gw}}{\norm{U^\trans Xw}} \leq \norm{U^\trans G} / \cos \theta_k(U, X)  \leq (\sigma_k - \sigma_{k+1}) / 4 = \Delta.
  \]
  Similarly, and using that $1/\cos \theta \leq 1 + \tan \theta$ for
  any angle $\theta$,
  \[
  \max_{\substack{\norm{w}_2 = 1\\\Pi^* w = w}} \frac{\norm{V^\trans Gw}}{\norm{U^\trans Xw}} \leq \norm{G} / \cos \theta_k(U, X) \leq \eps \Delta (1 + \tan \theta_k(U, X)).
  \]
  Plugging back into~\eqref{eq:3} and using $\sigma_k = \sigma_{k+1} + 4\Delta$,
  \begin{align*}
    \tan \theta_k(U, AX+G) &\leq
\max_{\substack{\norm{w}_2 = 1\\\Pi^* w = w}} \frac{\norm{V^\trans Xw}}{\norm{U^\trans Xw}} \cdot \frac{\sigma_{k+1} }{\sigma_{k+1} + 3 \Delta} + \frac{\eps \Delta  (1 + \tan \theta_k(U, X))}{\sigma_{k+1}+3\Delta}.\\
& = \frac{\sigma_{k+1} + \eps \Delta}{\sigma_{k+1} + 3\Delta} \tan \theta_k(U, X) + \frac{\eps \Delta}{\sigma_{k+1} + 3 \Delta}\\
& = (1 - \frac{\Delta}{\sigma_{k+1} + 3\Delta}) \frac{\sigma_{k+1} + \eps \Delta}{\sigma_{k+1} + 2\Delta} \tan \theta_k(U, X) + \frac{\Delta}{\sigma_{k+1} + 3 \Delta} \eps\\
&\leq \max(\eps, \frac{\sigma_{k+1} + \eps \Delta}{\sigma_{k+1} + 2\Delta} \tan \theta_k(U, X))
  \end{align*}
  where the last inequality uses that the weighted mean of two terms
  is less than their maximum.  Finally, we have that
  \[
  \frac{\sigma_{k+1} + \eps \Delta}{\sigma_{k+1} + 2\Delta} \leq \max ( \frac{\sigma_{k+1}}{\sigma_{k+1} + \Delta}, \eps)
  \]
  because the left hand side is a weighted mean of the components on
  the right.  Since $\frac{\sigma_{k+1}}{\sigma_{k+1} + \Delta} \leq
  (\frac{\sigma_{k+1}}{\sigma_{k+1} + 4\Delta} )^{1/4} =
  (\sigma_{k+1}/\sigma_k)^{1/4}$, this gives the result.
\end{proof}

We can inductively apply the previous lemma to get the following general
convergence result.

\begin{theorem}
  \theoremlabel{convergence} Let $U$ represent the top $k$
  eigenvectors of the matrix $A$ and $\gamma = 1 -
  \sigma_{k+1}/\sigma_k$.  Suppose that the initial subspace $X_0$ and
  noise $G_\ell$ is such that
  \begin{align*}
    5\norm{U^\trans G_\ell} &\leq (\sigma_k - \sigma_{k+1}) \cos \theta_k(U, X_0)\\
    5\norm{G_\ell} &\leq \eps (\sigma_k - \sigma_{k+1})
  \end{align*}
  at every stage $\ell$, for some $\eps < 1/2$.  Then there exists an
  $L \lesssim \frac{1}{\gamma}\log\left(\frac{\tan \theta_k(U,
X_0)}{\eps}\right)$
  such that for all $\ell \geq L$ we have $\tan \theta(U, X_L) \leq
  \eps$.
\end{theorem}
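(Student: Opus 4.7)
The plan is to iterate Lemma \lemmaref{decrease} while maintaining the invariant that the principal angle $\theta_k(U, X_\ell)$ is non-increasing in $\ell$, i.e.\ $\cos\theta_k(U, X_\ell) \geq \cos\theta_k(U, X_0)$. This monotonicity is the bridge between the theorem's hypothesis---which controls $\norm{U^\trans G_\ell}$ only in terms of the \emph{initial} angle $\theta_k(U, X_0)$---and the per-step lemma precondition, which demands control in terms of the \emph{current} angle $\theta_k(U, X_\ell)$.

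To carry out the induction, I would first check that the hypotheses of Lemma \lemmaref{decrease} hold at iteration $\ell$. The theorem uses factor $5$ where the lemma uses $4$, leaving a $4/5$ slack; combined with the inductive cosine bound this gives
\[
4\norm{U^\trans G_\ell} \leq \tfrac{4}{5}(\sigma_k-\sigma_{k+1})\cos\theta_k(U, X_0) \leq (\sigma_k-\sigma_{k+1})\cos\theta_k(U, X_\ell),
\]
and analogously $4\norm{G_\ell} \leq \eps'(\sigma_k-\sigma_{k+1})$ with $\eps' = 4\eps/5 < 1$. The lemma then yields $\tan\theta_k(U, X_{\ell+1}) \leq \max(\eps', \alpha \tan\theta_k(U, X_\ell))$ where $\alpha = \max(\eps', (\sigma_{k+1}/\sigma_k)^{1/4}) < 1$. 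Re-establishing monotonicity is a short case split: if $\tan\theta_k(U, X_\ell) \geq \eps$, the new tangent is bounded by $\alpha\tan\theta_k(U, X_\ell) \leq \tan\theta_k(U, X_\ell)$; if $\tan\theta_k(U, X_\ell) < \eps$, the new tangent is at most $\max(\eps', \alpha\eps) \leq \eps$, so the $\eps$-threshold is absorbing. Both cases preserve the invariant, closing the induction.

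Given the one-step contraction $\tan\theta_k(U, X_{\ell+1}) \leq \alpha \tan\theta_k(U, X_\ell)$ (valid as long as the tangent still exceeds $\eps$), iterating yields $\tan\theta_k(U, X_L) \leq \alpha^L \tan\theta_k(U, X_0)$, which falls below $\eps$ once $L \geq \log(\tan\theta_k(U, X_0)/\eps)/\log(1/\alpha)$. Using the elementary inequality $-\log(1-\gamma) \geq \gamma$ together with $\alpha \leq (1-\gamma)^{1/4}$ (the regime $\alpha = \eps$ only makes convergence faster) delivers the advertised bound $L = O(\gamma^{-1}\log(\tan\theta_k(U, X_0)/\eps))$. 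I do not expect any substantive obstacle beyond the single-step lemma; the one delicate point is that the cosine invariant and the lemma hypothesis feed into each other and must be proven in a joint induction rather than separately, which is precisely why the factor-$5$ slack in the theorem statement is needed as insurance.
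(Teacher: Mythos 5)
Your proposal is correct and follows essentially the same route as the paper: a joint induction in which the factor-$5$ versus factor-$4$ slack lets the hypothesis (stated at $X_0$) imply the per-step hypothesis of \lemmaref{decrease} at $X_\ell$, followed by geometric contraction at rate $\delta=\max(\eps,(\sigma_{k+1}/\sigma_k)^{1/4})$ and the observation that $\log(1/\delta)\gtrsim\gamma$. One small wording fix: the invariant cannot be literal monotonicity $\cos\theta_k(U,X_\ell)\ge\cos\theta_k(U,X_0)$, since once the tangent drops below $\eps$ it may bounce back above a tiny initial angle; the invariant your case split actually establishes, and the one the paper uses, is $\tan\theta_k(U,X_\ell)\le\max(\eps,\tan\theta_k(U,X_0))$, which together with $\eps<1/2$ still yields $\cos\theta_k(U,X_\ell)\ge\tfrac45\cos\theta_k(U,X_0)$ (the paper's version of this is the $\tfrac78$ bound), so the slack covers exactly this case.
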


\begin{proof}[Proof of \theoremref{convergence}]
  We will see that at every stage $\ell$ of the algorithm,
  \[
  \tan \theta_k(U, X_\ell) \leq \max(\eps, \tan \theta_k(U, X_0))
  \]
  which implies for $\eps \leq 1/2$ that
  \[
  \cos \theta_k(U, X_\ell) \geq \min(1 - \eps^2/2, \cos \theta_k(U, X_0)) \geq \frac{7}{8}\cos \theta_k(U, X_0)
  \]
  so Lemma~\ref{l:decrease} applies at every stage.  This means that
  \[
  \tan \theta_k(U, X_{\ell+1}) = \tan \theta_k(U, AX_{\ell} + G) \leq \max(\eps,
  \delta \tan \theta_k(U, X_{\ell}))
  \]
  for $\delta = \max(\eps, (\sigma_{k+1}/\sigma_k)^{1/4})$.  After
  \[
  L = \log_{1/\delta} \frac{\tan \theta_k(U, X_0)}{\eps}
  \]
  iterations the tangent will reach $\eps$ and remain there.
  Observing that
  \[
  \log (1/\delta) \gtrsim \min(\log (1/\eps), \log
  (\sigma_k/\sigma_{k+1})) \geq \min(1, \log \frac{1}{1 - \gamma}) \geq \min(1, \gamma) = \gamma
  \]
  gives the result.
\end{proof}

\subsection{Random initialization}

The next lemma essentially follows from bounds on the smallest singular value
of gaussian random matrices~\cite{RV09}.

\begin{lemma}\label{c:randomsubspace}
\lemmalabel{randomsubspace}
For an arbitrary orthonormal $U$ and random subspace
  $X$, we have
  \[
  \tan \theta_k(U, X) \leq \tau \frac{\sqrt{d}}{\sqrt{p} - \sqrt{k-1}}
  \]
  with all but $\tau^{-\Omega(p + 1 - k)} + e^{-\Omega(d)}$ probability.
\end{lemma}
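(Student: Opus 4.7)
My plan is to reduce the problem to controlling extreme singular values of two independent Gaussian matrices via rotational invariance, and then invoke standard non-asymptotic singular value bounds. Since the principal angle depends only on the range of $X$, and a uniformly random $p$-dimensional subspace of $\R^d$ has the same distribution as the range of a $d\times p$ matrix $G$ with i.i.d.\ $\cN(0,1)$ entries, I will replace $X$ by such $G$ throughout. By rotational invariance of the Gaussian law I may further assume $U$ consists of the first $k$ standard basis vectors, so $G$ decomposes into its top $k\times p$ block $G_U\defeq U^\trans G$ and its bottom $(d-k)\times p$ block $G_V\defeq V^\trans G$; these two blocks are independent Gaussian matrices.

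Next I will pick a convenient projection $\Pi$ in the closed-form expression from the preceding claim,
\[
\tan\theta_k(U,X)=\min_{\Pi\in\cP_k}\max_{\substack{\norm{w}_2=1\\\Pi w=w}}\frac{\norm{G_V w}}{\norm{G_U w}}\mper
\]
Since $p\ge k$ and $G_U$ has full row rank almost surely, I will take $\Pi=BB^\trans$, where $B\in\R^{p\times k}$ collects the right singular vectors of $G_U$. On this subspace $\norm{G_U w}\ge\sigma_{\min}(G_U)$ for every unit $w\in\range(B)$, while $\norm{G_V w}\le\norm{G_V}$, giving the clean upper bound $\tan\theta_k(U,X)\le\norm{G_V}/\sigma_{\min}(G_U)$.

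Finally I will handle numerator and denominator separately. For the numerator, standard Gaussian operator-norm concentration yields $\norm{G_V}\le O(\sqrt{d})$ except with $e^{-\Omega(d)}$ probability, since $G_V$ is $(d-k)\times p$ with $p\le d$. For the denominator I will invoke the Rudelson-Vershynin bound on the least singular value of a rectangular Gaussian matrix: for the $k\times p$ matrix $G_U$ with $p\ge k$,
\[
\Pr\Set{\sigma_{\min}(G_U)\le(\sqrt{p}-\sqrt{k-1})/\tau}\le (C/\tau)^{p-k+1}
\]
for a universal constant $C$.  A union bound, together with absorbing the constant into $\tau$, delivers the claimed inequality with overall failure probability $\tau^{-\Omega(p+1-k)}+e^{-\Omega(d)}$.

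The main obstacle I anticipate is precisely this lower bound on $\sigma_{\min}(G_U)$ rather than the routine upper bound on $\norm{G_V}$.  When $p=k$ the matrix $G_U$ is square and any naive Gaussian-tail approach to its smallest singular value yields only an $e^{-\Omega((p-k+1)^2)}$-type bound, which is vacuous in the square case and far too weak in the near-square regime $p\approx k$.  It is essential to use the polynomial-in-$\tau$ Rudelson-Vershynin estimate to obtain exactly the $\tau^{-\Omega(p+1-k)}$ tail advertised in the lemma, and this tail shape is what makes \corollaryref{random} usable all the way down to $p=k$.
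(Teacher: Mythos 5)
Your argument is correct and essentially identical to the paper's: both take the SVD of $U^\trans X$ for a Gaussian $X$, choose $\Pi$ to project onto the top $k$ right singular vectors so that $\tan\theta_k(U,X)\leq \norm{V^\trans X}/\sigma_{\min}(U^\trans X)$, and then combine the Rudelson--Vershynin lower bound on the least singular value of the $k\times p$ Gaussian block with the standard $O(\sqrt{d})$ operator-norm bound, rescaling $\tau$ at the end. The only cosmetic difference is that the paper bounds the numerator by $\norm{X}\lesssim\sqrt{d}$ rather than by $\norm{V^\trans X}$ directly, which changes nothing.
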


\begin{proof}
    Consider the singular value decomposition $U^\trans X = A \Sigma B^\trans $ of
    $U^\trans  X$.  Setting $\Pi$ to be matrix projecting onto the first $k$
    columns of $B$, we have that
    \[
    \tan \theta_k(U, X) \leq \max_{\substack{\norm{w}_2 = 1\\\Pi w  = w}} \frac{\norm{V^\trans Xw}}{\norm{U^\trans Xw}} \leq \norm{V^\trans X} \max_{\substack{\norm{w}_2 = 1\\\Pi w  = w}} \frac{1}{\norm{\Sigma B^\trans  w}} = \norm{V^\trans X} \max_{\substack{\norm{w}_2 = 1\\\supp(w) \in [k]}} \frac{1}{\norm{\Sigma w}} = \frac{\norm{V^\trans X}}{\sigma_k(U^\trans X)}.
    \]

    Let $X \sim N(0, I_{d \times p})$ represent the random subspace.
    Then $Y := U^\trans X \sim N(0, I_{k \times p})$.  By~\cite{RV09},
    for any $\eps$, the smallest singular value of $Y$ is at least
    $(\sqrt{p} - \sqrt{k-1})/\tau$ with all but $\tau^{-\Omega(p + 1 -
      k)} + e^{-\Omega(p)}$ probability.  On the other hand, $\norm{X}
    \lesssim \sqrt{d}$ with all but $e^{-\Omega(d)}$ probability.
    Hence
    \[
    \tan \theta_k(U, X) \lesssim \tau \frac{\sqrt{d}}{\sqrt{p} - \sqrt{k-1}}
    \]
    with the desired probability.  Rescaling $\tau$ gets the result.
\end{proof}

With this lemma we can prove the corollary that we stated in the introduction.

\begin{proof}[Proof of \corollaryref{random}]
  By Claim~\ref{c:randomsubspace}, with the desired probability we
  have $ \tan \theta_k(U, X_0) \leq \frac{\tau \sqrt{d}}{\sqrt{p} -
    \sqrt{k-1}}.  $ Hence $\cos \theta_k(U, X_0) \geq 1/(1 + \tan
  \theta_k(U, X_0)) \geq \frac{\sqrt{p} - \sqrt{k-1}}{2 \cdot \tau
    \sqrt{d}}$.  Rescale $\tau$ and apply
  Theorem~\ref{thm:convergence} to get that $\tan \theta_k(U, X_L)
  \leq \eps$.  Then
  $\norm{(I - X_L X_L^\trans)U} = \sin \theta_k(U, X_L) \leq \tan \theta_k(U, X_L)
  \leq \eps.$
\end{proof}

\section{Memory efficient streaming PCA}
\sectionlabel{streaming}

In the streaming PCA setting we receive a stream of samples
$z_1,z_2,\dots\in\R^d.$ Each sample is drawn i.i.d.~from an unknown
distribution~$\cD$ over $\R^d.$ Our goal is to compute the dominant
$k$ eigenvectors of the covariance matrix $A=\E_{z\sim\cD} zz^\trans.$
The challenge is to do this with small space, so we cannot store the
$d^2$ entries of the sample covariance matrix.  We would like to use
$O(dk)$ space, which is necessary even to store the output.
%
%

The streaming power method~(Figure~\ref{fig:SPM}, introduced
by~\cite{MCJ13}) is a natural algorithm that performs streaming PCA
with $O(dk)$ space.  The question that arises is how many samples it
requires to achieve a given level of accuracy, for various
distributions $\cD$.  Using our general analysis of the noisy power
method, we show that the streaming power method requires fewer samples
and applies to more distributions than was previously known.

We analyze a broad class of distributions:
\begin{definition}
A distribution $\cD$ over $\R^d$ is \emph{$(B,p)$-round} if for every
$p$-dimensional projection $P$ and all $t\ge 1$ we have
$\Pr_{z\sim\cD}\Set{\|z\|>t}\le\exp(-t)$ and $\Pr_{z\sim\cD}\Set{\|Pz\| >
t\cdot \sqrt{Bp/d}}\le
\exp(-t)\mper$
\end{definition}
The first condition just corresponds to a normalization of the samples
drawn from~$\cD.$ Assuming the first condition holds, the second
condition always holds with $B=d/p.$ For this reason our analysis in
principle applies to any distribution, but the sample complexity will
depend quadratically on $B$.

Let us illustrate this definition through the example of the spiked
covariance model studied by~\cite{MCJ13}. The spiked covariance model
is defined by an orthonormal basis $U\in\R^{d\times k}$ and a diagonal
matrix $\Lambda\in\R^{k\times k}$ with diagonal entries
$\lambda_1\ge\lambda_2\ge\cdots\ge\lambda_k>0.$ The distribution
$\cD(U,\Lambda)$ is defined as the normal distribution
$\mathrm{N}(0,(U\Lambda^2 U^\trans + \sigma^2\mathrm{Id}_{d\times
  d})/D)$ where $D=\Theta(d\sigma^2+\sum_i\lambda_i^2)$ is a
normalization factor chosen so that the distribution satisfies the
norm bound. Note that the the $i$-th eigenvalue of the covariance
matrix is $\sigma_i=(\lambda_i^2+\sigma^2)/D$ for $1\le i\le k$ and
$\sigma_i=\sigma^2/D$ for $i>k.$ We show in \lemmaref{spiked-round}
that the spiked covariance model $\cD(U,\Lambda)$ is indeed
$(B,p)$-round for $B=O(\frac{\lambda_1^2+\sigma^2}{\tr(\Lambda)/d +
  \sigma^2})$, which is constant for $\sigma \gtrsim \lambda_1$.

We have the following main theorem.
\begin{theorem}\label{thm:streamingpca}
  Let $\cD$ be a $(B,p)$-round distribution over~$\R^d$ with
  covariance matrix~$A$ whose eigenvalues are
  $\sigma_1\ge\sigma_2\ge\cdots\ge\sigma_d\ge0.$ Let $U\in\R^{d\times
    k}$ be an orthonormal basis for the eigenvectors corresponding to
  the first $k$ eigenvalues of $A.$ Then, the streaming power method
  \SPM returns an orthonormal basis $X\in\R^{d\times p}$ such that
  $\tan\theta(U,X)\le\epsilon$ with probability $9/10$ provided that
  \SPM receives $n$ samples from~$\cD$ for some $n$ satisfying
  \[
  n \le \tilde O\left( \frac{B^2\sigma_k k\log^2 d}
    {\eps^2(\sigma_k-\sigma_{k+1})^3d}\right)
  \]
  if $p = k + \Theta(k)$.  More generally, for all $p \geq k$ one can
  get the slightly stronger result
  \[
  n \le \tilde O\left( \frac{Bp\sigma_k\max\{1/\epsilon^2,Bp/(\sqrt{p}-\sqrt{k-1})^2\}\log^2 d}
    {(\sigma_k-\sigma_{k+1})^3d}\right)\mper
  \]
\end{theorem}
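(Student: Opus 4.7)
The plan is to cast \SPM as an instance of \NPM and then invoke \corollaryref{random}. The update $Y_\ell = A_\ell X_{\ell-1}$ is, after absorbing the scalar factor $T$ into the subsequent QR step, equivalent to $Y_\ell = A X_{\ell-1} + G_\ell$ with noise $G_\ell = (A_\ell/T - A)X_{\ell-1}$. Because \SPM reads a fresh block of $T = \lfloor n/L \rfloor$ samples for each $A_\ell$, the matrix $A_\ell/T$ is an unbiased empirical estimator of $A$ that is \emph{independent} of $X_{\ell-1}$, which is what makes the noise tractable. To invoke the corollary it suffices to show that at each iteration, with failure probability $o(1/L)$,
\[
\|G_\ell\| \lesssim \eps(\sigma_k - \sigma_{k+1})
\quad\text{and}\quad
\|U^\trans G_\ell\| \lesssim (\sigma_k - \sigma_{k+1})\frac{\sqrt{p}-\sqrt{k-1}}{\sqrt{d}},
\]
so that a union bound over $L = O\bigl(\tfrac{\sigma_k}{\sigma_k-\sigma_{k+1}}\log d\bigr)$ iterations and Claim~\ref{c:randomsubspace} for the random start $X_0$ deliver the stated overall probability $9/10$.

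The main tool is the rectangular matrix Bernstein inequality. Conditioning on $X_{\ell-1}$, write $G_\ell = \tfrac{1}{T}\sum_{i=1}^{T} Z_i$ where the $Z_i = (z_i z_i^\trans - A) X_{\ell-1}$ are i.i.d.\ centered $d \times p$ matrices. Bernstein controls $\|\sum Z_i\|$ in terms of the almost-sure maximum norm $R = \max_i \|Z_i\|$ and the variance parameter $V = \max\bigl(\|\sum \E Z_i Z_i^\trans\|,\,\|\sum \E Z_i^\trans Z_i\|\bigr)$; the same applies to $U^\trans G_\ell$ after multiplying through by $U^\trans$.

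The crux is translating the $(B,p)$-round hypothesis into bounds on $R$ and $V$. Since $X_{\ell-1} X_{\ell-1}^\trans$ is a $p$-dimensional projection, the hypothesis gives $\|X_{\ell-1}^\trans z\| \lesssim \sqrt{Bp/d}\,\log d$ and $\|z\| \lesssim \log d$ with high probability, whence $R \lesssim \sqrt{Bp/d}\,\log^2 d$. For the variance one expands $Z_i Z_i^\trans \approx \|X_{\ell-1}^\trans z\|^2\, zz^\trans$ and pulls the bounded scalar factor $\|X_{\ell-1}^\trans z\|^2 \lesssim Bp/d$ outside the expectation, leaving $\|\E[\|X^\trans z\|^2 zz^\trans]\| \lesssim (Bp/d)\,\sigma_1\,\polylog(d)$, so that $V \lesssim (Bp/d)\,\sigma_1\,\polylog(d)/T$. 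The analogous calculation for $U^\trans G_\ell$ gains an extra factor of $Bp/d$, because every unit vector in the $k$-dimensional range of $U$ is subordinate to a $p$-dimensional projection and hence also satisfies $|u^\trans z| \lesssim \sqrt{Bp/d}\,\log d$ under the $(B,p)$-round hypothesis.

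Solving the two Bernstein tail inequalities for the two required noise thresholds gives a per-iteration sample budget of the form
\[
T \gtrsim \frac{Bp\,\sigma_k}{(\sigma_k-\sigma_{k+1})^2}\cdot \max\!\Big(\tfrac{1}{d\eps^2},\ \tfrac{Bp}{d(\sqrt{p}-\sqrt{k-1})^2}\Big)\cdot \polylog(d),
\]
where the first entry of the max comes from the $\|G_\ell\|$ constraint and the second (with the additional factor of $B$) from the tighter $\|U^\trans G_\ell\|$ constraint. Multiplying by $L$ yields the second displayed bound of the theorem; specializing to $p = k+\Theta(k)$, for which $(\sqrt{p}-\sqrt{k-1})^2 = \Theta(k)$, collapses the max to $B/\eps^2$ in its dominant regime and recovers the simplified $B^2 k$-form. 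The principal obstacle is the second-moment calculation: a naive $\|zz^\trans\| \le \|z\|^2$ bound is far too crude, as it would destroy the linear-in-$d$ savings; the argument must exploit the fact that $X_{\ell-1}$ and $U$ both concentrate $z$'s energy into $p$-dimensional subspaces in the strong sense provided by $(B,p)$-roundness.
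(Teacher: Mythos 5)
Your overall route coincides with the paper's: write $Y_\ell = AX_{\ell-1}+G_\ell$ with $G_\ell=(\hat A_\ell - A)X_{\ell-1}$ for a fresh block of $T=\lfloor n/L\rfloor$ samples (so that the block is independent of $X_{\ell-1}$), bound $\norm{G_\ell}$ and $\norm{U^\trans G_\ell}$ per block by a matrix concentration inequality, feed the two thresholds into \corollaryref{random}/\theoremref{convergence}, and union bound over the $L$ blocks before multiplying by $L$. Where you diverge is the concentration step: the paper's \lemmaref{streaming-error} first \emph{truncates} the distribution (zeroing any sample with $\|z\|$, $\|U^\trans z\|$ or $\|X^\trans z\|$ too large, and showing the induced covariance bias is $O(1/n^2)$ in spectral norm) and then applies the Hoeffding-type matrix Chernoff bound of \lemmaref{matrixb2}, which uses only the uniform bound $R\lesssim\sqrt{Bp/d}\,\log^2 n$ and no variance proxy. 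A Bernstein-style argument is a legitimate alternative, but your execution of it has genuine gaps.

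Concretely: (i) matrix Bernstein needs almost-sure bounds on $\norm{Z_i}$, and you only assert the norm bounds ``with high probability''; you still need the truncation step and must control the bias it introduces, which is where the paper's $O(1/n^2)$ additive terms come from. (ii) Your variance parameter is miscomputed. For $Z=(zz^\trans-A)X$ the rectangular Bernstein parameter is $\max\bigl(\norm{\E[ZZ^\trans]},\norm{\E[Z^\trans Z]}\bigr)$, and you only bound the first. The second equals $\norm{X^\trans(\E[\norm{z}^2zz^\trans]-A^2)X}$, which under the truncation is only $O(\sigma_1\log^2 n)$ --- larger than your claimed $(Bp/d)\sigma_1\,\polylog(d)$ by a factor of order $d/(Bp)$ --- so the per-block sample budget you state does not follow from your calculation. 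The argument can be repaired, because $(B,p)$-roundness in fact forces $\sigma_1\lesssim Bp/d$ (apply the projected-norm tail bound to a projection containing the top eigenvector and integrate), which restores the $Bp/d$ scaling; but you neither state nor prove this, and it is precisely the missing ingredient. (iii) Bookkeeping: your per-block requirement carries a factor $\sigma_k$, and multiplying by $L=O\bigl(\tfrac{\sigma_k}{\sigma_k-\sigma_{k+1}}\log d\bigr)$ then produces $\sigma_k^2/(\sigma_k-\sigma_{k+1})^3$, not the theorem's single $\sigma_k/(\sigma_k-\sigma_{k+1})^3$; in the paper the per-block condition is $T\gtrsim Bp\max\{1/\eps^2,\,Bp/(\sqrt p-\sqrt{k-1})^2\}\log d\cdot\polylog(T)/\bigl((\sigma_k-\sigma_{k+1})^2 d\bigr)$, with $\sigma_k$ entering only through $L$, so ``multiplying by $L$ yields the second displayed bound'' is not correct as you have written it.
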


Instantiating with the spiked covariance model gives the following:

\begin{corollary}\corollarylabel{spiked}
In the spiked covariance model $\cD(U,\Lambda)$ 
the conclusion of \theoremref{streamingpca} holds 
for $p = 2k$ with
\[
n=\tilde O\left(
\frac{(\lambda_1^2+\sigma^2)^2(\lambda_k^2+\sigma^2)}
{\eps^2\lambda_k^6} dk
\right)\mper
\]
\end{corollary}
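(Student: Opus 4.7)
My proof plan is to instantiate \theoremref{streamingpca} with the specific parameters of the spiked covariance model $\cD(U,\Lambda)$ and simplify. Since $p = 2k$ falls in the regime $p = k + \Theta(k)$, the relevant bound is
\[
n \le \tilde O\paren{\frac{B^2\sigma_k k\log^2 d}{\eps^2(\sigma_k-\sigma_{k+1})^3 d}},
\]
so the task reduces to plugging in the correct values of $B$, $\sigma_k$, and $\sigma_k - \sigma_{k+1}$.

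First I would invoke \lemmaref{spiked-round} to certify that $\cD(U,\Lambda)$ is $(B,p)$-round with $B = O\paren{\frac{\lambda_1^2 + \sigma^2}{\tr(\Lambda^2)/d + \sigma^2}}$, as previewed in the paragraph preceding the theorem. Intuitively, for any $p$-dimensional projection $P$, one has $\Ex{\norm{Pz}^2} = \tr\paren{P(U\Lambda^2 U^\trans + \sigma^2 I)P^\trans}/D \lesssim p(\lambda_1^2+\sigma^2)/D$, and standard Gaussian concentration converts this into the claimed tail bound on $\norm{Pz}/\sqrt{Bp/d}$; the first norm bound comes from the same calculation with $P = I$ after normalizing by $D$.

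Next I would read the eigenvalues of $A = (U\Lambda^2 U^\trans + \sigma^2 I)/D$ directly, giving $\sigma_k = (\lambda_k^2+\sigma^2)/D$, $\sigma_{k+1} = \sigma^2/D$, and $\sigma_k - \sigma_{k+1} = \lambda_k^2/D$, where $D = \Theta\paren{d\sigma^2 + \sum_i \lambda_i^2} = \Theta\paren{d(\sigma^2 + \tr(\Lambda^2)/d)}$ is the normalization factor introduced in the definition of the model.

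The final step is algebraic substitution. The bound becomes
\[
n \le \tilde O\paren{\frac{B^2\,(\lambda_k^2+\sigma^2)\,D^2\,k\log^2 d}{\eps^2\,\lambda_k^6\,d}}\mper
\]
The key cancellation — really the only non-mechanical step — is that $B\cdot D/d = O(\lambda_1^2+\sigma^2)$, so $B^2 D^2 = O\paren{d^2(\lambda_1^2+\sigma^2)^2}$, and the stated corollary follows immediately. There is no substantive obstacle: the argument is purely bookkeeping once \theoremref{streamingpca} and \lemmaref{spiked-round} are in hand, and the only source of possible confusion is correctly tracking the normalization factor $D$ through the $B^2/(\sigma_k-\sigma_{k+1})^3$ expression.
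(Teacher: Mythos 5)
Your proposal is correct and follows essentially the same route as the paper: invoke \lemmaref{spiked-round} for the roundness parameter $B$, read off $\sigma_k$, $\sigma_{k+1}$, and $D$ for the spiked model, and substitute into the first bound of \theoremref{streamingpca} with $p=2k$. If anything, your bookkeeping is cleaner than the paper's own write-up, which records $B$ as $(\lambda_1^2+\sigma^2)/D$ rather than $\Theta\paren{d(\lambda_1^2+\sigma^2)/D}$ as \lemmaref{spiked-round} actually gives; your cancellation $B\cdot D/d = O(\lambda_1^2+\sigma^2)$ is the correct accounting that yields the stated $dk$ factor.
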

When $\lambda_1=O(1)$ and $\lambda_k=\Omega(1)$ this becomes
$n = \tilde O\left(\frac{\sigma^6 + 1}{\eps^2}\cdot dk\right)\mper$

We can apply \theoremref{streamingpca} to all distributions that have
exponentially concentrated norm by setting $B = d/p$.  This gives the
following result.

\begin{corollary}\corollarylabel{spiked}
  Let $\cD$ be any distribution scaled such that $\Pr_{z \sim
    \cD}[\norm{z} > t] \leq \exp(-t)$ for all $t \geq 1$.  Then the
  conclusion of \theoremref{streamingpca} holds for $p=2k$ with
  \[
  n=\tilde O\left( \frac{\sigma_k}{\eps^2 k(\sigma_k - \sigma_{k+1})^3} \cdot
d\right)\mper
  \]
\end{corollary}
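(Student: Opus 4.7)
The plan is to show that any distribution satisfying only the norm tail bound $\Pr[\|z\| > t] \leq e^{-t}$ is automatically $(d/p, p)$-round, and then to invoke \theoremref{streamingpca} with $B = d/p$ and $p = 2k$. This reduces the corollary to a routine substitution into the preceding theorem.

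First I would verify roundness for $B = d/p$. The first condition in the definition of $(B,p)$-round is exactly the given hypothesis. For the second condition, I use that any $p$-dimensional orthogonal projection $P$ is a contraction, so $\|Pz\| \leq \|z\|$. With $B = d/p$ the threshold $t \sqrt{Bp/d}$ equals $t$, so
\[
\Pr\Set{\|Pz\| > t\sqrt{Bp/d}} \;\leq\; \Pr\Set{\|z\| > t} \;\leq\; e^{-t}.
\]
Hence $\cD$ is $(d/p, p)$-round without any further structural assumption.

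Second, with $p = 2k$ we have $p = k + \Theta(k)$, so the first (stronger) sample-complexity bound of \theoremref{streamingpca} applies, and $B = d/(2k) = \Theta(d/k)$. Substituting,
\[
n \;\leq\; \tilde O\!\left(\frac{B^2 \sigma_k k \log^2 d}{\eps^2(\sigma_k - \sigma_{k+1})^3 d}\right) \;=\; \tilde O\!\left(\frac{(d/k)^2 \, \sigma_k \, k \log^2 d}{\eps^2(\sigma_k - \sigma_{k+1})^3 d}\right) \;=\; \tilde O\!\left(\frac{\sigma_k}{\eps^2 k(\sigma_k - \sigma_{k+1})^3}\cdot d\right),
\]
matching the claimed bound.

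There is no real obstacle: the corollary is essentially just plugging the worst-case value $B = d/p$ into \theoremref{streamingpca}. The only thing worth flagging is that one uses the first branch of the sample complexity rather than the more refined second branch, which is legitimate precisely because $p = 2k$ lies in the regime $p = k + \Theta(k)$. This is also why the paper states this statement as the generic ``catch-all'' consequence: any improvement beyond it must come from a structural bound on how concentrated $\cD$ is within a low-dimensional subspace (i.e.\ a smaller value of $B$), as in the spiked-covariance calculation.
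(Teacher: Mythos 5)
Your proof is correct and matches the paper's own (very brief) argument: the paper likewise observes that the norm tail bound makes any distribution $(d/p,p)$-round, since a projection can only shrink the norm, and then substitutes $B=d/p$ with $p=2k$ into the first bound of \theoremref{streamingpca}. Your explicit contraction step and the arithmetic $B^2k/d = \Theta(d/k)$ are exactly what the paper leaves implicit.
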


If the eigenvalues follow a power law, $\sigma_j \approx j^{-c}$ for a
constant $c > 1/2$, this gives an $n = \tilde O(k^{2c+2}d/\eps^2)$
bound on the sample complexity.

\subsection{Error term analysis}

Fix an orthonormal basis $X\in\R^{d\times k}.$ Let $z_1,\dots,z_n\sim\cD$ be
samples from a distribution~$\cD$ with covariance matrix $A$ and consider the matrix
\[
G = \left(A - \hat A \right)X\mcom
\]
where $\hat A = \frac{1}{n}\sum_{i=1}^n z_i z_i^\trans$ is the empirical covariance
matrix on $n$ samples. Then, we have that 
$\hat A X = AX + G.$
In other words, one update step of the power method executed on $\hat A$ can
be expressed as an update step on $A$ with noise matrix~$G.$ This simple
observation allows us to apply our analysis of the noisy power method to this
setting after obtaining suitable bounds on $\|G\|$ and $\|U^\trans G\|.$

\begin{lemma}\lemmalabel{streaming-error}
  Let $\cD$ be a $(B,p)$-round distribution with covariance
  matrix~$M$. Then with all but $O(1/n^2)$ probability,
\[
\norm{G} \lesssim \sqrt{\frac{Bp \log^4n \log d}{d n}} + \frac{1}{n^2}
\quad\text{and}\quad
\norm{U^\trans G} \lesssim \sqrt{\frac{B^2p^2 \log^4n \log d}{d^2 n}} + \frac{1}{n^2}
\]
\end{lemma}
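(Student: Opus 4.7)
The plan is to apply a non-commutative (matrix) Bernstein inequality to $n G = \sum_{i=1}^n W_i$, where $W_i = (z_i z_i^\trans - A) X$ are iid mean-zero $d\times p$ matrices (since $A = \E z z^\trans$). The main work is to use the $(B,p)$-roundness to obtain sharp per-term and variance bounds, both for $W_i$ and for $U^\trans W_i$.

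First I would define a good event $\cE$ that simultaneously controls all samples: by the $(B,p)$-round condition applied to the rank-$p$ projection $X X^\trans$, and separately to a rank-$p$ projection containing the range of $U$ (which exists since $k\le p$), together with the tail hypothesis $\Pr[\|z\|>t]\le e^{-t}$, for a large enough constant $C$ each sample $z_i$ satisfies
$\|z_i\|\le C\log n$, $\|X^\trans z_i\|\le C\sqrt{Bp/d}\,\log n$ and $\|U^\trans z_i\|\le C\sqrt{Bp/d}\,\log n$ with probability $1-O(1/n^3)$. A union bound gives $\Pr[\cE^c] = O(1/n^2)$.

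On $\cE$ the per-term operator norm is $\|W_i\|\le \|z_i\|\cdot\|X^\trans z_i\|+\|A\|\lesssim \sqrt{Bp/d}\,\log^2 n$ (using $\|A\|\le\E\|z\|^2 = O(1)$), and analogously $\|U^\trans W_i\|\lesssim (Bp/d)\log^2 n$. For the variance,
$\E[W_i W_i^\trans]\preceq \E[\|X^\trans z\|^2\, z z^\trans]$, which splitting the expectation into $\cE_z$ and its complement is $\preceq (Bp/d)\log^2 n\cdot A$ up to a negligible $O(1/n^3)$ tail; hence $\|\E[W_i W_i^\trans]\|\lesssim (Bp/d)\log^2 n\cdot\|A\|$. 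The key observation for the second bound is that $\|U^\trans A U\|\le \tr(U^\trans A U)=\E\|U^\trans z\|^2\lesssim Bp/d$, which follows from the roundness bound on $\|Pz\|$ applied to any rank-$p$ projection $P\succeq U U^\trans$. Consequently $\|\E[U^\trans W_i W_i^\trans U]\|\lesssim (Bp/d)^2\log^2 n$, one factor of $Bp/d$ better than the naive bound.

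Plugging these into matrix Bernstein yields $\|\sum W_i\|\lesssim \sqrt{n(Bp/d)\log^2 n\,\log d}+\sqrt{Bp/d}\,\log^2 n\log d$ and $\|\sum U^\trans W_i\|\lesssim \sqrt{n(Bp/d)^2\log^2 n\,\log d}+(Bp/d)\log^2 n\log d$; dividing by $n$ gives the claimed bounds (up to $\log$-factor slack) and the failure probability from $\cE^c$ together with any truncation bias is absorbed into the additive $1/n^2$ term. The main obstacle, and the step where the $(B,p)$-roundness is genuinely used, is the variance estimate for $U^\trans G$: one has to invoke roundness \emph{twice}---once pointwise to bound $\|X^\trans z\|^2$ and once in expectation to bound $\|U^\trans A U\|$---in order to recover the full $(Bp/d)^2$ scaling rather than only $Bp/d$, and this is where the hypothesis $k\le p$ is crucial so that $UU^\trans$ can be majorized by a rank-$p$ projection to which the round condition applies.
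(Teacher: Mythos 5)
Your proposal is correct and is essentially the paper's own argument: the paper likewise writes $G=(A-\hat A)X$ as an average of i.i.d.\ terms, truncates the samples using the $(B,p)$-round condition to enforce exactly your three per-sample bounds $\|z\|\lesssim \log n$, $\|X^\trans z\|\lesssim \sqrt{Bp/d}\,\log n$, $\|U^\trans z\|\lesssim \sqrt{Bp/d}\,\log n$ (with the truncation bias and failure probability producing the additive $O(1/n^2)$ terms), and then applies a matrix concentration bound to the truncated sum. The only real difference is that the paper invokes a bounded matrix Chernoff inequality that needs only the per-term norm bounds $\|\tilde z_i\tilde z_i^\trans X\|\lesssim \sqrt{Bp/d}\,\log^2 n$ and $\|U^\trans \tilde z_i\tilde z_i^\trans X\|\lesssim (Bp/d)\log^2 n$, so your variance estimates are not needed there---and note that your observation $\|U^\trans A U\|\lesssim Bp/d$ is in fact also what you should use to control the centering term $\|U^\trans A X\|$ in the per-term bound for $U^\trans W_i$ (the cruder $\|A\|=O(1)$ is not enough in the regime $Bp/d\ll 1$); with that, matrix Bernstein gives the same final bound.
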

\begin{proof}
We will use a matrix Chernoff bound to show that
\begin{enumerate}
\item $\Pr\Set{\|G\| > C t\log(n)^2\sqrt{Bp/d}+O(1/n^2)}\le d\exp(-t^2 n) + 1/n^2$
\item $\Pr\Set{\|U^\trans G\| > C t\log(n)^2 Bp/d + O(1/n^2)}\le d\exp(-t^2 n) + 1/n^2$
\end{enumerate}
setting $t = \sqrt{\frac{2}{n}\log d}$ gives the result.  However,
matrix Chernoff inequality requires the distribution to satisfy a norm
bound with probability~$1.$ We will therefore create a closely related
distribution~$\tilde\cD$ that satisfies such a norm constraint and is
statistically indistinguishable up to small error on~$n$ samples. We
can then work with $\tilde\cD$ instead of $\cD.$ This truncation step
is standard and works because of the concentration properties
of~$\cD.$

Indeed, let $\tilde \cD$ be the distribution obtained from $\cD$ be replacing 
a sample $z$ with $0$ if 
\[
\|z\|> C\log(n)\quad\text{ or }\quad\|U^\trans z\|
\ge C\log(n)\sqrt{Bp/d}\quad\text{ or }\quad\|z^\trans X\|>
C\log(n)\sqrt{Bp/d}\mper
\]
For sufficiently large constant $C,$ it follows from the definition of $(B,p)$-round that
the probability that one or more of $n$ samples from $\cD$ get zeroed out is
at most $1/n^2.$ In particular, the two product distributions $\cD^{(n)}$ and
$\tilde \cD^{(n)}$ have total variation distance at most $1/n^2.$
Furthermore, we claim that the covariance matrices of the two distributions
are at most $O(1/n^2)$ apart in spectral norm. 
Formally,
\[
\Norm{\E_{z\sim\cD}zz^\trans -\E_{\tilde z\sim\tilde\cD}\tilde z\tilde
z^\trans} \le \frac1{n^2}\cdot O\left(\int_{t\ge1}
C^2t^2\log^2(n)\exp(-t)\mathrm{d}t \right)
\le O(1/n^2)
\mper
\]
In the first inequality we use the fact that $z$ only gets zeroed out
with probability $1/n^2.$ Conditional on this event, the norm of $z$
is larger than $tC\log(n)$ with probability at most
$n^2\exp(-\frac{1}{2}tC\log n) \leq \exp(-t).$ Assuming the norm is at
most $tC\log(n)$ we have $\Norm{zz^\trans}\le t^2C^2\log^2(n)$ and
this bounds the contribution to the spectral norm of the difference.

Now let $\tilde G$ be the error matrix defined as $G$ except that we replace
the samples $z_1,\dots,z_n$ by $n$ samples~$\tilde z_1,\dots,\tilde z_n$ from
the truncated distribution~$\tilde\cD.$ 
By our preceding discussion, it now suffices to show that 
\begin{enumerate}
\item $\Pr\Set{\|\tilde G\| > C t\log^2(n)\sqrt{Bp/d}}\le d\exp(-t^2 n)$
\item $\Pr\Set{\|U^\trans \tilde G\| > C t\log^2(n) Bp/d}\le d\exp(-t^2 n)$
\end{enumerate}
To see this, let $S_i = \tilde z_i\tilde z_i^\trans X.$ We have
\[
\Norm{S_i} \le \|\tilde z_i\|\cdot\Norm{\tilde z_i^\trans X} \le
C^2\log^2(n)\cdot\sqrt{Bp/d}
\]
Similarly,
\[
\Norm{U^\trans S_i} \le \|U^\trans \tilde z_i\|\cdot\Norm{\tilde z_i^\trans X} \le
C^2\log^2(n)\cdot \frac{Bp}{d}\mper
\]
The claims now follow directly from the matrix Chernoff bound
stated in~\lemmaref{matrixb2}.
\end{proof}

\subsection{Proof of \theoremref{streamingpca}} 
Given \lemmaref{streaming-error} we will choose $n$ such that the
error term in each iteration satisfies the assumptions of
\theoremref{convergence}. Let $G_\ell$ denote the instance of the
error term~$G$ arising in the $\ell$-th iteration of the algorithm. We
can find an $n$ satisfying
\[
\frac{n}{\log(n)^4} =
O\left(\frac{Bp\max\Set{1/\epsilon^2,Bp/(\sqrt{p}-\sqrt{k-1})^2}\log d}{(\sigma_k-\sigma_{k+1})^2d}\right)
\]
such that by \lemmaref{streaming-error} we have that with probability
$1-O(1/n^2),$
\[
\|G_\ell\|\le\frac{\epsilon(\sigma_k-\sigma_{k+1})}{5}
\quad\text{and}\quad
\|U^\trans G_\ell\|\le \frac{\sigma_k-\sigma_{k+1}}{5} \frac{\sqrt{p} - \sqrt{k-1}}{\sqrt{d}}\mper
\]
Here we used that by definition $1/n \ll \epsilon$ and $1/n \ll
\sigma_k-\sigma_{k+1}$ and so the $1/n^2$ term in \lemmaref{streaming-error}
is of lower order.

With this bound, it follows from \theoremref{convergence} that after
$L=O(\log(d/\epsilon)/(1-\sigma_{k+1}/\sigma_k))$ iterations we have
with probability $1-\max\{1,L/n^2\}$ that
$\tan\theta(U,X_L)\le\epsilon.$ The over all sample complexity is therefore
\[
Ln = 
\tilde O\left(\frac{Bp\sigma_k\max\Set{1/\epsilon^2,Bp/(\sqrt{p}-\sqrt{k-1})^2}\log^2 d}{(\sigma_k-\sigma_{k+1})^3d}\right)
\mper
\]
Here we used that $1-\sigma_{k+1}/\sigma_k =
(\sigma_k-\sigma_{k+1})/\sigma_k.$ This concludes the proof of
\theoremref{streamingpca}.

\subsection{Proof of \lemmaref{spiked-round} and \corollaryref{spiked}}

\begin{lemma}\lemmalabel{spiked-round}
The spiked covariance model $\cD(U,\Lambda)$ is $(B,k)$-round for
$B=O(\frac{\lambda_1^2+\sigma^2}{\tr(\Lambda)/d + \sigma^2}).$
\end{lemma}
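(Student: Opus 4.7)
The plan is to verify the two conditions of $(B,k)$-roundness by expressing $\|z\|^2$ and $\|Pz\|^2$ as quadratic forms in a standard Gaussian vector and invoking a chi-squared / Hanson--Wright tail bound. Write $z = \Sigma^{1/2} g/\sqrt{D}$ with $g \sim \mathrm{N}(0, I_d)$ and $\Sigma = U\Lambda^2 U^\trans + \sigma^2 I$, so that $\|z\|^2 = g^\trans M_0 g$ for $M_0 = \Sigma/D$ and $\|Pz\|^2 = g^\trans M_P g$ for $M_P = \Sigma^{1/2} P \Sigma^{1/2}/D$.

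For the norm bound, the definition $D = \Theta(d\sigma^2 + \sum_i \lambda_i^2) = \Theta(\tr \Sigma)$ makes $\E\|z\|^2 = \tr(M_0) = \Theta(1)$. We have $\|M_0\| = (\lambda_1^2 + \sigma^2)/D$, and using $\tr(\Sigma^2) \le \|\Sigma\| \tr(\Sigma)$ we get $\|M_0\|_F^2 \le (\lambda_1^2+\sigma^2)/D$. Both parameters are $O(1)$ because $D \gtrsim \lambda_1^2 + \sigma^2$ (since $d\sigma^2 \ge \sigma^2$ and $\sum_i\lambda_i^2 \ge \lambda_1^2$). The standard Laurent--Massart tail bound for weighted chi-squared variables then gives $\Pr[\|z\|^2 > t^2] \le \exp(-ct^2) \le \exp(-t)$ for every $t \ge 1$ after absorbing constants into the choice of the implicit constant in $D$.

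For the projection bound, fix a $k$-dimensional projection $P$. Decomposing $\Sigma$ yields
\[
\E\|Pz\|^2 \,=\, \frac{\tr(P\Sigma)}{D} \,=\, \frac{\tr(U^\trans P U \Lambda^2) + \sigma^2 \tr(P)}{D} \,\le\, \frac{\lambda_1^2 k + \sigma^2 k}{D} \,=\, \Theta\!\left(\frac{Bk}{d}\right),
\]
where the inequality uses that $U^\trans P U$ is PSD with eigenvalues in $[0,1]$ and $\tr(U^\trans P U) = \tr(PUU^\trans) \le k$, and the final equality is the definition of $B$. For the tail, $\|M_P\| \le \|\Sigma\|/D = (\lambda_1^2+\sigma^2)/D$ and, via $\tr(M_P^2) = \tr((P\Sigma)^2)/D^2 \le \|\Sigma\| \tr(P\Sigma)/D^2$, we obtain $\|M_P\|_F^2 \le k(\lambda_1^2+\sigma^2)^2/D^2$. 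Substituting the target deviation $u = t^2 Bk/d = \Theta(t^2 k(\lambda_1^2+\sigma^2)/D)$ into the Hanson--Wright exponents $\min(u^2/\|M_P\|_F^2,\, u/\|M_P\|)$ makes both terms of order $t^2 k \ge t$ for $t \ge 1$, which yields the required $\Pr[\|Pz\| > t\sqrt{Bk/d}] \le \exp(-t)$.

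The only real obstacle is bookkeeping: one must check that the algebraic identity $Bk/d \cdot D = \Theta(k(\lambda_1^2+\sigma^2))$ interacts with the bounds on $\|M_P\|$ and $\|M_P\|_F$ in such a way that both Hanson--Wright exponents grow like $t^2 k$ after substituting the target deviation, with no dimension-dependent slack. Once those factors match, the $\exp(-t)$ tail falls out directly.
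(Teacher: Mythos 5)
Your proposal is correct, but it takes a different route from the paper. The paper's proof is more elementary and structural: it writes $z = U\Lambda g + g'$ with $g\sim \mathrm{N}(0,1/D)^k$ and isotropic $g'\sim \mathrm{N}(0,\sigma^2/D)^d$, applies the triangle inequality $\|Pz\|\le\|\Lambda g\|+\|Pg'\|\le\lambda_1\|g\|+\|Pg'\|$, and uses rotational invariance of $g'$ so that $\|Pg'\|$ is just the norm of a $k$-dimensional Gaussian; scalar Gaussian norm tails then give $\|Pz\|^2 = O\bigl(t\,k(\lambda_1^2+\sigma^2)/D\bigr)$ with probability $1-e^{-t}$, and solving $\Theta\bigl(k(\lambda_1^2+\sigma^2)/D\bigr) = Bk/d$ yields $B$. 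You instead treat $\|Pz\|^2$ as the quadratic form $g^\trans \Sigma^{1/2}P\Sigma^{1/2}g/D$ and invoke Hanson--Wright/Laurent--Massart, bounding $\|M_P\|\le(\lambda_1^2+\sigma^2)/D$ and $\|M_P\|_F^2\le k(\lambda_1^2+\sigma^2)^2/D^2$; your algebra here is right (in particular $\tr(P\Sigma P\Sigma)\le\|\Sigma\|\tr(P\Sigma)$ and $\tr(U^\trans PU\Lambda^2)\le\lambda_1^2 k$), and the exponents indeed come out as $\Theta(t^2k)$, which dominates $t$ for $t\ge1$ after adjusting the constants hidden in $B$ and $D$ -- adjustments the $O(\cdot)$ in the statement permits. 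Your approach buys generality (it uses only Gaussianity and spectral/trace bounds on $P\Sigma$, not the low-rank-plus-isotropic structure or rotational invariance of the noise component, and it even delivers a stronger $e^{-\Omega(t^2k)}$ tail), at the price of heavier machinery; the paper's decomposition makes transparent why the projected norm scales like $k/d$ and needs nothing beyond one-dimensional and $k$-dimensional Gaussian norm concentration. One further small difference: you verify the first roundness condition $\Pr\{\|z\|>t\}\le e^{-t}$ explicitly via the same quadratic-form bound, whereas the paper folds it into the choice of the normalization $D$; both treatments are acceptable, though you should note (as you implicitly do) that rescaling $D$ by a constant does not affect $B$, since $B$ is a scale-free ratio.
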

\begin{proof}
Note that an example $z\sim\cD(U,\Lambda)$ is distributed as $U\Lambda g + g'$
where $g\sim \mathrm{N}(0,1/D)^k$ is a standard gaussian and $g'\sim
\mathrm{N}(0,\sigma^2/D)^d.$ is a noise term. Recall, that $D$ is the
normalization term. Let $P$ be any projection operator onto a $k$-dimensional
space. Then,
\[
\|Pz\| = \|PU\Lambda g + Pg'\|
\le \|PU\Lambda g\| + \|Pg'\|
\le \|\Lambda g\| + \|Pg'\|
\le \lambda_1\|g\| + \|Pg'\|
\mper
\]
By rotational invariance of $g'$, we may assume that $P$ is the projection
onto the first $k$ coordinates. Hence, $\|Pg'\|$ is distributed like the norm
of $\mathrm{N}(0,\sigma^2/D)^k.$ 
Using standard tail bounds for the norm of a gaussian random variables, we can see that
$\|Pz\|^2= O(t(k\lambda_1^2 + k\sigma^2)/D)$ with
probability $1-\exp(-t).$ On the other hand,
$D = \Theta(\sum_{i=1}^k\lambda_i^2 + d\sigma^2).$ We can now solve for $B$
by setting 
\[
\Theta(\frac{k\lambda_1^2 + k\sigma^2}{\sum_{i=1}^k\lambda_i^2+d\sigma^2})
= \frac{Bk}{d}
\quad\Leftrightarrow\quad
B = \Theta(\frac{\lambda_1^2 + \sigma^2}{\frac{1}{d}\sum_{i=1}^k\lambda_i^2+\sigma^2})
\mper
\]
\end{proof}
\corollaryref{spiked} follows by plugging in the bound on $B$ and the
eigenvalues of the covariance matrix into our main theorem.
\begin{proof}[Proof of \corollaryref{spiked}]
In the spiked covariance model $\cD(U,\Lambda)$ we have 
\[
B = \frac{\lambda_1^2 + \sigma^2}{D}\mcom
\quad \sigma_k = \frac{\lambda_k^2 + \sigma^2}{D}\mcom
\quad \sigma_{k+1} = \frac{\sigma^2}D\mcom
\quad D = O(\tr(\Lambda^2)+d\sigma^2)\mper
\]
Hence,
\[
\frac{B^2\sigma_k}
{(\sigma_k-\sigma_{k+1})^3 d}
= \frac{(\lambda_1^2+\sigma^2)^2(\lambda_k^2 + \sigma^2)}{\lambda_k^6d}
\le \frac{(\lambda_1^2+\sigma^2)^3}{\lambda_k^6d}
\]
Plugging this bound into \theoremref{streamingpca} gives
\corollaryref{spiked}.
\end{proof}

\section{Privacy-preserving singular vector computation}
\sectionlabel{privacy}

In this section we prove our results about privacy-preserving singular vector
computation. We begin with a standard definition of differential privacy,
sometimes referred to as \emph{entry-level differential privacy}, as it
hides the presence or absence of a single entry.

\begin{definition}[Differential Privacy]
\definitionlabel{dp}
A randomized algorithm $M\colon\mathbb{R}^{d\times d'}\rightarrow R$ (where $R$ is some
arbitrary abstract range) is \emph{$(\epsilon,\delta)$-differentially private}
if for all pairs of matrices $A, A' \in \mathbb{R}^{d\times d'}$ differing in
only one entry by at most $1$ in absolute value, we have that for all subsets of 
the range $S \subseteq R,$ the algorithm satisfies:
$\Pr\Set{M(A) \in S} \leq
\exp(\epsilon)\Pr\Set{M(A') \in S} + \delta\mper$
\end{definition}
The definition is most meaningful when $A$ has entries in $[0,1]$ so that the
above definition allows for a single entry to change arbitrarily within this
range. However, this is not a requirement for us.
The privacy guarantee can be strengthened by decreasing $\epsilon>0.$

For our choice of $\sigma$ in \figureref{PPM} the algorithm satisfies 
$(\epsilon,\delta)$-differential privacy as follows easily
from properties of the Gaussian distribution. See, for example,~\cite{HardtR13} for a proof.
\begin{claim}
\PPM satisfies $(\epsilon,\delta)$-differential privacy.
\end{claim}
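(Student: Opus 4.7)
The plan is to view \PPM as the adaptive composition of $L$ single-iteration Gaussian mechanisms and then apply standard composition theorems. Fix two neighboring matrices $A, A' \in \R^{d \times d}$ differing in entry $(i,j)$ by at most $1$. I would think of iteration $\ell$ as a mechanism that takes $A$ (and the state $X_{\ell-1}$ from previous iterations) and outputs $Y_\ell = AX_{\ell-1} + G_\ell$; the subsequent QR step is a deterministic post-processing of $Y_\ell$ and therefore costs nothing for privacy, so it suffices to analyze the $Y_\ell$'s.

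The first step is the sensitivity calculation. Conditioned on $X_{\ell-1}$, the matrix $(A-A')X_{\ell-1}$ has a single nonzero row (row $i$), which equals at most the $j$-th row of $X_{\ell-1}$. Hence
\[
\|(A-A')X_{\ell-1}\|_F \;\leq\; \|X_{\ell-1}[j,:]\|_2 \;\leq\; \sqrt{p}\,\|X_{\ell-1}\|_\infty .
\]
The noise $G_\ell$ has i.i.d.\ $N(0,\|X_{\ell-1}\|_\infty^2\sigma^2)$ entries, which is spherical Gaussian noise in the $dp$-dimensional ambient space with per-coordinate scale $\|X_{\ell-1}\|_\infty\sigma$. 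The standard Gaussian mechanism then guarantees that releasing $Y_\ell$ (given $X_{\ell-1}$) is $(\epsilon_0,\delta_0)$-differentially private with $\epsilon_0 \lesssim \sqrt{p\log(1/\delta_0)}/\sigma$, uniformly in $X_{\ell-1}$ since the ratio of sensitivity to noise scale is independent of $\|X_{\ell-1}\|_\infty$.

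The second step is adaptive composition. Since the per-iteration guarantee holds for every choice of the intermediate state, the adaptive composition theorem for $(\epsilon_0,\delta_0)$-DP mechanisms (e.g.\ Dwork--Rothblum--Vadhan) yields that the $L$-fold composition is $(\epsilon,\delta)$-DP provided
\[
\epsilon \;\gtrsim\; \epsilon_0\sqrt{L\log(1/\delta)} \;\approx\; \frac{\sqrt{pL\log(1/\delta)}}{\sigma}.
\]
Substituting the algorithm's choice $\sigma = \epsilon^{-1}\sqrt{4pL\log(1/\delta)}$ and splitting $\delta$ across iterations and the composition slack gives exactly the required bound. Since the final output $X_L$ is a deterministic function of $(Y_1,\dots,Y_L)$, post-processing closes the argument.

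The only subtlety, and the place I would be most careful, is handling the dependence on $\|X_{\ell-1}\|_\infty$: the sensitivity and the noise scale both carry this factor, so the per-iteration privacy loss is data-independent and adaptive composition applies cleanly, but one must verify that the conditional argument is valid uniformly over all histories. For the constants one can cite Theorem 2.1 of~\cite{HardtR13}, which established exactly this bound for the identical algorithm template; the present claim is then immediate.
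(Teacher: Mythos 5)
Your argument is correct and is essentially the paper's own proof (the paper merely asserts the claim follows from properties of the Gaussian distribution and defers to Hardt--Roth): per-iteration $\ell_2$-sensitivity at most $\sqrt{p}\,\|X_{\ell-1}\|_\infty$ against per-coordinate noise scale $\|X_{\ell-1}\|_\infty\sigma$, so the privacy loss is data-independent, and then adaptive composition plus post-processing of the QR step. The one caveat is that invoking the generic $(\epsilon_0,\delta_0)$ advanced composition theorem incurs an extra $\sqrt{\log(L/\delta)}$-type factor and so does not by itself certify the specific constant in $\sigma=\epsilon^{-1}\sqrt{4pL\log(1/\delta)}$; for that one needs the direct argument that adaptively composed Gaussian mechanisms act as a single Gaussian mechanism with root-sum-squared sensitivity, which is precisely the Hardt--Roth proof you cite, so the claim stands.
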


It is straightforward to prove \theoremref{privacy-main} by invoking our
convergence analysis of the noisy power method together with suitable error
bounds. The error bounds are readily available as the noise term is just
gaussian.

\begin{proof}[Proof of \theoremref{privacy-main}]
  Let $m = \max\|X_\ell\|_\infty.$ By~\lemmaref{gaussian-projection}
  the following bounds hold with probability $99/100$:
\begin{enumerate}
\item $\max_{\ell=1}^L\|G_\ell\| \lesssim \sigma m \sqrt{d \log L}$
\item $\max_{\ell=1}^L\|U^\trans G_\ell\| \lesssim \sigma m \sqrt{k\log L}$
\end{enumerate}
Let
\[
\eps' = \frac{\sigma m \sqrt{d \log L}}{\sigma_k - \sigma_{k+1}}\gtrsim \frac{5\max_{\ell=1}^L\|G_\ell\|}{\sigma_k - \sigma_{k+1}}.
\]
By \corollaryref{random}, if we also have that
$\max_{\ell=1}^L\|U^\trans G_\ell\| \leq (\sigma_k - \sigma_{k+1})
\frac{\sqrt{p}-\sqrt{k-1}}{\tau \sqrt{d}}$ for a sufficiently large
constant $\tau$, then we will have that
\[
\norm{(I-X_LX_L^\trans)U} \leq \eps'
\leq \frac{\sigma m \sqrt{d \log L}}{\sigma_k - \sigma_{k+1}}
\]
after the desired number of iterations, giving the theorem.
Otherwise, 
\[
(\sigma_k - \sigma_{k+1})\frac{\sqrt{p}-\sqrt{k-1}}{\tau\sqrt{d}} \leq
\max_{\ell=1}^L\|U^\trans G_\ell\| \lesssim \eps' (\sigma_k -
\sigma_{k+1})\sqrt{k/d},
\]
so it is trivially true that
\[
\frac{\sigma m \sqrt{d \log L}}{\sigma_k - \sigma_{k+1}} \frac{\sqrt{p}}{\sqrt{p} - \sqrt{k-1}} \geq \eps' \frac{\sqrt{k}}{\sqrt{p} - \sqrt{k-1}} \gtrsim 1 \geq \norm{(I-X_LX_L^\trans)U}.
\]

\end{proof}

\subsection{Low-rank approximation}

Our results readily imply that we can compute accurate differentially private
low-rank approximations. The main observation is that, assuming $X_L$ and $U$
have the same dimension, $\tan\theta(U,X_L)\le\alpha$ implies that the matrix $X_L$ 
also leads to a good
low-rank approximation for $A$ in the spectral norm. In particular
\begin{equation}\equationlabel{low-rank}
\|(I-X_LX_L^\trans)A\|\le \sigma_{k+1} + \alpha\sigma_1\mper
\end{equation}
Moreover the projection step of computing $X_LX_L^\trans A$ can be carried out
easily in a privacy-preserving manner. It is again the $\ell_\infty$-norm of
the columns of $X_L$ that determine the magnitude of noise that is needed.
Since $A$ is symmetric, we have $X^\trans A=(AX)^\trans.$ Hence, to obtain a
good low-rank approximation it suffices to compute the product $AX_L$
privately as $AX_L + G_L.$ This leads to the following corollary.

\begin{corollary}\corollarylabel{privacy-lowrank}
Let $A\in\R^{d\times d}$ be a symmetric matrix with singular values
$\sigma_1\ge\dots\ge\sigma_d$ and let $\gamma=1-\sigma_{k+1}/\sigma_{k}.$ 
There is an $(\epsilon,\delta)$-differentially private
algorithm that given $A$ and $k,$ outputs a rank $2k$ matrix $B$ such that with
probability $9/10,$
\[
\|A-B\| \le \sigma_{k+1}
+
\tilde O\left( 
\frac{\sigma_1\sqrt{(k/\gamma)d\log
d\log(1/\delta)}}{\epsilon(\sigma_k-\sigma_{k+1})}
\right)\mper
\]
The $\tilde O$-notation hides the factor
$O\big(\sqrt{\log(\log(d)/\gamma)}\big).$
\end{corollary}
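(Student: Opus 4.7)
The plan is to run \PPM with dimension $p=2k$ to obtain an orthonormal basis $X_L \in \R^{d\times 2k}$, and then release one additional noisy matrix-vector product $Y = AX_L + G_{L+1}$ with $G_{L+1}$ drawn from the same gaussian distribution as the $G_\ell$ inside the main loop (with $\sigma$ recalibrated for $L+1$ iterations rather than $L$). Since $A$ is symmetric, $X_L^\trans A = (AX_L)^\trans$, so the output $B \defeq X_L Y^\trans$ is the natural symmetric surrogate for the projection $X_L X_L^\trans A$ indicated in the discussion preceding the statement, and it has rank at most $2k$ as required. Privacy follows because the full procedure is a composition of $L+1$ gaussian mechanisms whose noise parameter is calibrated exactly as in \PPM; replacing $L$ by $L+1$ changes $\sigma$ by a factor $\sqrt{(L+1)/L}=1+o(1)$, which is absorbed into constants.

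For accuracy I first invoke \theoremref{privacy-main} with $p=2k$, so that $\sqrt{p}/(\sqrt{p}-\sqrt{k-1})$ is an absolute constant. Plugging in the worst-case bound $\|X_\ell\|_\infty \le 1$ together with the iteration count $L = O(\log(d)/\gamma)$ and the definition of $\sigma$, this yields
\[
\|(I-X_LX_L^\trans)U\| = \sin\theta_k(U,X_L) \le \alpha \defeq \tilde O\Paren{\frac{\sqrt{(k/\gamma)\,d\log d\log(1/\delta)}}{\epsilon(\sigma_k-\sigma_{k+1})}}\mper
\]
Next, \equationref{low-rank} converts this principal-angle bound into the spectral low-rank statement $\|(I-X_LX_L^\trans)A\| \le \sigma_{k+1} + \alpha\sigma_1$. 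Triangle inequality then gives
\[
\|A - B\| \le \|(I-X_LX_L^\trans)A\| + \|X_L G_{L+1}^\trans\| \le \sigma_{k+1} + \alpha\sigma_1 + \|G_{L+1}\|\mper
\]
A standard gaussian-matrix tail bound, identical to the one used to control $\|G_\ell\|$ in the proof of \theoremref{privacy-main}, gives $\|G_{L+1}\| \lesssim \sigma\sqrt{d\log L}$. Since $\sigma_1 \ge \sigma_k - \sigma_{k+1}$, this quantity is at most $\alpha(\sigma_k-\sigma_{k+1}) \le \alpha\sigma_1$ and is therefore absorbed into the main error term.

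I do not expect a substantial obstacle: the convergence bound of \theoremref{privacy-main} and the spectral inequality \equationref{low-rank} do all the heavy lifting. The only delicate point is ensuring that the additional gaussian noise released in the projection step is dominated by the existing principal-angle contribution $\alpha\sigma_1$, which follows from the trivial inequality $\sigma_1 \ge \sigma_k-\sigma_{k+1}$. The remaining bookkeeping is the adjustment of $\sigma$ to account for one extra mechanism release, which is a $1+o(1)$ factor and disappears into the $\tilde O$ notation.
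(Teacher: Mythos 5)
Your proposal is correct and follows essentially the same route as the paper's proof: run \PPM with $p=2k$ for $L+1$ noisy products, set $B=X_L(AX_L+G)^\trans$, combine \theoremref{privacy-main} (with the trivial bound $\|X_\ell\|_\infty\le1$) with \equationref{low-rank}, and argue the extra Gaussian term $\|X_LG^\trans\|\le\|G\|$ is dominated by $\alpha\sigma_1$. Your explicit justification via $\sigma_1\ge\sigma_k-\sigma_{k+1}$ is exactly the observation the paper states more tersely ("of lower order since the $\alpha$-term gets multiplied by $\sigma_1$"), so there is no substantive difference.
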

\begin{proof}
Apply \theoremref{privacy-main} with $p=2k$ and run the algorithm for $L+1$
steps with $L=O(\gamma^{-1}\log d).$ This gives the bound
\[
\alpha = \norm{(I - X_LX_L^\trans)A} \le O\left(\frac{\sqrt{(k/\gamma)d\log
d\log(\log(d)/\gamma)\log(1/\delta)}}{\epsilon(\sigma_k-\sigma_{k+1})}\right)\mper
\]
Moreover, the algorithm has computed $Y_{L+1} = AX_L+G_L$ and we have
$B =  X_LY_{L+1}^\trans = X_LX_L^\trans A +X_LG_L^\trans.$ Therefore
\[
\Norm{A-B} \le \sigma_{k+1} + \alpha\sigma_1 + \Norm{X_LG_L^\trans}
\]
where $\Norm{X_LG_L^\trans}\le\Norm{G_L}.$ 
By definition of the algorithm and \lemmaref{gaussian-projection}, we have
\[
\Norm{G_L}
\le O\left(\sqrt{\sigma^2d}\right)
= O\left(\frac1{\epsilon}\sqrt{(k/\gamma)d\log(d)\log(1/\delta)}\right)\mper
\]
Given that the $\alpha$-term gets multiplied by $\sigma_1,$ this bound on
$\Norm{G_L}$ is of lower order and the corollary follows.
\end{proof}

\subsection{Principal Component Analysis}
\sectionlabel{PCA}

Here we illustrate that our bounds directly imply results for the
privacy notion studied by Kapralov and Talwar~\cite{KapralovT13}. The
notion is particularly relevant in a setting where we think of $A$ as
a sum of rank~$1$ matrices each of bounded spectral norm.

\begin{definition}
\definitionlabel{dp-spectral}
A randomized algorithm $M\colon\mathbb{R}^{d\times d'}\rightarrow R$ (where $R$ is some
arbitrary abstract range) is \emph{$(\epsilon,\delta)$-differentially private
under unit spectral norm changes}
if for all pairs of matrices $A, A' \in \mathbb{R}^{d\times d'}$ 
satisfying $\|A-A'\|_2\le1,$ we have that for all subsets of 
the range $S \subseteq R,$ the algorithm satisfies:
$\Pr\Set{M(A) \in S} \leq
\exp(\epsilon)\Pr\Set{M(A') \in S} + \delta\mper$
\end{definition}

\begin{lemma}\lemmalabel{privacy-spectral}
If \PPM is executed with each $G_\ell$ sampled independently as $G_\ell \sim
N(0,\sigma^2)^{d\times p}$ with $\sigma
=\epsilon^{-1}\sqrt{4pL\log(1/\delta)},$ then \PPM satisfies
$(\epsilon,\delta)$-differential privacy under unit spectral norm changes.

If $G_\ell$ is sampled with i.i.d. Laplacian entries 
$G_\ell\sim \mathrm{Lap}(0,\lambda)^{n\times k}$ where $\lambda =
10\epsilon^{-1}pL\sqrt{d},$ then \PPM satisfies $(\epsilon,0)$-differential
privacy under unit spectral norm changes.
\end{lemma}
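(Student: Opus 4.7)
The plan is to combine three standard ingredients: (a) a per-iteration sensitivity bound on the map $A\mapsto A X_{\ell-1}$ under unit spectral-norm changes in $A$; (b) the Gaussian (respectively Laplace) mechanism applied to the release $Y_\ell = A X_{\ell-1}+G_\ell$; and (c) adaptive composition across the $L$ iterations. Since the QR factorization $Y_\ell\mapsto X_\ell$ and every subsequent use of $X_\ell$ touch $A$ only through previously released $Y$'s, they are data-independent post-processings, so it suffices to prove privacy of the joint release $(Y_1,\dotsc,Y_L)$.

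First I would establish the sensitivity. Because $X_{\ell-1}$ has $p$ orthonormal columns, $\|X_{\ell-1}\|_F=\sqrt{p}$, and for any $A,A'$ with $\|A-A'\|_2\le 1$,
\[
\|A X_{\ell-1}-A' X_{\ell-1}\|_F \;\le\; \|A-A'\|_2\,\|X_{\ell-1}\|_F \;\le\; \sqrt{p}.
\]
Hence each iteration, viewed as the function $A\mapsto A X_{\ell-1}$, has $L_2$-sensitivity at most $\sqrt{p}$. Combined with the elementary bound $\|M\|_1 \le \sqrt{dp}\,\|M\|_F$ for a $d\times p$ matrix $M$, the corresponding $L_1$-sensitivity is at most $p\sqrt{d}$.

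Next I would plug these sensitivities into the appropriate single-query mechanism and compose. For the Gaussian case, the standard analysis of the composed Gaussian mechanism (whether through advanced composition applied to per-step Gaussian releases, or through the tighter R\'enyi DP / moments-accountant analysis of $L$-fold Gaussians) says that $L$ adaptive releases of an $L_2$-sensitivity-$\Delta$ function perturbed by $\mathrm{N}(0,\sigma^2)$ noise jointly satisfy $(\epsilon,\delta)$-differential privacy whenever $\sigma \gtrsim \Delta\sqrt{L\log(1/\delta)}/\epsilon$; inserting $\Delta=\sqrt{p}$ matches the stated $\sigma=\epsilon^{-1}\sqrt{4pL\log(1/\delta)}$ up to the absolute constant. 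For the Laplace case I would calibrate each iteration to be $(\epsilon/L,0)$-differentially private by adding $\mathrm{Lap}(0,\lambda)$ noise entrywise with $\lambda \ge L\Delta_1/\epsilon$, and then apply the basic $(\epsilon,0)$-composition theorem across the $L$ rounds. Substituting $\Delta_1\le p\sqrt{d}$ yields $\lambda = O(pL\sqrt{d}/\epsilon)$, matching the stated $\lambda=10\epsilon^{-1}pL\sqrt{d}$ up to the absolute constant.

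The only potentially delicate point is that the iterate $X_{\ell-1}$ is itself random and depends on the noise injected in earlier rounds, so the per-iteration query is chosen adaptively. This is precisely the setting to which the adaptive-composition versions of the Gaussian and basic composition theorems apply, so no new argument is required once one has conditioned on the previous releases and invoked the sensitivity bound above, which holds uniformly over the (orthonormal) iterate $X_{\ell-1}$. I expect this observation plus tracking the constants to be the only remaining bookkeeping, with no conceptual obstacle beyond assembling (a)--(c).
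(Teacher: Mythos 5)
Your proposal is correct and is essentially the paper's argument: the paper likewise bounds the sensitivity of the matrix-vector products under $\|A-A'\|_2\le 1$ (per column, $\ell_2$-sensitivity $1$ and $\ell_1$-sensitivity $\sqrt{d}$ via Cauchy--Schwarz) and then invokes Gaussian/Laplace mechanism composition over the $pL$ products, which is the same computation as your per-iteration grouping with Frobenius sensitivity $\sqrt{p}$ composed over $L$ rounds. Your explicit treatment of adaptivity and post-processing is a welcome bit of extra care that the paper's sketch leaves implicit.
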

\begin{proof}
The first claim follows from the privacy proof in~\cite{HardtR12}. We sketch 
the argument here for completeness.
Let $D$ be any matrix with $\|D\|_2\le1$ (thought of as $A-A'$ in
\definitionref{dp-spectral}) and let $\|x\|=1$ be any unit vector which we
think of as one of the columns of $X=X_{\ell-1}.$ Then, we have $\|Dx\|\le
\|D\|\cdot\|x\|\le 1,$ by definition of the spectral norm. This shows that the
``$\ell_2$-sensitivity'' of one matrix-vector multiplication in our algorithm
is bounded by~$1.$ It is well-known that it suffices to add Gaussian noise
scaled to the $\ell_2$-sensitivity of the matrix-vector product in order to
achieve differential privacy. Since there are $kL$ matrix-vector
multiplications in total we need to scale the noise by a factor
of~$\sqrt{kL}.$

The second claim follows analogously. Here however we need to scale the noise
magnitude to the ``$\ell_1$-sensitivity'' of the matrix-vector product which
be bound by $\sqrt{n}$ using Cauchy-Schwarz. The claim then follows using
standard properties of the Laplacian mechanism.
\end{proof}

Given the previous lemma it is straightforward to derive the following
corollaries.
\begin{corollary}
Let $A\in\R^{d\times d}$ be a symmetric matrix with singular values
$\sigma_1\ge\dots\ge\sigma_d$ and let $\gamma=1-\sigma_{k+1}/\sigma_{k}.$ 
There is an algorithm that given
a $A$ and parameter $k,$ preserves
$(\epsilon,\delta)$-differentially privacy under unit spectral norm changes
and outputs a rank $2k$ matrix $B$ such that with
probability $9/10,$
\[
\|A-B\| \le \sigma_{k+1}
+
\tilde O\left( 
\frac{\sigma_1\sqrt{(k/\gamma)d\log
d\log(1/\delta)}}{\epsilon(\sigma_k-\sigma_{k+1})}
\right)\mper
\]
The $\tilde O$-notation hides the factor
$O\big(\sqrt{\log(\log(d)/\gamma)}\big).$
\end{corollary}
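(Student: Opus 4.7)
The plan is to mirror the proof of \corollaryref{privacy-lowrank} almost verbatim, replacing the entry-level privacy argument with \lemmaref{privacy-spectral} and using the simpler (non-coherence) spectral bounds afforded by the fact that the Gaussian noise here no longer carries a $\|X_{\ell-1}\|_\infty$ factor. Concretely, I would run \PPM for $L+1$ iterations with $L=O(\gamma^{-1}\log d)$, $p=2k$, and independent Gaussian noise $G_\ell \sim N(0,\sigma^2)^{d\times p}$ with $\sigma = \epsilon^{-1}\sqrt{4pL\log(1/\delta)}$, then output $B = X_L Y_{L+1}^\trans = X_L X_L^\trans A + X_L G_{L+1}^\trans$. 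Privacy under unit spectral norm changes is immediate from the first half of \lemmaref{privacy-spectral} (and the fact that the final noisy product $AX_L + G_{L+1}$ is itself one of the protected matrix-vector multiplications).

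For accuracy, I would invoke \theoremref{privacy-main}-style error bounds but now with constant $m=1$ in the noise scale. Specifically, for a $d\times p$ matrix with i.i.d.\ $N(0,\sigma^2)$ entries, \lemmaref{gaussian-projection} yields simultaneously over all $\ell\le L$ that $\|G_\ell\| \lesssim \sigma\sqrt{d\log L}$ and $\|U^\trans G_\ell\| \lesssim \sigma\sqrt{k\log L}$ with probability at least $99/100$. Since $p=2k$, the factor $\sqrt{p}/(\sqrt{p}-\sqrt{k-1})$ in \corollaryref{random} is $O(1)$, and the condition $5\|U^\trans G_\ell\| \le (\sigma_k-\sigma_{k+1})(\sqrt{p}-\sqrt{k-1})/(\tau\sqrt{d})$ holds with the same $\tau = O(1)$ tolerance as before (after absorbing the $\sqrt{k}$ vs.\ $\sqrt{d}$ gain from projecting onto $U$).

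Applying \corollaryref{random} then gives, after $L$ iterations,
\[
\tan\theta_k(U,X_L) \le \alpha := O\!\left(\frac{\sigma\sqrt{d\log L}}{\sigma_k-\sigma_{k+1}}\right) = \tilde O\!\left(\frac{\sqrt{(k/\gamma)\,d\log d\log(1/\delta)}}{\epsilon(\sigma_k-\sigma_{k+1})}\right),
\]
where the $\tilde O$ hides the $\sqrt{\log(\log(d)/\gamma)}$ factor arising from $L=O(\gamma^{-1}\log d)$. Then by \eqref{eq:low-rank} we have $\|(I - X_L X_L^\trans)A\| \le \sigma_{k+1} + \alpha\sigma_1$, and the triangle inequality gives
\[
\|A - B\| \le \sigma_{k+1} + \alpha\sigma_1 + \|X_L G_{L+1}^\trans\| \le \sigma_{k+1} + \alpha\sigma_1 + \|G_{L+1}\|.
\]
Finally, $\|G_{L+1}\| = O(\sigma\sqrt{d}) = \tilde O(\epsilon^{-1}\sqrt{(k/\gamma)d\log(1/\delta)})$, which is smaller than $\alpha\sigma_1$ by a factor of $\sqrt{\log d}\cdot\sigma_1/(\sigma_k-\sigma_{k+1})\ge 1$, so it is absorbed into the stated bound.

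The only subtlety, and the step I would double-check most carefully, is that the privacy accounting in \lemmaref{privacy-spectral} covers the extra $(L{+}1)$st matrix-vector product used to form $B$: this is handled by running \PPM itself for $L+1$ steps (so the noise scale $\sigma$ already includes this iteration in the composition), exactly as in the proof of \corollaryref{privacy-lowrank}. Everything else is a routine transcription of that proof with the spectral-norm noise scaling in place of the entry-level one.
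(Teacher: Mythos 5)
Your proposal is correct and follows essentially the same route as the paper, whose own proof is simply stated as ``analogous to the proof of \corollaryref{privacy-lowrank}'': run \PPM for $L+1$ steps with $p=2k$ and the spectral-sensitivity Gaussian noise of \lemmaref{privacy-spectral}, output $B=X_LY_{L+1}^\trans$, and combine \theoremref{privacy-main}-style error bounds with \equationref{low-rank} and the lower-order $\Norm{G_{L+1}}$ term. The only slight imprecision is your claim that the $\norm{U^\trans G_\ell}$ condition of \corollaryref{random} ``holds'' — in general it need not, but when it fails the bound is trivially satisfied exactly as in the case analysis inside the proof of \theoremref{privacy-main}, which you invoke anyway.
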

\begin{proof}
The proof is analogous to the proof of \corollaryref{privacy-lowrank}.
\end{proof}
A similar corollary applies to $(\epsilon,0)$-differential privacy. 
\begin{corollary}
Let $A\in\R^{d\times d}$ be a symmetric matrix with singular values
$\sigma_1\ge\dots\ge\sigma_d$ and let $\gamma=1-\sigma_{k+1}/\sigma_{k}.$ 
There is an algorithm that given
a $A$ and parameter $k,$ preserves
$(\epsilon,\delta)$-differentially privacy under unit spectral norm changes
and outputs a rank $2k$ matrix $B$ such that with
probability $9/10,$
\[
\|A-B\| \le \sigma_{k+1}
+
\tilde O\left( 
\frac{\sigma_1 k^{1.5}d\log(d)\log(d/\gamma)}
{\epsilon\gamma(\sigma_k-\sigma_{k+1})}
\right)\mper
\]
\end{corollary}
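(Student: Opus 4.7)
The plan is to mirror the proof of \corollaryref{privacy-lowrank} with two modifications, since the sentence preceding the corollary indicates that we are now in the $(\epsilon,0)$-differentially private setting (the ``$\delta$'' in the statement being vestigial). First, in place of the Gaussian noise specified in \figureref{PPM}, I invoke the second half of \lemmaref{privacy-spectral}, so that each $G_\ell$ has i.i.d.\ entries $\mathrm{Lap}(0,\lambda)$ with $\lambda = 10\epsilon^{-1}pL\sqrt{d}$; this is what buys us the pure $(\epsilon,0)$-DP guarantee under unit spectral-norm changes. As in \corollaryref{privacy-lowrank} I take $p=2k$, run the power iteration for $L = O(\gamma^{-1}\log d)$ steps, perform one additional noisy product $Y_{L+1} = AX_L + G_{L+1}$, and output $B = X_L Y_{L+1}^\trans$. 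Privacy is immediate from \lemmaref{privacy-spectral} (the extra product adds only an additional factor absorbed into $L$ when setting $\lambda$).

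Second, I replace the Gaussian noise-norm estimates used in the proof of \theoremref{privacy-main} (which came from \lemmaref{gaussian-projection}) with sub-exponential analogues tailored to Laplacian entries. Concretely, using a net argument together with a Bernstein inequality for linear combinations of independent Laplacians, I will establish that, with probability at least $99/100$ over all $L+1$ iterations, $\|G_\ell\| \lesssim \lambda\sqrt{d\log d}$ and $\|U^\trans G_\ell\| \lesssim \lambda\sqrt{k}\log d$ (the latter using that each column of $U^\trans G_\ell$ is a $k$-dimensional vector whose one-dimensional marginals are subexponential with parameter $O(\lambda)$ because the rows of $U$ are unit vectors). With these bounds in hand, I apply \corollaryref{random} with $\epsilon' = 5\|G_\ell\|/(\sigma_k-\sigma_{k+1})$ to conclude $\tan\theta_k(U,X_L) \le \epsilon'$, and then use \equationref{low-rank} together with $\|G_{L+1}\|\lesssim\lambda\sqrt{d\log d}$ to obtain
\[
\|A-B\| \le \sigma_{k+1} + \epsilon'\sigma_1 + \|G_{L+1}\| \le \sigma_{k+1} + \tilde O\Paren{\frac{\sigma_1\lambda\sqrt{d}}{\sigma_k-\sigma_{k+1}}}\mper
\]
Plugging in $\lambda = 10\epsilon^{-1}(2k)L\sqrt{d}$ with $L = O(\gamma^{-1}\log d)$ yields the claimed bound after collecting $\tilde O$ factors.

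The main obstacle is the second noise bound: unlike the Gaussian case, where rotational invariance instantly gives that $U^\trans G_\ell$ is itself a $k\times p$ Gaussian matrix with the same variance, the Laplacian distribution has no such symmetry, so $U^\trans G_\ell$ is only subexponential entrywise. This forces a more delicate concentration argument and is responsible for the looseness versus the Gaussian corollary. Moreover, for certain regimes of $\lambda$ the second hypothesis of \corollaryref{random} (the $(\sqrt{p}-\sqrt{k-1})/(\tau\sqrt{d})$ bound) may fail; in that case, as in the proof of \theoremref{privacy-main}, I will verify that the target right-hand side $\tilde O(\sigma_1 k^{1.5}d\log d\log(d/\gamma)/[\epsilon\gamma(\sigma_k-\sigma_{k+1})])$ automatically exceeds $\sigma_1$, so the trivial bound $\|A-B\|\le\sigma_1$ suffices. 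This two-case split is what accounts for the extra $\sqrt{k}$ factor relative to \corollaryref{privacy-lowrank}.
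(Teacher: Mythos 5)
Your high-level route is the same as the paper's: invoke \PPM with $p=2k$ and Laplacian noise scaled as in \lemmaref{privacy-spectral}, bound $\norm{G_\ell}$ and $\norm{U^\trans G_\ell}$, feed these into \corollaryref{random}, and finish by repeating the low-rank argument of \corollaryref{privacy-lowrank}. The place where you diverge — and where there is a genuine gap — is the noise-norm bounds. You claim $\norm{G_\ell} \lesssim \lambda\sqrt{d\log d}$ and $\norm{U^\trans G_\ell}\lesssim \lambda\sqrt{k}\log d$ via ``a net argument together with a Bernstein inequality.'' That argument does not deliver these bounds: for the bilinear form $a^\trans G_\ell b$ the coefficient vector $(a_i b_j)$ has $\ell_\infty$-norm that can be of constant order, so the subexponential (linear) branch of Bernstein's inequality forces $t \gtrsim \lambda d$ when union-bounding over an $e^{\Theta(d)}$-size net (and $t\gtrsim \lambda k$ for $U^\trans G_\ell$ over an $e^{\Theta(k)}$-size net). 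Your stated bounds are believable up to polylogs, but proving them requires an extra ingredient such as truncating all entries at $O(\lambda\log(dLk))$ before applying a bounded-entry matrix concentration result — note also that the entries of $U^\trans G_\ell$ are not independent, so the argument must run through the bilinear form anyway. A warning sign: with your claimed bounds the final error would scale as $k$ rather than $k^{1.5}$, i.e.\ you would be proving something strictly stronger than the statement — precisely the improvement the paper explicitly says it \emph{believes} possible via ``a quantitatively stronger version of \lemmaref{laplacian-projection}'' but does not carry out.

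The paper instead uses the much cruder \lemmaref{laplacian-projection} (entrywise Laplacian tail bound plus a Frobenius-norm estimate), applied once with the actual $U$ to get $\norm{U^\trans G_\ell}\lesssim \lambda\sqrt{pk}\log(Lpk)$ and once with the identity (equivalently, to $G_\ell$ itself) to get $\norm{G_\ell}\lesssim \lambda\sqrt{pd}\log(Lpd)$; with $p=2k$, $L=O(\gamma^{-1}\log d)$ and $\lambda = 10\epsilon^{-1}pL\sqrt d$ these weaker bounds already yield exactly the stated $k^{1.5}$ factor, so no delicate concentration is needed. Substituting these bounds repairs your proof with no other changes; your two-case fallback (when the $\norm{U^\trans G_\ell}$ hypothesis of \corollaryref{random} fails, the right-hand side exceeds $\sigma_1$ and the bound is trivial) is fine and mirrors the proof of \theoremref{privacy-main}. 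One smaller misstatement: the extra $\sqrt k$ relative to \corollaryref{privacy-lowrank} is not ``accounted for by the two-case split''; in the paper it comes from the larger Laplace scaling $\lambda\propto pL\sqrt d$ together with the Frobenius-type spectral-norm bound on the $d\times 2k$ noise matrix.
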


\begin{proof}
We invoke \PPM with $p=2k$ and Laplacian noise with
the scaling given by \lemmaref{privacy-spectral} so that the algorithm
satisfies $(\epsilon,0)$-differential privacy.
Specifically,
$G_\ell\sim \mathrm{Lap}(0,\lambda)^{d\times p}$ where $\lambda =
10\epsilon^{-1}pL\sqrt{d}.$ 
\lemmaref{laplacian-projection}. Indeed, with probability $99/100,$ we have
\begin{enumerate}
\item $\max_{\ell=1}^L\|G_\ell\| \le O\left(\lambda\sqrt{kd}\log(kdL)\right)
= O\left((1/\epsilon\gamma)k^{1.5}d\log(d)\log(kdL)\right)$
\item $\max_{\ell=1}^L\|U^\trans G_\ell\| \le O\left(\lambda k\log(kL)\right)
= O\left((1/\epsilon\gamma)k^2\sqrt{d}\log(d)\log(kL)\right)$
\end{enumerate}
We can now plug these error bounds into \corollaryref{random} to obtain the
bound
\[
\Norm{(I-X_LX_L^\trans)U}
\le O\left(
\frac{k^{1.5}d\log(d)\log(d/\gamma)}
{\epsilon\gamma(\sigma_k-\sigma_{k+1})}
\right)
\]
Repeating the argument from the proof of \corollaryref{privacy-lowrank} gives
the stated guarantee for low-rank approximation.
\end{proof}

The bound above matches a lower bound shown by Kapralov and
Talwar~\cite{KapralovT13} up to a factor of~$\tilde O(\sqrt{k}).$ We believe
that this factor can be eliminated from our bounds by using a quantitatively
stronger version of~\lemmaref{laplacian-projection}. Compared to the upper
bound of~\cite{KapralovT13} our algorithm is faster by a more than a quadratic
factor in~$d.$ Moreover, previously only bounds for
$(\epsilon,0)$-differential privacy were known for the spectral norm privacy
notion, whereas our bounds strongly improve when going to
$(\epsilon,\delta)$-differential privacy.

\subsection{Dimension-free bounds for incoherent matrices}
\sectionlabel{privacy-mu}

The guarantee in \theoremref{privacy-main} depends on the quantity
$\|X_{\ell}\|_\infty$ which could in principle be as small as $\sqrt{1/d}.$ Yet, in
the above theorems, we use the trivial upper bound~$1.$ This in turn resulted
in a dependence on the dimensions of $A$ in our theorems. Here, we show that
the dependence on the dimension can be replaced by an essentially tight
dependence on the \emph{coherence} of the input matrix. In doing so, we resolve
the main open problem left open by Hardt and Roth~\cite{HardtR13}.
The definition of coherence that we will use is formally defined as
follows.

\begin{definition}[Matrix Coherence]
We say that a matrix $A \in \mathbb{R}^{d\times d'}$ with singular value
decomposition $A = U\Sigma V^\trans$ has \emph{coherence}
\[
\mu(A) \defeq\left\{d\|U\|^2_\infty, d'\|V\|^2_\infty\right\}\mper
\]
Here $\|U\|_\infty=\max_{ij}|U_{ij}|$ denotes the largest entry of $U$ in absolute
value.
\end{definition}
Our goal is to show that the $\ell_\infty$-norm of the vectors
arising in \PPM is closely related to the coherence of the input matrix.
We obtain a nearly tight connection between the coherence of the matrix and the
$\ell_\infty$-norm of the vectors that \PPM computes. 

\begin{theorem}\theoremlabel{gaussian-noise}
Let $A\in\R^{d\times d}$ be symmetric. Suppose \NPM is invoked on $A,$ and
$L\le n,$ with each $G_\ell$ sampled from $N(0,\sigma_\ell^2)^{d\times p}$ for
some $\sigma_\ell>0.$ Then, with probability $1-1/n,$ 
\[
\max_{\ell=1}^L \|X_\ell\|_\infty^2
\le O\left(\frac{\mu(A)\log(d)}d\right)\mper
\]
\end{theorem}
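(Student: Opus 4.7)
The plan is to exploit a rotational symmetry of the algorithm induced by the i.i.d.~Gaussian noise $G_\ell$: for any $p \times p$ orthogonal matrix $O$, the law of $G_\ell O$ equals that of $G_\ell$. If one replaces the QR step of \NPM by the symmetric orthonormalization $X \leftarrow Y(Y^\trans Y)^{-1/2}$, which is equivariant under right rotations, one obtains the same column space $\cS_\ell = \mathrm{col}(X_\ell)$ at every step. Inductively, this symmetric version satisfies $X_\ell \sim X_\ell O$ for every fixed $O$: the map $(X_{\ell-1}, G_\ell) \mapsto (X_{\ell-1}O, G_\ell O)$ is measure preserving, sends $Y_\ell \mapsto Y_\ell O$, and the symmetric orthonormalization then sends $X_\ell \mapsto X_\ell O$. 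Since the leverage scores $(X_\ell X_\ell^\trans)_{ii}$ depend only on $\cS_\ell$, the $\|X_\ell\|_\infty$ analysis can be carried out for the symmetric version and transferred to the QR iterate by relating the two bases of the common column space.

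For the symmetric iterate, conditional on $\cS_\ell$, $X_\ell$ is a Haar-uniform orthonormal basis of $\cS_\ell$. Writing $X_\ell = Q_\ell O$ with $Q_\ell$ any fixed basis of $\cS_\ell$ and $O$ Haar on the orthogonal group $O(p)$, each entry
\[
(X_\ell)_{ij} = \langle Q_\ell^\trans e_i, O e_j\rangle
\]
is the marginal onto a uniform direction on $S^{p-1}$ of a fixed vector of length $\sqrt{(X_\ell X_\ell^\trans)_{ii}}$. Standard tail bounds for such a marginal and a union bound over $i \in [d]$, $j \in [p]$, $\ell \in [L]$ (using $L \le n$) yield
\[
\max_\ell \|X_\ell\|_\infty^2 \lesssim \frac{\log n}{p}\cdot \max_{i,\ell} (X_\ell X_\ell^\trans)_{ii}.
\]

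What remains is to show $\max_{i,\ell} (X_\ell X_\ell^\trans)_{ii} \le O(p\mu(A)/d)$. Pass to the eigenbasis $A = U\Sigma U^\trans$ and set $u_i = U^\trans e_i$ (so $\|u_i\|_\infty^2 \le \mu(A)/d$) and $Z_\ell = U^\trans X_\ell$; then $(X_\ell X_\ell^\trans)_{ii} = u_i^\trans Z_\ell Z_\ell^\trans u_i$ and the recursion reads $Z_\ell R_\ell = \Sigma Z_{\ell-1} + \tilde G_\ell$ with $\tilde G_\ell$ again i.i.d.~Gaussian. The Gaussian term contributes only $O(p/d)$ to the leverage against a fixed unit vector $u_i$ (by chi-squared concentration), while the signal term $\Sigma Z_{\ell-1}$ biases the column span toward the top-$p$ coordinates of $\R^d$, against which $u_i$ has leverage $\sum_{j\le p} u_i[j]^2 \le p\mu(A)/d$ by incoherence of $A$. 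The main obstacle is making this last point rigorous during the transient phase, before $Z_\ell Z_\ell^\trans$ has aligned with the top-$p$ coordinate projection: one needs an inductive argument tracking how $u_i^\trans Z_\ell Z_\ell^\trans u_i$ evolves across one power-plus-noise-plus-normalization step and showing that the per-step increase is controlled, so that the leverage stays within $O(p\mu(A)/d)$ throughout. Combining this leverage-score bound with the rotational-invariance reduction above then delivers $\max_\ell \|X_\ell\|_\infty^2 \le O(\mu(A)\log(d)/d)$ as claimed.
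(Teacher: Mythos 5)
Your route has two genuine gaps, and the first one sits exactly where the theorem's difficulty lies. The inductive leverage bound $\max_i (X_\ell X_\ell^\trans)_{ii} \le O(p\,\mu(A)/d)$ during the transient phase is not proven — you flag it yourself as ``the main obstacle'' — but it is not a technicality you can defer: writing $u^{(i)} = U^\trans e_i$ for the $i$-th row of the eigenvector matrix, the leverage is $\|Z_\ell Z_\ell^\trans u^{(i)}\|^2$-type quantity, and for an \emph{arbitrary} $p$-dimensional column space of $Z_\ell$ this can be as large as $1$ (e.g.\ if the span happens to contain $u^{(i)}$), incoherence notwithstanding. So one must argue that the noisy dynamics never align the iterate's span with any row of $U$, uniformly over all $\ell$ and $i$, through the normalization steps (where an ill-conditioned $R_\ell$ can amplify components); that is precisely the content of the theorem, and your sketch of ``signal pushes toward the top-$p$ coordinates, noise contributes $O(p/d)$'' does not supply it. Second, the transfer from your symmetric-orthonormalization iterate back to the actual QR iterate loses the theorem. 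Right multiplication $(X_{\ell-1},G_\ell)\mapsto(X_{\ell-1}O,G_\ell O)$ does not commute with QR (Gram--Schmidt is column-order dependent), so the algorithm's $X_\ell$ is \emph{not} Haar-distributed conditional on its span; the only quantity that transfers between two orthonormal bases of the same column space is the leverage score, and $\|X_\ell\|_\infty^2 \le \max_i (X_\ell X_\ell^\trans)_{ii}$ alone gives at best $O(p\,\mu(A)/d)$ — weaker than the claimed $O(\mu(A)\log d/d)$ by a factor of order $p/\log d$. The entrywise $\log n/p$ gain you derive from the Haar marginal applies only to the modified algorithm, not to \NPM as stated.

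For contrast, the paper avoids both issues by using a \emph{left} symmetry adapted to $A$: for a random sign matrix $O=\sum_i z_i u_iu_i^\trans$, $O$ commutes with $A$, $OG_\ell \sim G_\ell$, and Gram--Schmidt commutes with left orthonormal maps, so $X_\ell$ and $OX_\ell$ are identically distributed for the \emph{actual} QR iterates. Hence each column of $X_\ell$, written as $\sum_i s_i\alpha_i u_i$ with $\sum_i\alpha_i^2=1$, has uniformly random signs $s_i$, and a Chernoff bound gives every entry at most $O\bigl(\sqrt{\mu(A)\log(d)/d}\bigr)$ regardless of what the subspace looks like — no alignment or leverage induction is needed. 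If you want to salvage your approach, you would need either a proof of the transient leverage bound (which seems to require an argument of the paper's type anyway) or a reason the QR basis itself is spread, which the right-invariance argument cannot provide.
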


\begin{proof}
Fix $\ell\in[L].$
Let $A=\sum_{i=1}^n\sigma_i u_i u_i^\trans$ be 
given in its eigendecomposition. Note that
\[
B=\max_{i=1}^d\|u_i\|_\infty\le\sqrt{\frac{\mu(A)}d}.
\]
We may write any column $x$ of $X_\ell$ as 
$x=\sum_{i=1}^ds_i\alpha_iu_i$ where
$\alpha_i$ are non-negative scalars such that $\sum_{i=1}^d\alpha_i^2=1,$ and
$s_i\in\signs$ where $s_i=\sign(\langle x,u_i\rangle).$ Hence, by
\lemmaref{signs} (shown below), the signs $(s_1,\dots,s_d)$ are distributed uniformly at
random in $\signs^d.$ Hence, by \lemmaref{sign-deviation} 
(shown below), it follows that
$
\Pr\Set{ \left\|x\right\|_\infty > 4B\sqrt{\log d}} \le 1/n^3\mper
$
By a union bound over all $p\le d$ columns it follows that
$\Pr\Set{ \left\|X_\ell\right\|_\infty > 4B\sqrt{\log d}} \le 1/d^2\mper$
Another union bound over all $L\le d$ steps completes the proof.
\end{proof}

The previous theorem states that no matter what the scaling of the Gaussian
noise is in each step of the algorithm, so long as it is Gaussian the
algorithm will maintain that $X_\ell$ has small coordinates. We cannot hope to
have coordinates smaller than $\sqrt{\mu(A)/d},$ since eventually the
algorithm will ideally converge to $U.$
This result directly implies the theorem we stated in the
introduction.

\begin{proof}[Proof of \theoremref{privacy-mu}]
The claim follows directly from \theoremref{privacy-main} after applying 
\theoremref{gaussian-noise} which shows that with probability~$1-1/n,$
\[
\max_{\ell=1}^L \|X_\ell\|_\infty^2
\le O\left(\frac{\mu(A)\log(d)}d\right)\mper \qedhere
\]
\end{proof}

\subsection{Proofs of supporting lemmas}

We will now establish \lemmaref{signs} and \lemmaref{sign-deviation} that were
needed in the proof of the previous theorem. For that purpose we need some
basic symmetry properties of the QR-factorization. To establish these
properties we recall the Gram-Schmidt algorithm for computing the
QR-factorization.
\begin{definition}[Gram-Schmidt]
\definitionlabel{GS}
The \emph{Gram-Schmidt orthonormalization} algorithm, denoted
$\mathrm{GS},$ is given
an input matrix $V\in\R^{d\times p}$ with columns $v_1,\dots,v_p$ and outputs
an orthonormal matrix $Q\in\R^{d\times p}$ with the same range as $V.$
The columns $q_1,\dots,q_p$ of $Q$ are computed as follows:

\noindent For $i=1$ to $p$ do:
\begin{itm}
\item $r_{ii} \leftarrow \|v_i\|$ 
\item $q_i \leftarrow v_i/r_{ii}$
\item For $j=i+1$ to $p$ do:
\begin{itm}
\item 
$r_{ij} \leftarrow \langle q_i,v_j\rangle$
\item $v_j \leftarrow v_j - r_{ij}q_i$
\end{itm}
\end{itm}
\end{definition}

The first states that the
Gram-Schmidt operation commutes with an orthonormal transformation of the
input.
\begin{lemma}
\lemmalabel{GSO}
Let $V\in\R^{d\times p}$ and let $O\in\R^{d\times d}$ be an orthonormal
matrix. Then, $\mathrm{GS}(OV) = O\times\mathrm{GS}(V).$
\end{lemma}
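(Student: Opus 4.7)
The plan is to run the Gram-Schmidt algorithm in parallel on the two inputs $V$ and $OV$, and track by induction that every intermediate quantity computed on $OV$ is either identical to or an $O$-image of the corresponding quantity computed on $V$. The key observation is that the only operations performed by the algorithm on the input vectors are (i) computing norms $\|v_i\|$, (ii) computing inner products $\langle q_i,v_j\rangle$, and (iii) taking linear combinations $v_j - r_{ij}q_i$; each of these is equivariant with respect to an orthonormal transformation since $\|Ov\|=\|v\|$, $\langle Ou,Ov\rangle=\langle u,v\rangle$, and $O(v-rq)=Ov-rOq$.

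Concretely, I would introduce notation $v_j^{(i)}$ for the state of the $j$-th working column after the $i$-th outer iteration of \definitionref{GS} on input $V$, together with the corresponding $q_i$ and $r_{ij}$; and analogously $\tilde v_j^{(i)}, \tilde q_i, \tilde r_{ij}$ for the run on input $OV$. The inductive claim to carry is: for every $i$, one has $\tilde q_i = Oq_i$, $\tilde r_{ij}=r_{ij}$ for all $j\ge i$, and $\tilde v_j^{(i)} = O v_j^{(i)}$ for all $j>i$. The base case $i=0$ is immediate because the initial columns of $OV$ are exactly $O$ times the initial columns of $V$. For the inductive step, the three equivariance identities above let me propagate the relation through each line of the pseudocode: $\tilde r_{ii}=\|\tilde v_i^{(i-1)}\|=\|Ov_i^{(i-1)}\|=r_{ii}$, then $\tilde q_i = \tilde v_i^{(i-1)}/\tilde r_{ii} = Oq_i$, then $\tilde r_{ij}=\langle Oq_i,Ov_j^{(i-1)}\rangle=r_{ij}$, and finally $\tilde v_j^{(i)}=Ov_j^{(i-1)} - r_{ij}Oq_i = O v_j^{(i)}$.

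Once the induction is established, the conclusion is immediate: the output matrix $\mathrm{GS}(OV)$ has $i$-th column $\tilde q_i = Oq_i$, which is the $i$-th column of $O\cdot\mathrm{GS}(V)$. I do not expect any real obstacle; the entire content of the lemma is that the Gram-Schmidt procedure is built from norm, inner-product, and linear-combination steps, each of which commutes with an orthogonal transformation, and the only care needed is bookkeeping to make sure the equivariance is propagated through both the ``pivot'' step (forming $q_i$) and the ``update'' step (subtracting $r_{ij}q_i$ from the later columns).
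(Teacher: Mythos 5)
Your proof is correct and follows essentially the same route as the paper's: both arguments rest on the fact that the norms and inner products computed by Gram--Schmidt are invariant under the orthonormal map $O$, so the scalars $r_{ij}$ coincide for the two runs and the output columns of $\mathrm{GS}(OV)$ are the $O$-images of those of $\mathrm{GS}(V)$. Your explicit induction over the intermediate working columns is simply a more detailed rendering of the paper's observation that the output equals $VR$ with $R$ determined by the (invariant) scalars $\{r_{ij}\}$.
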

\begin{proof}
Let $\{r_{ij}\}_{ij\in[p]}$ 
denote the scalars computed by the Gram-Schmidt algorithm as specified in
\definitionref{GS}.
Notice that each of the numbers $\{r_{ij}\}_{ij\in[p]}$ is invariant under an
orthonormal transformation of the vectors $v_1,\dots,v_p.$ This is because
$\|Ov_i\|=\|v_i\|$ and $\langle Ov_i,Ov_j\rangle = \langle v_i,v_j\rangle.$
Moreover, The output $Q$ of Gram-Schmidt on input of $V$ satisfies $Q = VR,$
where $R$ is an upper right triangular matrix which only depends on the
numbers $\{r_{ij}\}_{i,j\in[p]}.$ Hence, the matrix $R$ is identical when the
input is $OV.$ Thus, $\mathrm{GS}(OV)=OVR=O\times\mathrm{GS}(V).$
\end{proof}

Given i.i.d. Gaussian matrices $G_0,G_1,\dots,G_L\sim N(0,1)^{d\times p},$
we can describe the behavior of our algorithm by a deterministic function
$f(G_0,G_1,\dots,G_L)$ which executes subspace iteration starting with $G_0$
and then suitably scales $G_\ell$ in each step. The next lemma shows that this
function is distributive with respect to orthonormal transformations.

\begin{lemma}
\lemmalabel{distrib}
Let $f\colon(\R^{d\times p})^L\to\R^{n\times p}$ denote the output of \PPM on
input of a matrix $A\in\R^{n\times n}$ as a function of the noise matrices
used by the algorithm as described above. Let $O$ be an orthonormal matrix
with the same eigenbasis as $A.$ Then,
\begin{equation}
f(OG_0,OG_1,\dots,OG_L)
= O\times f(G_0,\dots,G_L)\mper
\end{equation}
\end{lemma}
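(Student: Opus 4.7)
The plan is to proceed by induction on the iteration index $\ell$ and show the stronger statement that every intermediate iterate transforms equivariantly, not just the final output. Let $X_0, X_1, \dotsc, X_L$ denote the iterates produced by $f$ on input $(G_0, \dotsc, G_L)$ and let $\tilde X_0, \tilde X_1, \dotsc, \tilde X_L$ denote the iterates produced on input $(OG_0, \dotsc, OG_L)$. I will prove $\tilde X_\ell = OX_\ell$ for every $\ell \in \{0,1,\dotsc,L\}$; the conclusion of the lemma is just the $\ell = L$ case.

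For the base case $\ell = 0$, the initial iterate is obtained by orthonormalizing the seed matrix, namely $X_0 = \mathrm{GS}(G_0)$ and $\tilde X_0 = \mathrm{GS}(OG_0)$. A single application of \lemmaref{GSO} gives $\tilde X_0 = \mathrm{GS}(OG_0) = O \cdot \mathrm{GS}(G_0) = OX_0$.

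For the inductive step, I will use two ingredients. First, because $O$ is assumed to share its eigenbasis with the symmetric matrix $A$, the two matrices are simultaneously diagonalizable and therefore commute: $AO = OA$. Second, \lemmaref{GSO} gives the left-equivariance of the Gram--Schmidt map. Writing $Y_\ell = AX_{\ell-1} + G_\ell$ for the pre-QR matrix in the original execution, the inductive hypothesis $\tilde X_{\ell-1} = OX_{\ell-1}$ together with commutativity yields
\[
A \tilde X_{\ell-1} + O G_\ell \;=\; A O X_{\ell-1} + O G_\ell \;=\; O A X_{\ell-1} + O G_\ell \;=\; O Y_\ell,
\]
so applying \lemmaref{GSO} a second time gives $\tilde X_\ell = \mathrm{GS}(O Y_\ell) = O \cdot \mathrm{GS}(Y_\ell) = O X_\ell$, closing the induction.

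No step of this argument poses a real obstacle: the only non-mechanical observation is that the shared-eigenbasis hypothesis is exactly what is needed to pull $O$ through the multiplication by $A$, and \lemmaref{GSO} supplies the corresponding commutation with the orthonormalization step. The reason the lemma does not extend to arbitrary orthonormal $O$ is precisely that $AO = OA$ would fail; the hypothesis on the eigenbasis of $O$ is being used in essentially the minimal way.
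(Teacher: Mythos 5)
Your proposal is correct and matches the paper's proof essentially step for step: both argue by induction over the iterations, using \lemmaref{GSO} for the base case and the orthonormalization step, and using the commutativity $AO=OA$ (from the shared eigenbasis) to pull $O$ through the multiplication by $A$ in the inductive step. No differences worth noting.
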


\begin{proof}
For ease of notation we will denote by $X_0,\dots,X_L$ the iterates of the
algorithm when the noise matrices are $G_0,\dots,G_L,$ and we denote by
$Y_0,\dots,Y_L$ the iterates of the algorithm when the noise matrices are
$OG_0,\dots,OG_L.$ In this notation, our goal is to show that $Y_L = O X_L.$

We will prove the claim by induction on~$L$.
For $L=0,$ the base case follows from \lemmaref{GSO}. Indeed,
\[
Y_0 = \mathrm{GS}(OG_0) = O \times \mathrm{GS}(G_0) = OX_0\mper
\]
Let $\ell\ge 1.$
We assume the claim holds for $\ell-1$ and show that it holds for
$\ell.$ We have,
\begin{align*}
Y_\ell & = \mathrm{GS}(AY_{\ell-1}+ OG_\ell) \\
&= \mathrm{GS}(AOX_{\ell-1} + OG_\ell) \tag{by induction hypothesis} \\
&= \mathrm{GS}(O(AX_{\ell-1} + G_\ell)) \tag{$A$ and $O$ commute} \\
&= O\times \mathrm{GS}(AX_{\ell-1} + G_\ell) \tag{\lemmaref{GSO}} \\
&= OX_\ell\mper
\end{align*}
Note that $A$ and $O$ commute, since they share the same eigenbasis by the
assumption of the lemma.  This is what we needed to prove. 
\end{proof}

The previous lemmas lead to the following result characterizing the
distribution of signs of inner products between the columns of $X_\ell$ and
the eigenvectors of~$A.$
\begin{lemma}[Sign Symmetry]
\lemmalabel{signs}
Let $A$ be a symmetric matrix given in its eigendecomposition as
$A=\sum_{i=1}^d\lambda_i u_i u_i^\trans.$
Let $\ell\ge 0$ and let $x$ be any column of $X_\ell,$ where $X_\ell$ is the
iterate of \PPM on input of $A.$
Put $S_i = \sign(\langle u_i,x\rangle)$ for $i\in[d].$
Then $(S_1,\dots,S_d)$ is uniformly distributed in $\signs^d.$
\end{lemma}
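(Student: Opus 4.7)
The plan is to apply \lemmaref{distrib} with a well-chosen family of orthonormal transformations that realize sign flips in the eigenbasis of $A$. Concretely, for each sign pattern $\epsilon=(\epsilon_1,\dots,\epsilon_d)\in\signs^d$, I would introduce the reflection operator
\[
O_\epsilon \;=\; \sum_{i=1}^d \epsilon_i\, u_i u_i^\trans\mper
\]
This matrix is orthonormal and has the $u_i$ as its eigenvectors with eigenvalues $\epsilon_i$, so it shares the eigenbasis of $A$ and in particular commutes with $A$. Thus the hypothesis of \lemmaref{distrib} is met.

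The key distributional step is that $(O_\epsilon G_0,\dots,O_\epsilon G_L)$ has the same joint distribution as $(G_0,\dots,G_L)$. This is immediate because each $G_\ell$ has i.i.d.\ $N(0,\sigma_\ell^2)$ entries (a scalar covariance), so left-multiplication by an orthonormal matrix preserves its distribution; the $G_\ell$ are also mutually independent. Applying \lemmaref{distrib} then yields
\[
O_\epsilon X_L \;=\; f(O_\epsilon G_0,\dots,O_\epsilon G_L) \;\stackrel{d}{=}\; f(G_0,\dots,G_L) \;=\; X_L\mper
\]

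Finally, if $x$ is any column of $X_L$, the corresponding column of $O_\epsilon X_L$ is $O_\epsilon x$, and since the $u_i$ are eigenvectors of $O_\epsilon$ with eigenvalues $\epsilon_i$, I would compute $\langle u_i, O_\epsilon x\rangle = \epsilon_i\langle u_i, x\rangle$. Therefore the sign vector of $O_\epsilon x$ in the eigenbasis of $A$ equals $(\epsilon_1 S_1,\dots,\epsilon_d S_d)$. From the distributional equality $O_\epsilon X_L\stackrel{d}{=} X_L$ we conclude $(\epsilon_1 S_1,\dots,\epsilon_d S_d)\stackrel{d}{=}(S_1,\dots,S_d)$ for every $\epsilon\in\signs^d$. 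The only distribution on $\signs^d$ invariant under the full group of coordinate-wise sign flips is the uniform distribution, which gives the claim (the zero-measure event $\langle u_i,x\rangle=0$ can be discarded by absolute continuity of the Gaussian noise).

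The main conceptual hurdle is recognizing the right symmetry group and aligning the two randomness transformations: an orthogonal change on the noise side (which leaves the Gaussian law invariant) and a sign flip on the eigenbasis side (which acts on the quantities $S_i$). Once \lemmaref{distrib} is in hand, the rest is essentially bookkeeping; there is no delicate probabilistic estimate to carry out.
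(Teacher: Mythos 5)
Your proof is correct and follows essentially the same route as the paper: reflections $O=\sum_i \epsilon_i u_iu_i^\trans$ in the eigenbasis of $A$, rotational invariance of the Gaussian noise, and \lemmaref{distrib} to transfer the symmetry to $X_\ell$. The only cosmetic difference is that the paper applies a single uniformly random sign flip $O$ rather than fixed $O_\epsilon$ for every $\epsilon$ followed by the group-invariance argument; the two are equivalent.
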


\begin{proof}
Let $(z_1,\dots,z_d)\in\signs^d$ be a uniformly random sign vector. Let 
$O=\sum_{i=1}^d z_i u_i u_i^\trans.$ Note that $O$ is an orthonormal
transformation. Clearly, any column $Ox$ of $OX_\ell$ satisfies the conclusion of
the lemma, since $\langle u_i,Ox\rangle = z_i \langle u_i,x\rangle.$
Since the Gaussian distribution is rotationally invariant, we
have that $OG_\ell$ and $G_\ell$ follow the same distribution. 
In particular, denoting by $Y_\ell$ the matrix computed by the algorithm if
$OG_0,\dots, OG_\ell$ were chosen, we have that $Y_\ell$ and $X_\ell$ are
identically distributed.
Finally, by \lemmaref{distrib}, we have that $Y_\ell = O X_\ell.$ By our
previous observation this means that $Y_\ell$ satisfies the conclusion of the
lemma. As $Y_\ell$ and $X_\ell$ are identically distributed, the claim also
holds for $X_\ell.$
\end{proof}

We will use the previous lemma to bound the $\ell_\infty$-norm of the
intermediate matrices $X_\ell$ arising in power iteration in terms of the
coherence of the input matrix. We need the following large deviation bound.

\begin{lemma}
\lemmalabel{sign-deviation}
Let $\alpha_1,\dots,\alpha_d$ be scalars such that $\sum_{i=1}^d\alpha_i^2=1$
and $u_1,\dots,u_d$ are unit vectors in $\R^n.$ Put
$B=\max_{i=1}^d\|u_i\|_\infty.$
Further let $(s_1,\dots,s_d)$ be chosen
uniformly at random in $\signs^d.$
Then,
\[
\Pr\Set{ \left\|\sum_{i=1}^ds_i\alpha_iu_i\right\|_\infty >
4B\sqrt{\log d}} \le 1/d^3\mper
\]
\end{lemma}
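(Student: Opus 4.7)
The plan is to reduce the $\ell_\infty$ bound on the random vector $v := \sum_{i=1}^d s_i\alpha_i u_i$ to a coordinate-wise scalar concentration bound via a union bound.

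First I would fix an arbitrary coordinate $j$ of the ambient space and consider the scalar random variable
\[
Z_j \;:=\; v(j) \;=\; \sum_{i=1}^d s_i\,\alpha_i\,u_i(j).
\]
Since the signs $s_1,\dots,s_d$ are independent uniform $\pm 1$ variables, $Z_j$ is a weighted Rademacher sum of independent, mean-zero random variables. The $i$-th summand is bounded in absolute value by $|\alpha_i|\cdot|u_i(j)|\le |\alpha_i|\cdot B$, since $B=\max_i\|u_i\|_\infty$ by definition.

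Next I would apply Hoeffding's inequality to $Z_j$. Using $\sum_i \alpha_i^2 = 1$, the sum of squared ranges is $\sum_i (2|\alpha_i|B)^2 = 4B^2$, so for any $t>0$,
\[
\Pr\bigl\{|Z_j| > t\bigr\} \;\le\; 2\exp\!\left(-\frac{t^2}{2 B^2 \sum_i \alpha_i^2}\right) \;=\; 2\exp\!\left(-\frac{t^2}{2B^2}\right).
\]
Plugging in the target threshold $t = 4B\sqrt{\log d}$ yields $\Pr\{|Z_j|>4B\sqrt{\log d}\}\le 2 d^{-8}$.

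Finally, I would take a union bound over all coordinates of the ambient space (in the intended application of \theoremref{gaussian-noise}, the $u_i$ live in $\R^d$, so there are $d$ coordinates). This gives
\[
\Pr\!\left\{\Norm{v}_\infty > 4B\sqrt{\log d}\right\} \;\le\; d\cdot 2 d^{-8} \;=\; 2 d^{-7} \;\le\; d^{-3}
\]
for all sufficiently large $d$, which is the desired bound. There is no real obstacle here; the statement is a textbook combination of the Khintchine/Hoeffding bound for Rademacher sums with a union bound, and the factor $4$ in the threshold is tuned precisely to make the per-coordinate failure probability absorb the union bound with polynomial slack.
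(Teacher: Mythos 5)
Your argument is correct and follows essentially the same route as the paper's proof: a per-coordinate tail bound for the Rademacher sum (you invoke Hoeffding with the bound $|\alpha_i u_i(j)|\le |\alpha_i|B$, the paper invokes its Chernoff bound via the variance bound $\E Z^2\le B^2$), followed by a union bound over the coordinates. Your version even yields a slightly better per-coordinate exponent ($2d^{-8}$ versus $d^{-4}$), and the final comparison $2d^{-7}\le d^{-3}$ holds for every $d\ge 2$ while the case $d=1$ is trivial, so no ``sufficiently large $d$'' caveat is actually needed.
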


\begin{proof}
Let $X = \sum_{i=1}^d X_i$ where $X_i = s_i\alpha_iu_i.$ We will bound the
deviation of $X$ in each entry and then take a union bound over all entries.
Consider $Z=\sum_{i=1}^d Z_i$ where $Z_i$ is the first entry of $X_i$. The
argument is identical for all other entries of $X.$
We have $\E Z = 0$ and $\E Z^2 = \sum_{i=1}^d \E Z_i^2 \le
B^2\sum_{i=1}^d\alpha_i^2=B^2.$  Hence, by \theoremref{chernoff} (Chernoff
bound),
\[\textstyle
\Pr\Set{ \left|Z\right| > 4B\sqrt{\log(d)}} \le
\exp\left(-\frac{16B^2\log(d)}{4B^2}\right)
\le\exp(-4\log(d))=\frac1{d^4} \mper
\]
The claim follows by taking a union bound over all $d$ entries of $X.$
\end{proof}

\bibliographystyle{alpha}
\bibliography{moritz}

\appendix

\section{Deferred Concentration Inequalities}

\begin{theorem}[Chernoff bound]
\theoremlabel{chernoff}
Let the random variables $X_1,\dots,X_m$ be independent random variables such
that for every $i,$ $X_i\in[-1,1]$ almost surely. Let $X = \sum_{i=1}^m X_i$
and let $\sigma^2=\Var X.$ Then, for any $t>0,$ $\Pr\Set{ \left|X-\E X\right|
> t} \le \exp\left(-\frac{t^2}{4\sigma^2}\right)\mper$
\end{theorem}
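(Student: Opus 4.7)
The statement is a standard Bernstein-type tail bound for sums of bounded independent random variables, and the plan is to prove it by the classical Cram\'er--Chernoff exponential moments method. I would first center by setting $Y_i = X_i - \E X_i$, so that $\E Y_i = 0$, $|Y_i| \leq 2$, and $\Var Y_i = \Var X_i$; by symmetry it suffices to bound the one-sided tail $\Pr\{\sum_i Y_i > t\}$ and then apply the same argument to $-Y_i$ for the lower tail, absorbing the resulting factor of $2$ into the constant in the exponent.

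The core step is Markov's inequality applied to the exponential: for any $\lambda > 0$,
\[
\Prob{\textstyle\sum_i Y_i > t} \;\leq\; e^{-\lambda t}\prod_i \E\bigl[e^{\lambda Y_i}\bigr],
\]
where independence gives the product form. Next I would bound each moment generating function in terms of the variance. Because $|Y_i| \leq 2$ and $\E Y_i = 0$, a Taylor expansion of $e^{\lambda y}$ combined with the inequalities $\E Y_i = 0$ and $1 + u \leq e^u$ yields an estimate of the form $\E[e^{\lambda Y_i}] \leq \exp(\lambda^2 \Var Y_i)$, valid whenever $\lambda$ lies in a bounded range---this is exactly the MGF bound underlying Bernstein's inequality. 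Multiplying over $i$ gives $\prod_i \E[e^{\lambda Y_i}] \leq \exp(\lambda^2 \sigma^2)$.

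Finally, I would optimize the free parameter by choosing $\lambda = t/(2\sigma^2)$, which produces the stated bound $\exp(-t^2/(4\sigma^2))$ after combining the two tails. The main subtlety---and the only place requiring care---is that the clean sub-Gaussian MGF estimate $\E[e^{\lambda Y_i}] \leq \exp(\lambda^2 \Var Y_i)$ is only valid in a bounded range of $\lambda$, corresponding to the genuinely sub-Gaussian regime $t \lesssim \sigma^2$; in the complementary sub-exponential regime one would have to invoke the full Bernstein inequality, which still delivers $\exp(-\Omega(t))$ decay. The loose constant $4$ in the denominator has exactly the slack needed to cover both regimes uniformly at the level of accuracy required by the only application in the paper, namely \lemmaref{sign-deviation}, where $Y_i = s_i \alpha_i (u_i)_j$ is bounded by $\alpha_i B$ and one applies the bound with $\sigma^2 \leq B^2$ and $t = 4B\sqrt{\log d}$, which lives well within the sub-Gaussian regime.
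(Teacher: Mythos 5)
The paper states \theoremref{chernoff} in the appendix without proof, so there is no in-paper argument to compare against; judged on its own, your Cram\'er--Chernoff/Bernstein route is the standard one, and it does give the claimed bound in the regime where the optimizing choice $\lambda = t/(2\sigma^2)$ is admissible, i.e.\ $t = O(\sigma^2)$. The genuine gap is your treatment of the complementary regime: when $t \gg \sigma^2$, the Bernstein tail $\exp(-\Omega(t))$ is strictly weaker than $\exp(-t^2/(4\sigma^2))$, and no amount of slack in the constant $4$ can bridge the two, because the statement as written is actually false there. Take $m=1$ and $X_1$ Bernoulli with $\Pr\Set{X_1 = 1} = p = 0.01$, so $\sigma^2 = p(1-p) < 0.01$, and $t = 1/2$: the left-hand side is $p = 10^{-2}$, while $\exp(-t^2/(4\sigma^2)) \le \exp(-6.25) < 2\cdot 10^{-3}$. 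So a correct proof must either keep the Bernstein denominator $\sigma^2 + O(t)$, restrict to $t \lesssim \sigma^2$, or add a structural hypothesis on the summands; it cannot end at the stated inequality for general bounded variables.

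Your fallback---that the only application, \lemmaref{sign-deviation}, is safe---is right in conclusion but wrong in the stated reason. There $\sigma^2 \le B^2 \le B \le 1$ while $t = 4B\sqrt{\log d}$, so $t/\sigma^2 \ge 4\sqrt{\log d}/B \gg 1$: this is not the regime $t \lesssim \sigma^2$, and generic Bernstein (summands bounded by $B$) would only yield $\exp(-\Omega(t/B)) = \exp(-\Omega(\sqrt{\log d}))$, far short of the needed $d^{-4}$. What actually rescues the application is symmetry: each summand is $Z_i = s_i c_i$ with $s_i$ a uniform sign, so $\E\bigl[e^{\lambda Z_i}\bigr] = \cosh(\lambda c_i) \le e^{\lambda^2 c_i^2/2}$ for \emph{every} $\lambda$, and $\sum_i c_i^2$ equals the variance exactly; Chernoff with $\lambda = t/\sigma^2$ then gives $2\exp(-t^2/(2\sigma^2)) \le \exp(-t^2/(4\sigma^2))$ once $t^2 \ge 4\ln(2)\,\sigma^2$, which holds comfortably at $t = 4B\sqrt{\log d}$. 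The clean repair is therefore to prove (or restate) the theorem for symmetric, e.g.\ Rademacher-weighted, summands---which is all \lemmaref{sign-deviation} uses---rather than to claim the variance-only sub-Gaussian bound for arbitrary bounded variables.
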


The next lemma follows from standard concentration properties of the Gaussian
distribution.

\begin{lemma}\lemmalabel{gaussian-projection}
Let $U\in\mathbb{R}^{d\times k}$ be a matrix with orthonormal columns. 
Let $G_1,\dots,G_L\sim N(0,\sigma^2)^{d\times p}$ with $k\le p \le d$ and
assume that $L\le d.$ Then, with probability $1-10^{-4},$
\[
\max_{\ell\in[L]}\|U^\trans G_\ell\| \le O\left(\sigma \sqrt{p + \log L}\right)\mper
\]
\end{lemma}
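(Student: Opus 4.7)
The plan is to reduce the problem to a standard spectral-norm concentration bound for a Gaussian matrix and then apply a union bound over $\ell \in [L]$.

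First I would observe that since $U \in \mathbb{R}^{d\times k}$ has orthonormal columns, $U^\trans$ is a partial isometry, and hence by rotational invariance of the Gaussian distribution, $U^\trans G_\ell \in \mathbb{R}^{k \times p}$ is distributed exactly as a $k \times p$ matrix with i.i.d.\ $N(0,\sigma^2)$ entries. This is immediate: writing $U^\trans G_\ell$ column by column, each column is $U^\trans g$ where $g \sim N(0,\sigma^2 I_d)$, so $U^\trans g \sim N(0, \sigma^2 U^\trans U) = N(0, \sigma^2 I_k)$, and the columns remain independent.

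Next I would invoke the standard Gaussian matrix concentration inequality (Davidson--Szarek, or equivalently a consequence of Gordon's theorem combined with Gaussian Lipschitz concentration): for a $k \times p$ matrix $M$ with i.i.d.\ $N(0,\sigma^2)$ entries,
\[
\Pr\{ \|M\| > \sigma(\sqrt{k} + \sqrt{p} + t)\} \le 2 e^{-t^2/2}.
\]
Because $k \leq p$ we have $\sqrt{k} + \sqrt{p} \leq 2\sqrt{p}$, so $\|U^\trans G_\ell\| \leq \sigma(2\sqrt{p} + t)$ except with probability $2 e^{-t^2/2}$.

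Finally, I would take a union bound over $\ell \in [L]$: choosing $t = C\sqrt{\log L}$ for a sufficiently large absolute constant $C$, the failure probability is at most $2 L \exp(-C^2 \log L / 2) \leq 10^{-4}$ (using $L \leq d$ if needed to absorb lower-order terms into the constant). On the complementary event,
\[
\max_{\ell \in [L]} \|U^\trans G_\ell\| \leq \sigma\bigl(2\sqrt{p} + C\sqrt{\log L}\bigr) = O\bigl(\sigma\sqrt{p + \log L}\bigr),
\]
which is the claim. There is no real obstacle here: the only substantive ingredient is the Davidson--Szarek bound, which is a black-box fact; the rest is rotational invariance and a union bound.
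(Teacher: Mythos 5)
Your proposal is correct and takes essentially the same route as the paper: rotational invariance reduces $U^\trans G_\ell$ to a $k\times p$ Gaussian matrix, a standard largest-singular-value concentration bound gives $O(\sigma\sqrt{p})$ with a Gaussian tail, and the maximum over $L$ matrices contributes the $O(\sigma\sqrt{\log L})$ term (the paper phrases this via Lipschitz concentration of the maximum, you via a union bound, which is an immaterial difference). One nitpick: with $t = C\sqrt{\log L}$ alone the failure bound $2L\exp(-C^2\log L/2)$ is not below $10^{-4}$ for constant $L$ (e.g.\ $L=1$), but taking $t = C\sqrt{\log L + c_0}$ for a fixed constant $c_0$ repairs this and is absorbed into $O\bigl(\sigma\sqrt{p+\log L}\bigr)$ since $p \ge 1$.
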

\begin{proof}
  By rotational invariance of $G_\ell$ the spectral norm
  $\Norm{U^\trans G_\ell}$ is distributed like largest singular value
  of a random draw from $k\times p$ gaussian matrix
  $\mathrm{N}(0,\sigma^2)^{k\times p}.$ Since $p\ge k,$ the largest
  singular value strongly concentrates around $O(\sigma\sqrt{p})$ with
  a gaussian tail.  By the gaussian concentration of Lipschitz
  functions of gaussians, taking the maximum over $L$ gaussian
  matrices introduces an additive $O(\sigma \sqrt{\log L})$ term.
\end{proof}

We also have an analogue of the previous lemma for the Laplacian
distribution.

\begin{lemma}\lemmalabel{laplacian-projection}
Let $U\in\mathbb{R}^{n\times k}$ be a matrix with orthonormal columns. 
Let $G_1,\dots,G_L\sim \mathrm{Lap}(0,\lambda)^{d\times p}$ with $k\le p \le d$ and
assume that $L\le d.$ Then, with probability $1-10^{-4},$
\[
\max_{\ell\in[L]}\|U^\trans G_\ell\| \le O\left(\lambda \sqrt{pk}\log(Lpk)\right)\mper
\]
\end{lemma}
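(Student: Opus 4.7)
The plan is to fix one matrix $G_\ell$, bound each entry of $U^\trans G_\ell$ individually using sub-exponential concentration for sums of Laplacians, and then pass from entry-wise control to a spectral norm bound via the Frobenius norm.

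\textbf{Step 1 (entry-wise concentration).} Fix $\ell\in[L]$ and indices $i\in[k]$, $j\in[p]$. The $(i,j)$ entry of $U^\trans G_\ell$ is
\[
Z_{ij} \;=\; \sum_{m=1}^d U_{mi}\,(G_\ell)_{mj},
\]
which is a weighted sum of independent $\mathrm{Lap}(0,\lambda)$ variables with weights $c_m = U_{mi}$ satisfying $\sum_m c_m^2 = 1$ and $|c_m| \le 1$. Since the Laplacian has MGF $\E e^{sX} = (1-s^2\lambda^2)^{-1}$ for $|s|\lambda < 1$, standard calculations show that $Z_{ij}$ is sub-exponential with parameters $(\nu,b)$ where $\nu^2 = O(\lambda^2)$ and $b = O(\lambda)$. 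A Bernstein-type tail bound then gives
\[
\Pr\Set{|Z_{ij}| > t} \le 2\exp\!\left(-c\cdot\min\!\left(\tfrac{t^2}{\lambda^2},\,\tfrac{t}{\lambda}\right)\right)
\]
for an absolute constant $c>0$.

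\textbf{Step 2 (union bound over entries and iterations).} Choosing $t = C\lambda\log(Lpk)$ for a sufficiently large constant $C$, the term inside the min is $C\log(Lpk)$, so each entry bound fails with probability at most $2(Lpk)^{-\Omega(C)}$. Taking a union bound over the $L\cdot k\cdot p$ triples $(\ell,i,j)$ yields that, with probability at least $1-10^{-4}$,
\[
\max_{\ell\in[L]}\;\max_{i,j}\;|(U^\trans G_\ell)_{ij}| \;\le\; O\!\left(\lambda\log(Lpk)\right).
\]

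\textbf{Step 3 (from entry-wise to spectral norm).} For any $k\times p$ matrix $M$,
\[
\|M\|_2 \;\le\; \|M\|_F \;\le\; \sqrt{kp}\cdot\max_{i,j}|M_{ij}|.
\]
Applying this to $M = U^\trans G_\ell$ and combining with Step 2 gives
\[
\max_{\ell\in[L]}\|U^\trans G_\ell\|_2 \;\le\; \sqrt{kp}\cdot O\!\left(\lambda\log(Lpk)\right) \;=\; O\!\left(\lambda\sqrt{pk}\log(Lpk)\right),
\]
which is exactly the claim.

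\textbf{Main obstacle.} The one nontrivial technical point is the sub-exponential concentration bound in Step 1: unlike the Gaussian case, Laplacian sums have only a restricted-range MGF, and the orthonormality of $U$'s columns only controls $\sum c_m^2$, not $\max|c_m|$, so the bound inherits a Bernstein (rather than purely Gaussian) form. This is also the reason the final estimate carries a $\log(Lpk)$ factor rather than a $\sqrt{\log L}$ factor as in the Gaussian analogue \lemmaref{gaussian-projection}, and indeed the authors' remark after the corollary suggests this bound is not tight by roughly a $\sqrt{k}$ factor.
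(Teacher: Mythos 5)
Your proposal is correct and follows essentially the same route as the paper's own proof: bound each entry of $U^\trans G_\ell$ by $O(\lambda\log(Lpk))$ via concentration for weighted sums of Laplacians, union bound over all $Lkp$ entries, and then pass through the Frobenius norm, $\|U^\trans G_\ell\|\le\|U^\trans G_\ell\|_F\le\sqrt{pk}\max_{i,j}|(U^\trans G_\ell)_{ij}|$. The only difference is that you spell out the sub-exponential/Bernstein tail bound that the paper leaves implicit, which is a welcome addition rather than a deviation.
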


\begin{proof}
We claim that with probability $1-10^{-4}$ for every $\ell\in[L],$ every entry
of $U^\trans G_\ell$ 
is bounded by $O(\lambda\log(Lpk))$ in absolute value. This follows because each entry has
variance $\lambda^2$ and is a
weighted sum of $n$ independent Laplacian random variables
$\mathrm{Lap}(0,\lambda).$ Assuming this event occurs, we have
\[
\max_{\ell\in[L]}\|U^\trans G_\ell\| \le 
\max_{\ell\in[L]}\|U^\trans G_\ell\|_F \le 
O\left(\lambda \sqrt{pk}\log(Lpk)\right)\mper\qedhere
\]
\end{proof}

\begin{lemma}[Matrix Chernoff]\label{lem:matrixb2}
  Let $X_1, \dotsc, X_n \sim \cX$ be i.i.d. random matrices of
  maximum dimension $d$ and mean $\mu$, uniformly bounded by $\norm{X} \leq
  R$.  Then for all $t \leq 1$,
  \[\textstyle
  \Pr\Set{\Norm{\frac{1}{n}\sum_i X_i - \E X_1} \geq tR} \leq d e^{-\Omega(mt^2)}
  \]
\end{lemma}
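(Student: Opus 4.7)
The plan is to invoke the standard matrix Laplace transform method, following the Ahlswede--Winter / Tropp template for matrix Chernoff bounds. Since the conclusion is stated for general (possibly rectangular) matrices, the first step is to reduce to the symmetric case via the Hermitian dilation: replace each $X_i$ by the symmetric block matrix with zero diagonal blocks and $X_i,X_i^\trans$ off-diagonal. The dilation has dimension at most $2d$, preserves the spectral norm bound $\norm{\tilde X_i}\le R$, and turns any spectral norm $\norm{Y}$ into the largest eigenvalue $\lambda_{\max}(\tilde Y)$, so it suffices to prove the bound for symmetric, bounded, i.i.d. matrices of dimension at most $2d$.

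Next, center by setting $Y_i = \tilde X_i - \E\tilde X_i$, which are mean zero and satisfy $\norm{Y_i}\le 2R$. For any $\theta>0$, the Markov inequality together with $e^{\theta \lambda_{\max}(M)} \le \tr e^{\theta M}$ for symmetric $M$ gives
\[
\Pr\Set{\lambda_{\max}\Paren{\tfrac{1}{n}\sum_i Y_i}\ge tR}
\;\le\; e^{-\theta tR n}\cdot \E\tr\exp\Paren{\theta\sum_i Y_i}.
\]
The key step is to bound the matrix MGF. Lieb's concavity theorem (or, equivalently, the Ahlswede--Winter subadditivity argument) reduces the expectation of the trace exponential of the sum to the product of the individual MGFs. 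For bounded, mean-zero symmetric $Y_i$ with $\norm{Y_i}\le 2R$, the standard scalar estimate $e^x\le 1+x+x^2$ for $|x|\le 1$ applied spectrally yields $\E e^{\theta Y_i}\preceq \exp(c\theta^2 R^2 I)$ for some absolute constant $c$, provided $\theta\le 1/(2R)$. Chaining these bounds and taking traces gives
\[
\E\tr\exp\Paren{\theta\sum_i Y_i}\;\le\; 2d\cdot\exp\paren{c\, n\,\theta^2 R^2}.
\]

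Finally, optimize by setting $\theta = t/(2cR)$, which is a legal choice because the hypothesis $t\le 1$ keeps $\theta\le 1/(2R)$. Substituting yields the claimed tail $d\cdot e^{-\Omega(nt^2)}$, after absorbing the factor of $2$ from the dilation into the constant hidden by $\Omega(\cdot)$. Applying the same argument to $-Y_i$ and taking a union bound converts the eigenvalue bound back to the spectral norm bound, completing the proof.

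There is no substantive obstacle here: the result is a textbook matrix Chernoff statement and every ingredient (dilation, Laplace transform, Lieb/Ahlswede--Winter master inequality, scalar MGF estimate for bounded variables) is standard. The only points that require any care are (i) handling rectangular matrices cleanly so that the dimension factor $d$ (not $2d$) appears, which is cosmetic, and (ii) verifying the constants so that the regime $t\le 1$ is exactly what is needed to make the MGF estimate valid; the exponent written as $-\Omega(mt^2)$ in the statement is a typo for $-\Omega(nt^2)$.
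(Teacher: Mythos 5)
Your proof is correct: it is the standard matrix Chernoff/Bernstein argument (Hermitian dilation, matrix Laplace transform, Lieb/Ahlswede--Winter tensorization, and the spectral $e^x\le 1+x+x^2$ bound for $\theta\le 1/(2R)$), and the constants work out so that $t\le 1$ suffices. The paper itself gives no proof of this lemma---it is stated in the appendix as a deferred, standard concentration inequality---so your write-up simply supplies the textbook argument the paper implicitly cites; your reading of the exponent $\Omega(mt^2)$ as a typo for $\Omega(nt^2)$ is also right, since $m$ appears nowhere else in the statement.
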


\section{Reduction to symmetric matrices} 
For all our purposes it suffices to consider symmetric $n\times n$ matrices.
Given a non-symmetric $m\times n$ matrix $B$ we may always consider the
$(m+n)\times(m+n)$ matrix $A = [\,
0\, B\,
|\, B^\trans\, 0\,].$ This transformation preserves all the parameters that we are
interested in as was argued in~\cite{HardtR13} more formally. This allows us to discuss
symmetric eigendecompositions rather than singular vector decompositions and
therefore simplify our presentation below.

\end{document}